\providecommand{\U}[1]{\protect\rule{.1in}{.1in}}
\newtheorem{theorem}{Theorem}
\newtheorem{corollary}{Corollary}
\newtheorem{definition}{Definition}
\newtheorem{lemma}{Lemma}
\newtheorem{proposition}{Proposition}
\newenvironment{proof}[1][Proof]{\noindent\textbf{#1.} }{\ \rule{0.5em}{0.5em}}
\def\Tr{\operatorname{Tr}}
\def\SEP{\operatorname{SEP}}
\def\Ent{\operatorname{Ent}}
\def\PPT{\operatorname{PPT}}
\def\supp{\operatorname{supp}}
\def\T{\operatorname{T}}
\def\c{\kappa}
\def\>{\rangle}
\def\<{\langle}
\def\({\left(}
\def\){\right)}
\def\[{\left[}
\def\]{\right]}
\def\V{\Vert}
\def\id{\operatorname{id}}
\let\emptyset\varnothing
\newcommand{\ket}[1]{\left|{#1}\right\rangle}
\newcommand{\bra}[1]{\left\langle{#1}\right|}
\newcommand{\norm}[1]{\left\Vert{#1}\right\Vert}
\newcommand{\mc}[1]{\mathcal{#1}}
\newcommand{\wt}[1]{\widetilde{#1}}
\numberwithin{equation}{section}
\begin{document}
\title{Resource theory of entanglement for bipartite quantum channels}
\author{Stefan B\"auml\thanks{ICFO-Institut de Ciencies Fotoniques, The Barcelona Institute of Science and Technology, Av. Carl Friedrich Gauss 3, 08860 Castelldefels (Barcelona), Spain} \thanks{QuTech, Delft University of Technology, Lorentzweg 1, 2628 CJ Delft,
Netherlands} \and
Siddhartha Das\thanks{Centre for Quantum Information \& Communication (QuIC), \'{E}cole polytechnique de Bruxelles,   Universit\'{e} libre de Bruxelles, Brussels, B-1050, Belgium} \and Xin Wang\thanks{Joint Center for Quantum Information and Computer Science, University of
Maryland, College Park, Maryland 20742, USA.} \thanks{Baidu Inc., Beijing 100193, China} \and Mark M.~Wilde\thanks{Hearne Institute for Theoretical Physics, Department of Physics and Astronomy, Center for Computation and Technology, Louisiana State University, Baton Rouge, Louisiana 70803, USA}}

\date{\today}
\maketitle

\begin{abstract}
The traditional perspective in quantum resource theories concerns
how to use free operations to convert one resourceful quantum state to another
one. For example, a fundamental and well known question in entanglement theory
is to determine the distillable entanglement of a bipartite state, which is
equal to the maximum rate at which fresh Bell states can be distilled from
many copies of a given bipartite state by employing local operations and
classical communication for free. It is the aim of this paper to
take this kind of question to the next level, with the main question
being:\ What is the best way of using free channels to convert one
resourceful quantum channel to another? Here we focus on the the resource theory of entanglement for bipartite channels and establish several fundamental tasks and results regarding it. In particular, we establish bounds on several pertinent information processing tasks in channel entanglement theory, and we define several entanglement measures for bipartite channels, including the logarithmic negativity and the $\kappa$-entanglement. We also show that the max-Rains information of [B\"auml \textit{et al}., Physical Review Letters, 121,  250504 (2018)] has a divergence interpretation, which is helpful for simplifying the results of this earlier work.
\end{abstract}

\section{Introduction}

Ever since the development of the resource theory of entanglement
\cite{BDSW96,HHHH09},  the investigation of quantum
resource theories has blossomed \cite{HO13,fritz_2015,KR16,RKR17,CG18}. This is due
to such a framework being a powerful conceptual approach for understanding
physical processes, while also providing the ability to apply tools developed
in one domain to another. Any given resource theory is specified by a set
of free quantum states, as well as a set of restricted free operations, which
output a free state when the input is a free state \cite{HO13,CG18}.

%

In the well known example
of the resource theory of entanglement \cite{BDSW96,HHHH09}, the free states
are the separable, unentangled states and the free operations consist of local
operations and classical communication (LOCC). One early insight in quantum
information theory was to modify the resource theory of entanglement to become
the resource theory of non-positive partial transpose states
\cite{Rai99,Rai01}, by enlarging the set of free states to consist of the
positive partial transpose (PPT)\ states and the class of free operations to
consist of those that preserve the PPT\ states. Consequently, it is then
possible to use this modified resource theory to deepen our understanding of
the resource theory of entanglement. Inspired by this approach, the resource theory of $k$-unextendibility was recently developed, and this consistent framework ended up giving tighter bounds on non-asymptotic rates of quantum communication \cite{KDWW18}.

The traditional approach to research on quantum resource theories is to
address the following fundamental question:\ In a given resource theory, what
is the best way to use a free quantum channel to convert one quantum state to
another? For concreteness, consider the well known resource theory of
entanglement. There, one asks about using an LOCC channel to convert from one
bipartite quantum state $\rho_{AB}$ to another bipartite state $\sigma_{AB}$.
First, is the transition possible?\ Next, what is the best asymptotic rate
$R$\ at which it is possible to start from $nR$ independent copies of
$\rho_{AB}$ and convert them approximately or exactly by LOCC\ to $n$ independent copies of
$\sigma_{AB}$? Is the resource theory reversible, in the sense that one could
start from $nR$ copies of $\rho_{AB}$, convert by LOCC to $n$ copies of
$\sigma_{AB}$, and then convert back to $nR$ copies of $\rho_{AB}$? These
kinds of questions have been effectively addressed in a number of different
works on quantum information theory
\cite{BDSW96,BBPS96,N99,Rai99,Rai01,HHT01,BP08,KH13,WD16pra,WD17}, and the
earlier works can in fact be considered the starting point for the modern
approach to quantum resource theories. 

However, upon seeing the above questions, one might have a basic question that
is not addressed by the above framework:\ \textit{How is the initial bipartite
state }$\rho_{AB}$\textit{ created in the first place?} That is, how is it
that two parties, Alice and Bob, are able to share such a state between their
distant laboratories? It is of course necessary that they employ a
communication medium, such as a fiber-optic cable or a free space link modeled
as a quantum channel, in order to do so. A model for the communication medium
is given by a bipartite quantum channel \cite{BHLS03,CLL06}, which is a
four-terminal device consisting of two inputs and two outputs, with one input
and one output for Alice and one input and one output for Bob. The basic
question above motivates developing the resource theory of entanglement for
bipartite quantum channels, and the main thrust of this paper is to do so.
The paper \cite{BHLS03} initiated this direction, but there are a large number
of questions that have remained unaddressed, and now we have a number of  tools
and conceptual approachs  to address these fundamental questions
\cite{BBCW13,BW17,DBW17,BDW18,Das2018thesis,W18cost,WW18}.

Thus, the motivation for this new direction is that quantum processes (channels) are more fundamental than quantum states, in the sense that quantum states can only arise from quantum processes, and so we should shift the focus to quantifying the resourcefulness of a quantum channel. In fact, every basic constituent of quantum mechanics, including states, unitaries, measurements, and discarding of quantum systems are particular kinds of quantum channels. In this way, a general goal  is to develop complete resource theories of quantum channels \cite{LY19,LW19}, and the outcome will be a more complete understanding of entanglement, purity, magic, coherence, distinguishability, etc.~\cite{BHLS03,DDMW17,DBW17,GFWRSCW18,BDW18,Das2018thesis,TEZP19,WW18,Seddon2019,WWS19channels,LY19,LW19,WW19states}.

Specifically, in the context of the resource theory of entanglement for bipartite quantum channels, the main question that we are interested in addressing is
this:\ Given $n$ independent uses of a bipartite quantum channel
$\mathcal{N}_{A^{\prime}B^{\prime}\rightarrow AB}$ with input quantum systems
$A^{\prime}$ and $B^{\prime}$ and output systems $A$ and $B$, as well as free
LOCC, what is the best asymptotic rate $R$\ that one can achieve for a
faithful simulation of $nR$ independent uses of another bipartite quantum
channel $\mathcal{M}_{\hat{A}^{\prime}\hat{B}^{\prime}\rightarrow\hat{A}%
\hat{B}}$ with input systems $\hat{A}^{\prime}$ and $\hat{B}^{\prime}$ and
output systems $\hat{A}$ and $\hat{B}$, in the limit of large $n$?
Furthermore, we are interested in the most general notion of channel
simulation introduced recently in \cite{W18cost}, in which the simulated
channel uses can be called in a sequential manner, by the most general
verifier who can act sequentially. Note that prior work on channel simulation
\cite{BDHSW09,BCR09,BBCW13} only considered a particular notion of channel
simulation, as well as a particular kind of channel to be simulated, in which the goal is
to simulate $nR$ independent parallel uses of a point-to-point channel
$\mathcal{P}_{A\rightarrow B}$. Also, the traditional resource theory of
entanglement for states emerges as a special case of this more general resource
theory, for the case in which the bipartite channel simply traces out the
inputs of Alice and Bob and replaces them with some bipartite state $\rho
_{AB}$.

There are certainly interesting special cases of the aforementioned general
question, which already would take us beyond what is currently known:\ How
much entanglement can be distilled from $n$ independent uses of a bipartite
channel $\mathcal{N}_{A^{\prime}B^{\prime}\rightarrow AB}$ assisted by free
LOCC? How much entanglement is required to simulate $nR$ independent uses of a
bipartite channel $\mathcal{M}_{\hat{A}^{\prime}\hat{B}^{\prime}%
\rightarrow\hat{A}\hat{B}}$, such that the most stringest verifier, who
performs a sequential test, cannot distinguish the actual channel uses from
the simulation? What if the distillation or simulation is required to be
approximate or exact? How do the rates change? How does the theory change if
we allow completely PPT-preserving channels for free, as Rains \cite{Rai99,Rai01} did? What if we allow the $k$-extendible channels of \cite{KDWW18} for free instead?

\begin{figure}[ptb]
\begin{center}
\includegraphics[
width=6.6399in
]
{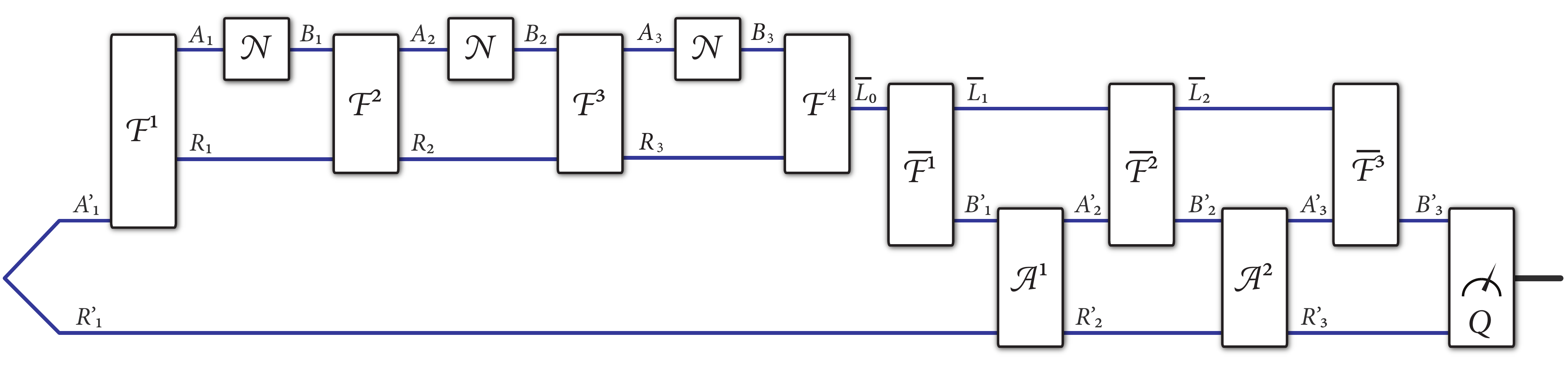}
\end{center}
\caption{The figure displays a protocol that consumes three uses of a quantum channel $\mathcal{N}_{A \to B}$ to simulate three uses of another quantum channel
$\mathcal{M}_{A' \to B'}$. Channels labeled as $\mathcal{F}$ are free in some given resource theory and can thus be consumed at no cost.  The simulation should be such that any discriminator employing an initial state on systems $R_1'A_1'$, along with adaptive channels $\mathcal{A}^1$ and $\mathcal{A}^2$ and a final measurement $Q$ on systems $R_3' B_3'$, cannot distinguish the simulation from
three uses of $\mathcal{M}_{A' \to B'}$.}
\label{fig:resource-theory-prot}%
\end{figure}

More generally, one can address these questions in general quantum resource theories. This
 constitutes a fundamental rethinking and generalization of all
of the recent work on quantum resource theories. The basic question phrased above then
becomes as follows: In a given resource theory, if $n$ independent uses of a
resourceful quantum channel $\mathcal{N}$ are available, along with the
assistance of free operations, what is the maximum possible rate $R$\ at which
one can simulate $nR$ independent uses of another resourceful channel
$\mathcal{M}$? Figure~\ref{fig:resource-theory-prot} depicts a general protocol that can accomplish
this task in any resource theory.

For the rest of the paper, we begin by giving some background in the next section. We then frame the aforementioned fundamental questions in more detail and offer solutions in some cases. The next part of the paper then proposes some entanglement measures for bipartite channels, including the logarithmic negativity, the $\kappa$-entanglement, and the generalized Rains information. We establish several fundamental properties of these measures. 

\bigskip

\textit{Note on related work:} Recently and independently of us, the resource theory of entanglement for bipartite channels was considered in \cite{GMS19}. The paper \cite{GMS19} also defined and considered some fundamental tasks in the theory, in addition to defining entanglement measures for bipartite channels, such as  logarithmic negativity and $\kappa$-entanglement.

\section{Background: States, channels, isometries, separable states, and positive partial transpose}

We begin by establishing some notation and reviewing some definitions needed in the rest of the paper. 
Let $\mc{B}(\mc{H})$ denote the 
algebra of bounded linear operators acting on a Hilbert space $\mc{H}$. Throughout this paper, we restrict our development to finite-dimensional Hilbert spaces. The
subset of $\mc{B}(\mc{H})$ 
containing all positive semi-definite operators is denoted by $\mc{B}_+(\mc{H})$. We denote the identity operator as $I$ and the identity superoperator as $\id$. The Hilbert space 
of a quantum system $A$ is denoted by $\mc{H}_A$.
The state of a quantum system $A$ is represented by a density operator $\rho_A$, which is a positive semi-definite operator with unit trace.
Let $\mc{D}(\mc{H}_A)$ denote the set of density operators, i.e., all elements $\rho_A\in \mc{B}_+(\mc{H}_A)$ such that $\Tr\{\rho_A\}=1$. The Hilbert space for a composite system $LA$ is denoted as $\mc{H}_{LA}$ where $\mc{H}_{LA}=\mc{H}_L\otimes\mc{H}_A$. The density operator of a composite system $LA$ is defined as $\rho_{LA}\in \mc{D}(\mc{H}_{LA})$, and the partial trace over $A$ gives the reduced density operator for system $L$, i.e., $\Tr_A\{\rho_{LA}\}=\rho_L$ such that $\rho_L\in \mc{D}(\mc{H}_L)$. The notation $A^n:= A_1A_2\cdots A_n$ indicates a composite system consisting of $n$ subsystems, each of which is isomorphic to the Hilbert space $\mc{H}_A$. A pure state $\psi_A$ of a system $A$ is a rank-one density operator, and we write it as $\psi_A=|\psi\>\<\psi|_A$
for $|\psi\>_A$ a unit vector in $ \mc{H}_A$. A purification of a density operator $\rho_A$ is a pure state $\psi^\rho_{EA}$
such that $\Tr_E\{\psi^\rho_{EA}\}=\rho_A$, where $E$ is called the purifying system.
The maximally mixed state is denoted by
$\pi_A := I_A / \dim(\mathcal{H}_A) \in\mc{D}\(\mc{H}_A\)$. The fidelity of $\tau,\sigma\in\mc{B}_+(\mc{H})$ is defined as $F(\tau,\sigma)=\norm{\sqrt{\tau}\sqrt{\sigma}}_1^2$ \cite{U76}, with the trace norm $\norm{X}_1=\Tr\sqrt{X^\dagger X}$ for $X\in\mc{B}(\mc{H})$.

The adjoint $\mc{M}^\dagger:\mc{B}(\mc{H}_B)\to\mc{B}(\mc{H}_A)$ of a linear map $\mc{M}:\mc{B}(\mc{H}_A)\to\mc{B}(\mc{H}_B)$ is the unique linear map such that
	\begin{equation}
	\label{eq-adjoint}
 \<Y_B,\mc{M}(X_A)\> =\<\mc{M}^\dag(Y_B),X_A\>,
	\end{equation}
	for all $X_A\in\mc{B}(\mc{H}_A)$ and $Y_B\in\mc{B}(\mc{H}_B)$,
	where $\<C,D\>=\Tr\{C^\dag D\}$ is the Hilbert-Schmidt inner product. An isometry $U:\mc{H}\to\mc{H}'$ is a
linear map such that $U^{\dag}U=I_{\mathcal{H}}$. 

The evolution of a quantum state is described by a quantum channel. A quantum channel $\mc{M}_{A\to B}$ is a completely positive, trace-preserving (CPTP) map $\mc{M}:\mc{B}_+(\mc{H}_A)\to \mc{B}_+(\mc{H}_B)$.

Let $U^\mc{M}_{A\to BE}$ denote an isometric extension of a quantum channel $\mc{M}_{A\to B}$, which by definition means that for all $\rho_A\in \mc{D}\(\mc{H}_A\)$,
\begin{equation}
\Tr_E\left\{U^\mc{M}_{A\to BE}\rho_A\left(U^\mc{M}_{A\to BE}\right)^\dagger\right\}=\mathcal{M}_{A\to B}(\rho_A) ,
\end{equation}
along with the following conditions
for $U^\mc{M}$ to be 
an isometry:
\begin{equation}
(U^\mc{M})^\dagger U^\mc{M}=I_{A}.
\end{equation}
Hence $U^\mc{M}(U^\mc{M})^\dagger=\Pi_{BE}$, where $\Pi_{BE}$ is a projection onto a subspace of the Hilbert space $\mc{H}_{BE}$. A complementary channel $\widehat{\mc{M}}_{A\to E}$ of $\mc{M}_{A\to B}$ is defined as
\begin{equation}
\widehat{\mc{M}}_{A\to E}(\rho_A):=\Tr_{B}\left\{U^\mc{M}_{A\to BE}\rho_A(U^\mc{M}_{A\to BE})^\dag\right\},
\end{equation}
for all $\rho_A\in \mc{D}\(\mc{H}_A\)$.

The Choi isomorphism represents a well known duality between channels and states. Let $\mc{M}_{A\to B}$ be a quantum channel, and let $\left|\Upsilon\right>_{L:A}$ denote the following maximally entangled vector:
\begin{equation}
|\Upsilon\>_{L:A}\coloneqq \sum_{i}|i\>_L|i\>_A ,
\end{equation}
where $\dim(\mc{H}_L)=\dim(\mc{H}_A)$, and $\{|i\>_L\}_i$ and $\{|i\>_A\}_i$ are fixed orthonormal bases. We extend this notation to multiple parties with a given bipartite cut as
\begin{equation}
|\Upsilon\>_{L_AL_B:AB}\coloneqq |\Upsilon\>_{L_A:A}\otimes |\Upsilon\>_{L_B:B}.
\end{equation}
The maximally entangled state $\Phi_{LA}$ is denoted as
\begin{equation}
\Phi_{LA}=\frac{1}{|A|}\ket{\Upsilon}\!\bra{\Upsilon}_{LA},
\end{equation}
where $|A|=\dim(\mc{H}_A)$.
 The Choi operator for a channel $\mc{M}_{A\to B}$ is defined as
\begin{equation}
J^\mc{M}_{LB}=(\id_L\otimes\mc{M}_{A\to B})\(|\Upsilon\>\<\Upsilon|_{LA}\),
\end{equation}
where $\id_L$ denotes the identity map on $L$. For $A'\simeq A$, the following identity holds
\begin{equation}\label{eq:choi-sim}
\<\Upsilon|_{A':L}(\rho_{SA'}\otimes J^\mc{M}_{LB})|\Upsilon\>_{A':L}=\mc{M}_{A\to B}(\rho_{SA}),
\end{equation}
where $A'\simeq A$. The above identity can be understood in terms of a post-selected variant \cite{HM04} of the quantum teleportation protocol \cite{BBC+93}. Another identity that holds is
\begin{equation}\label{eq:12}
\<\Upsilon|_{L:A} [Q_{SL}\otimes I_A]
|\Upsilon\>_{L:A}=\Tr_L\{Q_{SL}\},
\end{equation}
for an operator $Q_{SL}\in \mc{B}(\mc{H}_S\otimes\mc{H}_L)$. 

For a fixed basis $\{|i\>_B\}_i$, the partial transpose $\T_B$ on system $B$ is the following map:
\begin{equation}
\(\id_A\otimes \T_B\)(Q_{AB}) =\sum_{i,j}\(I_A\otimes |i\>\<j|_B\) Q_{AB}\( I_A\otimes |i\>\<j|_B\),\, \label{eq:PT-1}
\end{equation}
where $Q_{AB}\in\mc{B}(\mc{H}_{A}\otimes\mc{H}_{B})$. 
Further, it holds that 
\begin{equation}\label{eq:Ttrick}
\(Q_{SL}\otimes I_A\)|\Upsilon\>_{L:A}=\(T_A\(Q_{SA}\)\otimes I_L\)|\Upsilon\>_{L:A}.
\end{equation}

We note that the partial transpose is self-adjoint, i.e., $\T_B=\T^\dag_B$ and is also involutory:
\begin{equation}
\T_B\circ\T_B=I_B.
\end{equation} 
The following identity also holds:
\begin{equation}
\T_{L}(\ket{\Upsilon}\!\bra{\Upsilon}_{LA})=\T_{A}(\ket{\Upsilon}\!\bra{\Upsilon}_{LA})
\label{eq:PT-last}
\end{equation} 

Let $\SEP(A\!:\!B)$ denote the set of all separable states $\sigma_{AB}\in\mc{D}(\mc{H}_A\otimes\mc{H}_B)$, which are states that can be written as
\begin{equation}
\sigma_{AB}=\sum_{x}p(x)\omega^x_A\otimes\tau^x_B,
\end{equation}
where $p(x)$ is a probability distribution, $\omega^x_A \in \mc{D}(\mc{H}_A)$, and $\tau^x_B\in\mc{D}(\mc{H}_B)$ for all $x$. This set
is closed under the action of the partial transpose maps $\T_A$ and $\T_B$ \cite{HHH96,Per96}. Generalizing the set of separable states, we can define the set $\PPT (A\!:\!B)$ of all bipartite states $\rho_{AB}$ that remain positive after the action of the partial transpose $\T_B$. A state $\rho_{AB}\in\PPT(A\!:\!B)$ is also called a PPT (positive under partial transpose) state. We can define an even more general set of positive semi-definite operators \cite{AdMVW02} as follows:
\begin{equation}
\PPT'(A\!:\!B)\coloneqq \{\sigma_{AB}:\ \sigma_{AB}\geq 0\land \norm{\T_B(\sigma_{AB})}_1\leq 1\}. 
\end{equation} 
We then have the containments $\SEP\subset \PPT\subset \PPT' $. A bipartite quantum channel $\mc{P}_{A'B'\to AB}$ is a completely PPT-preserving channel if the map $\T_{B}\circ\mc{P}_{A'B'\to AB}\circ\T_{B'}$ is a quantum channel \cite{Rai99,Rai01,CVGG17}. A bipartite quantum channel $\mc{P}_{A'B'\to AB}$ is completely PPT-preserving if and only if its Choi state is a PPT state \cite{Rai01}, i.e., $\frac{J^{\mc{P}}_{L_AL_B:AB}}{ |L_A L_B|}\in \PPT(L_A A\!:\!BL_B)$, where
\begin{equation}
\frac{J^{\mc{P}}_{L_AL_B:AB}}{ |L_A L_B|} =  \mc{P}_{A'B'\to AB}(\Phi_{L_AA'}\otimes\Phi_{B'L_B}).
\end{equation}
Any local operations and classical communication (LOCC) channel is a completely PPT-preserving channel \cite{Rai99,Rai01}. 

\subsection{Channels with symmetry}

\label{sec:symmetry}
Consider a finite group $G$. For every $g\in G$, let $g\to U_A(g)$ and $g\to V_B(g)$ be projective unitary representations of $g$ acting on the input space $\mc{H}_A$ and the output space $\mc{H}_B$ of a quantum channel $\mc{M}_{A\to B}$, respectively. A quantum channel $\mc{M}_{A\to B}$ is covariant with respect to these representations if the following relation is satisfied \cite{Hol02,H13book}:
\begin{equation}\label{eq:cov-condition}
\mc{M}_{A\to B}\!\(U_A(g)\rho_A U_A^\dagger(g)\)=V_B(g)\mc{M}_{A\to B}(\rho_A)V_B^\dagger(g),
\end{equation}
for all $ \rho_A\in\mc{D}(\mc{H}_A)$ and $ g\in G$.
\begin{definition}[Covariant channel \cite{H13book}]\label{def:covariant}
A quantum channel is covariant if it is covariant with respect to a group $G$ which has a representation $U(g)$, for all $g\in G$, on $\mc{H}_A$ that is a unitary one-design; i.e., the map  $\frac{1}{|G|}\sum_{g\in G}U(g)(\cdot)U^\dagger(g)$ always outputs the maximally mixed state for all input states. 
\end{definition}

\begin{definition}[Teleportation-simulable \cite{BDSW96,HHH99}]\label{def:tel-sim}
A channel $\mc{M}_{A\to B}$ is teleportation-simulable with associated resource state $\omega_{L_AB}$ if for all $\rho_{A}\in\mc{D}\(\mc{H}_{A}\)$ there exists a resource state $\omega_{L_AB}\in\mc{D}\(\mc{H}_{L_AB}\)$ such that 
\begin{equation}
\mc{M}_{A\to B}\(\rho_A\)=\mc{L}_{L_AA B\to B}\(\rho_{A}\otimes\omega_{L_AB}\),
\label{eq:TP-simul}
\end{equation}
where $\mc{L}_{L_AAB\to B}$ is an LOCC channel
(a particular example of an LOCC channel is  a generalized teleportation protocol \cite{Wer01}).
\end{definition}

One can find the defining equation \eqref{eq:TP-simul} explicitly stated as \cite[Eq.~(11)]{HHH99}.
 All covariant channels, as given in  Definition~\ref{def:covariant}, are teleportation-simulable with respect to the resource state $\mathcal{M}_{A\to B}(\Phi_{L_AA})$~\cite{CDP09b}.

\begin{definition}[PPT-simulable \cite{KW17}]
A channel $\mc{M}_{A\to B}$ is PPT-simulable with associated resource state $\omega_{L_AB}$ if for all $\rho_{A}\in\mc{D}\(\mc{H}_{A}\)$ there exists a resource state $\omega_{L_AB}\in\mc{D}\(\mc{H}_{L_AB}\)$ such that 
\begin{equation}
\mc{M}_{A\to B}\(\rho_A\)=\mc{P}_{L_AA B\to B}\(\rho_{A}\otimes\omega_{L_AB}\),
\end{equation}
where $\mc{P}_{L_AAB\to B}$ is a completely PPT-preserving channel acting on $L_AA:B$, where the transposition map is with respect to the system $B$. 
\end{definition}

We note here that all of the above concepts can be generalized to bipartite channels and are helpful in the resource theory of entanglement for bipartite channels.

\section{Resource theory of entanglement for bipartite quantum channels}

To begin with, let us consider the basic ideas for the resource theory of entanglement for bipartite channels. Our specific goals are to characterize the approximate and exact entanglement costs of bipartite channels, as well as the approximate and exact distillable entanglement of bipartite channels. We can also take the free operations to be LOCC, separable, completely PPT-preserving, or $k$-extendible. These more basic problems are the basis for the more general question, as raised above, of simulating one bipartite quantum channel using another.
Let us also emphasize here that the basic questions posed can be considered in any resource theory, such as magic, purity, thermodynamics, coherence, etc. 

\subsection{Approximate and sequential entanglement cost of bipartite
quantum channels}

The first problem to consider is the entanglement cost of a
bipartite channel, and we focus first on approximate simulation in the
Shannon-theoretic sense. In \cite{W18cost}, a general definition of entanglement
cost of a single-sender, single-receiver channel was proposed, and here we extend this notion further to bipartite
channels. To this end, let $\mathcal{N}_{A^{\prime}B^{\prime}\rightarrow AB}$
denote a bipartite channel (completely positive, trace-preserving map)\ with
input systems $A^{\prime}$ and $B^{\prime}$ and output systems $A$ and $B$.
The goal is to determine the rate at which maximally entangled states are
needed to simulate $n$ uses of the bipartite channel $\mathcal{N}_{A^{\prime}B^{\prime}\rightarrow
AB}$, such that these $n$ uses could be called sequentially and thus employed
in any desired context. As discussed for the case of point-to-point channels in \cite{W18cost}, such a sequential simulation is more general and more
difficult to analyze than the prior notions of parallel channel simulation put
forward in~\cite{BBCW13}.

In more detail, let us describe what we mean by the (sequential) entanglement cost of a
bipartite channel. Fix $n,M\in\mathbb{N}$, $\varepsilon\in\left[  0,1\right]
$, and a bipartite quantum channel $\mathcal{N}_{A^{\prime}B^{\prime
}\rightarrow AB}$. We define an $(n,M,\varepsilon)$
(sequential)\ LOCC-assisted channel simulation code to consist of a maximally
entangled resource state $\Phi_{\overline{A}_{0}\overline{B}_{0}}$\ of Schmidt
rank $M$ and a set%
\begin{equation}
\{\mathcal{L}_{A_{i}^{\prime}B_{i}^{\prime}\overline{A}_{i-1}\overline
{B}_{i-1}\rightarrow A_{i}B_{i}\overline{A}_{i}\overline{B}_{i}}^{(i)}%
\}_{i=1}^{n} \label{eq:sim-prot}%
\end{equation}
of LOCC channels. Note that the systems $\overline{A}_{n}\overline{B}_{n}$ of
the final LOCC\ channel $\mathcal{L}_{A_{n}^{\prime}B_{n}^{\prime}\overline
{A}_{n-1}\overline{B}_{n-1}\rightarrow A_{n}B_{n}\overline{A}_{n}\overline
{B}_{n}}^{(n)}$ can be taken trivial without loss of generality. Alice has
access to all systems labeled by $A$, Bob has access to all systems labeled by
$B$, and they are in distant laboratories. The structure of this simulation
protocol is intended to be compatible with a discrimination strategy that can
test the actual $n$ channels versus the above simulation in a sequential way,
along the lines discussed in \cite{CDP08a,CDP09}\ and \cite{GW07,G12}. This
encompasses the parallel discrimination test, along the lines considered in
\cite{BBCW13}, as a special case.

\begin{figure}
\begin{center}
\includegraphics[
width=4.0399in
]
{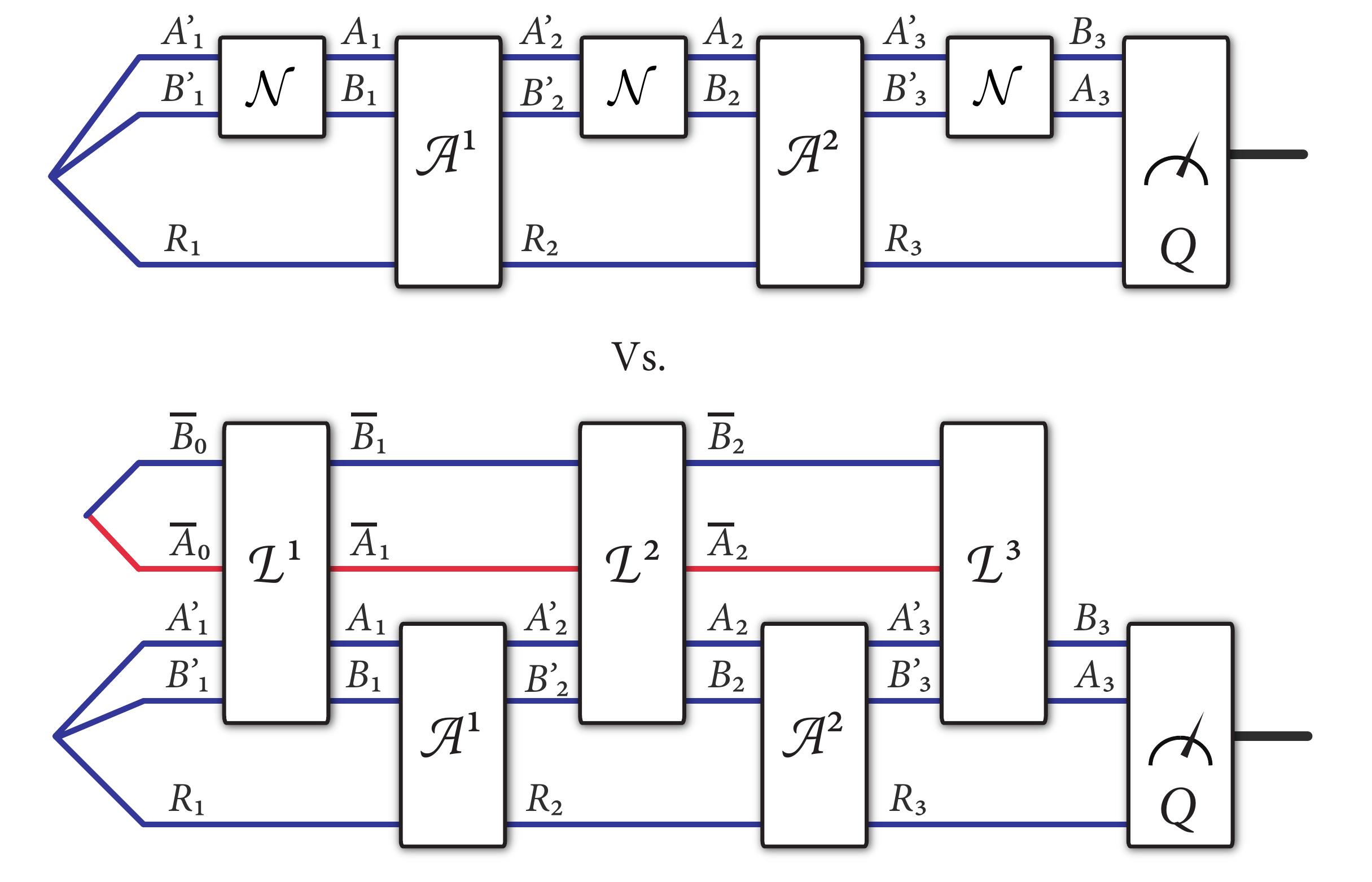}
\end{center}
\caption{The top part of the figure displays a three-round interaction between
the discriminator and the simulator in the case that the actual bipartite channel
$\mathcal{N}_{A'B'\rightarrow AB}$ is called three times. The bottom part of the
figure displays the interaction between the discriminator and the simulator in
the case that the simulation of three channel uses is called.}%
\label{fig:adaptive-prot}%
\end{figure}

A sequential discrimination strategy consists of an initial state $\rho
_{R_{1}A_{1}^{\prime}B_{1}^{\prime}}$, a set $\{\mathcal{A}_{R_{i}A_{i}%
B_{i}\rightarrow R_{i+1}A_{i+1}^{\prime}B_{i+1}^{\prime}}^{(i)}\}_{i=1}^{n-1}$
of adaptive channels, and a quantum measurement $\{Q_{R_{n}A_{n}B_{n}%
},I_{R_{n}A_{n}B_{n}}-Q_{R_{n}A_{n}B_{n}}\}$. Let the shorthand
$\{\rho,\mathcal{A},Q\}$ denote such a discrimination strategy. Note
that, in performing a discrimination strategy, the discriminator has a full
description of the bipartite channel $\mathcal{N}_{A^{\prime}B^{\prime
}\rightarrow AB}$ and the simulation protocol, which consists of
$\Phi_{\overline{A}_{0}\overline{B}_{0}}$ and the set in \eqref{eq:sim-prot}.
If this discrimination strategy is performed on the $n$ uses of the actual
channel $\mathcal{N}_{A^{\prime}B^{\prime}\rightarrow AB}$, the relevant
states involved are%
\begin{equation}
\rho_{R_{i+1}A_{i+1}^{\prime}B_{i+1}^{\prime}}:=
\mathcal{A}_{R_{i}%
A_{i}B_{i}\rightarrow R_{i+1}A_{i+1}^{\prime}B_{i+1}^{\prime}}^{(i)}%
(\rho_{R_{i}A_{i}B_{i}}),
\end{equation}
for $i\in\left\{  1,\ldots,n-1\right\}  $ and
\begin{equation}
\rho_{R_{i}A_{i}B_{i}}:=\mathcal{N}_{A_{i}^{\prime}B_{i}^{\prime
}\rightarrow A_{i}B_{i}}(\rho_{R_{i}A_{i}^{\prime}B_{i}^{\prime}}),
\end{equation}
for $i\in\left\{  1,\ldots,n\right\}  $. If this discrimination strategy is
performed on the simulation protocol discussed above, then the relevant states
involved are%
\begin{align}
\tau_{R_{1}A_{1}B_{1}\overline{A}_{1}\overline{B}_{1}}  &  :=
\mathcal{L}_{A_{1}^{\prime}B_{1}^{\prime}\overline{A}_{0}\overline{B}%
_{0}\rightarrow A_{1}B_{1}\overline{A}_{1}\overline{B}_{1}}^{(1)}(\tau
_{R_{1}A_{1}^{\prime}B_{1}^{\prime}}\otimes\Phi_{\overline{A}_{0}\overline
{B}_{0}}),\\
\tau_{R_{i+1}A_{i+1}^{\prime}B_{i+1}^{\prime}\overline{A}_{i}\overline{B}%
_{i}}  &  :=\mathcal{A}_{R_{i}A_{i}B_{i}\rightarrow R_{i+1}A_{i+1}%
^{\prime}B_{i+1}^{\prime}}^{(i)}(\tau_{R_{i}A_{i}B_{i}\overline{A}%
_{i}\overline{B}_{i}}),
\end{align}
for $i\in\left\{  1,\ldots,n-1\right\}  $, where $\tau_{R_{1}A_{1}^{\prime
}B_{1}^{\prime}}=\rho_{R_{1}A_{1}^{\prime}B_{1}^{\prime}}$, and
\begin{equation}
\tau_{R_{i}A_{i}B_{i}\overline{A}_{i}\overline{B}_{i}}:=\mathcal{L}%
_{A_{i}^{\prime}B_{i}^{\prime}\overline{A}_{i-1}\overline{B}_{i-1}\rightarrow
A_{i}B_{i}\overline{A}_{i}\overline{B}_{i}}^{(i)}(\tau_{R_{i}A_{i}^{\prime
}B_{i}^{\prime}\overline{A}_{i-1}\overline{B}_{i-1}}),
\end{equation}
for $i\in\left\{  2,\ldots,n\right\}  $. The discriminator then performs the
measurement $\{Q_{R_{n}A_{n}B_{n}},I_{R_{n}A_{n}B_{n}}-Q_{R_{n}A_{n}B_{n}}\}$
and guesses \textquotedblleft actual channel\textquotedblright\ if the outcome
is $Q_{R_{n}A_{n}B_{n}}$ and \textquotedblleft simulation\textquotedblright%
\ if the outcome is $I_{R_{n}A_{n}B_{n}}-Q_{R_{n}A_{n}B_{n}}$.
Figure~\ref{fig:adaptive-prot}\ depicts the discrimination strategy in the
case that the actual channel is called $n=3$ times and in the case that the
simulation is performed.

If the \textit{a priori} probabilities for the actual channel or simulation
are equal, then the success probability of the discriminator in distinguishing
the channels is given by%
\begin{multline}
\frac{1}{2}\left[  \operatorname{Tr}\{Q_{R_{n}A_{n}B_{n}}\rho_{R_{n}A_{n}%
B_{n}}\}+\operatorname{Tr}\{\left(  I-Q\right)_{R_{n}A_{n}B_{n}}  \tau_{R_{n}A_{n}B_{n}}\}\right] \\
  \leq\frac{1}{2}\left(  1+\frac
{1}{2}\left\Vert \rho_{R_{n}A_{n}B_{n}}-\tau_{R_{n}A_{n}B_{n}}\right\Vert
_{1}\right)  ,
\end{multline}
where the latter inequality is well known from the theory of quantum state
discrimination \cite{H69,H73,Hel76}. For this reason, we say that the $n$
calls to the actual channel $\mathcal{N}_{A^{\prime}B^{\prime}\rightarrow AB}$
are $\varepsilon$-distinguishable from the simulation if the following
condition holds for the respective final states%
\begin{equation}
\frac{1}{2}\left\Vert \rho_{R_{n}A_{n}B_{n}}-\tau_{R_{n}A_{n}B_{n}}\right\Vert
_{1}\leq\varepsilon.
\end{equation}
If this condition holds for all possible discrimination strategies
$\{\rho,\mathcal{A},Q\}$, i.e., if%
\begin{equation}
\frac{1}{2}\sup_{\left\{  \rho,\mathcal{A}\right\}  }\left\Vert \rho
_{R_{n}A_{n}B_{n}}-\tau_{R_{n}A_{n}B_{n}}\right\Vert _{1}\leq\varepsilon,
\label{eq:good-sim}%
\end{equation}
then the simulation protocol constitutes an $(n,M,\varepsilon)$ channel
simulation code. It is worthwhile to remark: If we ascribe the shorthand
$(\mathcal{N})^{n}$ for the $n$ uses of the channel and the shorthand
$(\mathcal{L})^{n}$ for the simulation, then the condition in
\eqref{eq:good-sim} can be understood in terms of the $n$-round strategy norm
of \cite{CDP08a,CDP09,G12}:%
\begin{equation}
\frac{1}{2}\left\Vert (\mathcal{N})^{n}-(\mathcal{L})^{n}\right\Vert
_{\Diamond,n}\leq\varepsilon. \label{eq:strategy-norm}%
\end{equation}

A rate $R$ is achievable for (sequential)\ bipartite channel
simulation of $\mathcal{N}$ if for all $\varepsilon\in(0,1]$, $\delta>0$, and
sufficiently large $n$, there exists an $(n,2^{n\left[  R+\delta\right]
},\varepsilon)$ (sequential)\ bipartite channel simulation code for
$\mathcal{N}$. The (sequential)\ entanglement cost $E_{C}%
(\mathcal{N})$ of the bipartite channel $\mathcal{N}$ is defined to be the infimum of all
achievable rates.

The main question here is to identify a general mathematical expression for
the entanglement cost $E_{C}(\mathcal{N})$ as defined above. This could end up
being a very difficult problem in general, but one can attack the problem in a variety of ways. Below we discuss some specific instances.

A special kind of distinguisher only employs a parallel distinguishing strategy, similar to the approach taken in prior work \cite{BBCW13}. Even this scenario has not been considered previously in the context of bipartite channels. However, in what follows, we center the discussion around sequential simulation as presented above.

As another variation, we can consider the free operations to be completely PPT-preserving channels \cite{Rai99,Rai01} rather than LOCC channels, as was done in the work of Rains \cite{Rai99,Rai01}. Since the set of completely PPT-preserving channels contains LOCCs, this approach can be useful for obtaining bounds on the entanglement cost. This approach was taken recently in \cite{WW18}, for single-sender, single-receiver channels.

\paragraph{Approximate and sequential entanglement cost for resource-seizable
bipartite channels}

First, let us discuss a special case, by supposing that the bipartite channel
has some structure, i.e., that it is bidirectional teleportation simulable as
defined in \cite{STM11,DBW17}:

\begin{definition}
[Bidirectional teleportation-simulable]\label{def:bi-tel-sim} A bipartite
channel $\mathcal{N}_{A^{\prime}B^{\prime}\rightarrow AB}$ is
teleportation-simulable with associated resource state $\theta_{L_{A}L_{B}}$
if for all input states $\rho_{AB}$ the following equality holds
\begin{equation}
\mathcal{N}_{A^{\prime}B^{\prime}\rightarrow AB}(\rho_{AB})=\mathcal{L}%
_{L_{A}ABL_{B}\rightarrow A^{\prime}B^{\prime}}(\rho_{AB}\otimes\theta
_{L_{A}L_{B}}), \label{eq:tele-sim-ch}%
\end{equation}
where $\mathcal{L}_{L_{A}ABL_{B}\rightarrow A^{\prime}B^{\prime}}$ is an LOCC
channel acting on $L_{A}A:L_{B}B$.
\end{definition}

A special kind of bidirectional teleportation-simulable channel is one that is
resource-seizable, in a sense that generalizes a similar notion put forward in
\cite{W18cost,BHKW18}.

\begin{definition}
[Resource seizable]\label{def:resource-seizable}Let $\mathcal{N}_{A^{\prime
}B^{\prime}\rightarrow AB}$ denote a bipartite channel that is
teleportation-simulable with associated resource state $\theta_{L_{A}L_{B}}$.
It is resource-seizable if there exists a separable input state $\rho
_{A_{M}A^{\prime}B^{\prime}B_{M}}$ and an LOCC\ channel $\mathcal{D}%
_{A_{M}ABB_{M}\rightarrow L_{A}L_{B}}$ such that%
\begin{equation}
\mathcal{D}_{A_{M}ABB_{M}\rightarrow L_{A}L_{B}}(\mathcal{N}_{A^{\prime
}B^{\prime}\rightarrow AB}(\rho_{A_{M}A^{\prime}B^{\prime}B_{M}}%
))=\theta_{L_{A}L_{B}}.
\end{equation}

\end{definition}

\begin{theorem}
\label{thm:e-cost-bi-res-seize}Let $\mathcal{N}_{A^{\prime
}B^{\prime}\rightarrow AB}$ denote a bipartite channel that is
teleportation-simulable and resource-seizable. Then its (sequential)
entanglement cost is equal to the entanglement cost of the underlying resource
state:%
\begin{equation}
E_{C}(\mathcal{N}_{A^{\prime}B^{\prime}\rightarrow AB})=E_{C}(\theta
_{L_{A}L_{B}}).
\end{equation}

\end{theorem}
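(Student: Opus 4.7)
The plan is to prove the two-sided bound $E_C(\mc{N}_{A'B' \to AB}) = E_C(\theta_{L_A L_B})$ by establishing the two inequalities separately. For achievability I would use the teleportation-simulable structure of $\mc{N}$, and for the converse I would invoke the additional resource-seizability property. Both directions ultimately reduce to monotonicity of the trace distance under CPTP maps together with the state-level definition of $E_C(\theta_{L_A L_B})$.

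For the achievability direction $E_C(\mc{N}_{A'B' \to AB}) \leq E_C(\theta_{L_A L_B})$, fix a rate $R > E_C(\theta_{L_A L_B})$. By the definition of the state-level entanglement cost, for any $\varepsilon' > 0$ and all sufficiently large $n$ there exists an LOCC channel $\mc{K}_n$ that transforms $\Phi_{\overline{A}_0\overline{B}_0}$ of Schmidt rank $2^{nR}$ to a state within trace distance $\varepsilon'$ of $\theta_{L_A L_B}^{\otimes n}$. I would build the sequential simulation protocol by first applying $\mc{K}_n$ once to prepare $n$ approximate copies of $\theta_{L_A L_B}$ in memory, and on the $i$-th channel call invoking the tele-simulation LOCC from \eqref{eq:tele-sim-ch} with the $i$-th stored copy of $\theta_{L_A L_B}$ and the inputs $A_i' B_i'$. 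Had the memory been exactly $\theta_{L_A L_B}^{\otimes n}$, the simulation would be perfect against every sequential distinguisher; by monotonicity of the trace distance under all subsequent LOCC and adversary adaptive channels, the strategy-norm distance in \eqref{eq:strategy-norm} between the actual $n$ uses of $\mc{N}$ and this simulation is at most $\varepsilon'$. Letting $\varepsilon' \to 0$ and $R \searrow E_C(\theta_{L_A L_B})$ yields the desired inequality.

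For the converse $E_C(\mc{N}_{A'B' \to AB}) \geq E_C(\theta_{L_A L_B})$, fix any $(n, M, \varepsilon)$ sequential LOCC-assisted simulation code for $\mc{N}$. I would specialize the sequential discrimination strategy to the parallel one in which the initial state on $R_1 A_1' B_1'$ contains $n$ independent copies of the separable seizing state $\rho_{A_M A' B' B_M}$ from Definition~\ref{def:resource-seizable}, one of which is routed into the first channel input and the others held in the reference system $R_1$; the adaptive channels $\mc{A}^{(i)}$ would simply store the $i$-th output in the growing reference register and feed the $(i+1)$-th stashed copy into the next channel input. Since this is a legitimate instance of the sequential framework, \eqref{eq:good-sim} forces the simulation output on these inputs to be within trace distance $\varepsilon$ of $\mc{N}^{\otimes n}(\rho_{A_M A' B' B_M}^{\otimes n})$; applying the LOCC extractor $\mc{D}_{A_M A B B_M \to L_A L_B}^{\otimes n}$ and invoking data processing produces a state within trace distance $\varepsilon$ of $\theta_{L_A L_B}^{\otimes n}$. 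The entire procedure starts from $\Phi_{\overline{A}_0 \overline{B}_0}$ of Schmidt rank $M$ together with a free separable state and proceeds entirely by LOCC, so the state-level definition of $E_C(\theta_{L_A L_B})$ forces $\tfrac{1}{n} \log_2 M \geq E_C(\theta_{L_A L_B}) - o(1)$ as $n \to \infty$ with $\varepsilon \to 0$, giving the matching lower bound.

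The principal technical nuisance I anticipate is the bookkeeping in the converse: verifying that the parallel-input, stash-and-forward strategy genuinely fits the rigid sequential template of an initial state $\rho_{R_1 A_1' B_1'}$ followed by adaptive channels $\mc{A}^{(i)}_{R_i A_i B_i \to R_{i+1} A_{i+1}' B_{i+1}'}$, and correctly threading the reference system so that all $n$ outputs remain accessible for the final $\mc{D}^{\otimes n}$ step. Beyond this labeling work, both inequalities follow from monotonicity of trace distance under CPTP maps and the two structural properties in Definitions~\ref{def:bi-tel-sim} and~\ref{def:resource-seizable}, with no new entanglement-measure machinery required.
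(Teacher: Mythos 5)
Your proposal is correct, and the achievability half coincides with the paper's argument: dilute maximal entanglement into $n$ approximate copies of $\theta_{L_AL_B}$, run the teleportation simulation of Definition~\ref{def:bi-tel-sim} on each call, and observe that every sequential distinguisher's advantage is controlled, via data processing of the trace distance through all downstream LOCC and adaptive channels, by the one-shot error of preparing $\theta_{L_AL_B}^{\otimes n}$. Where you genuinely diverge is in the converse. The paper (following the proof of Theorem~1 of the cited work on entanglement cost of point-to-point channels) routes the lower bound through the entanglement of formation: after the parallel seizing test and the LOCC extraction $\mathcal{D}^{\otimes n}$, one bounds $\log_2 M \geq E_F(\Phi_M) \geq E_F(\omega') \geq E_F(\theta^{\otimes n}) - f(n,\varepsilon)$ using LOCC monotonicity and asymptotic continuity of $E_F$, and then invokes the Hayden--Horodecki--Terhal regularization $E_C(\theta) = \lim_n \frac{1}{n}E_F(\theta^{\otimes n})$. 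You instead observe that the composite procedure (free preparation of the separable seizing states, the simulation protocol, and $\mathcal{D}^{\otimes n}$) is itself an LOCC entanglement-dilution protocol converting $\Phi_M$ into a state $\varepsilon$-close to $\theta^{\otimes n}$, so any achievable simulation rate is an achievable dilution rate and hence at least $E_C(\theta_{L_AL_B})$ by the operational definition alone. This is a legitimate and arguably cleaner route: it avoids asymptotic continuity of $E_F$ and the regularization theorem entirely, at the price of the bookkeeping you already flag, namely embedding the stash-and-forward parallel test into the sequential template so that the guarantee in \eqref{eq:good-sim} applies to the joint state of all $n$ outputs held in the reference, and checking that the Alice/Bob split of the stored registers keeps the overall map LOCC. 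The EoF route, by contrast, yields explicit finite-$n$ error bounds, which is why the paper's cited antecedent adopts it.
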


The proof of this theorem follows along the lines of the proof of
\cite[Theorem~1]{W18cost}. To achieve the rate $E_{C}(\theta_{L_{A}L_{B}})$,
Alice and Bob use maximal entanglement at the rate $E_{C}(\theta_{L_{A}L_{B}%
})$ to make a large number $n$\ of approximate copies of the resource state
$\theta_{L_{A}L_{B}}$. Then whenever the channel simulation is needed, they
use one of the approximate copies along with the LOCC\ channel from
\eqref{eq:tele-sim-ch} to complete the simulation. Related to the observations
from \cite[Proposition~2]{W18cost}, the ability of a verifier to distinguish
the bipartite channel $\mathcal{N}_{A^{\prime}B^{\prime}\rightarrow AB}$ from
its simulation is limited by the distinguishability of the resource state
$\theta_{L_{A}L_{B}}^{\otimes n}$ from its approximation, which can be made
arbitrarily small with increasing $n$. The converse part follows by employing
the entanglement of formation, its properties, a parallel verification test,
and the resource-seizable property from Definition~\ref{def:resource-seizable}%
\ to deduce that the entanglement cost should at least be equal to
$E_{C}(\theta_{L_{A}L_{B}})$.

Particular bipartite channels that are bidirectional teleportation simulable are those that are bicovariant, as defined and identified in \cite{DBW17}. For such channels, we can conclude from Theorem~\ref{thm:e-cost-bi-res-seize} that their entanglement cost is equal to the entanglement cost of their Choi states.

\paragraph{Beyond resource-seizable channels}

It is of interest to characterize the entanglement cost for general
bipartite channels, beyond those discussed above. A successful approach in
prior work \cite{BBCW13} was to apply the quantum de Finetti theorem / reduction \cite{CKR09} to
simplify the analysis. There, the authors of \cite{BBCW13} took advantage of
permutation symmetry inherent in the channel being simulated, and the 
finding is that rather than having to test the performance of the simulation
protocol on every possible state, it is only necessary to do so for a single
universal de Finetti state, at the price of a polynomial in $n$ multiplicative factor
for the error of the simulation. However, in the asymptotic limit, this
polynomial factor is negligible and does not affect the simulation cost.

A task to consider here, as mentioned above, is to restrict the notion of simulation to be
a parallel simulation, as done in \cite{BBCW13}, in which the goal is to simulate $n$
parallel uses of the bipartite channel $\mathcal{N}_{A^{\prime}B^{\prime
}\rightarrow AB}$, i.e., to simulate $(\mathcal{N}_{A^{\prime}B^{\prime
}\rightarrow AB})^{\otimes n}$. In particular, the goal of this simplified
notion of bipartite channel simulation is to consider a simulation protocol
$\mathcal{P}_{A^{\prime n}B^{\prime n}\rightarrow A^{n}B^{n}}$ to have the
following form:%
\begin{equation}
\mathcal{P}_{A^{\prime n}B^{\prime n}\rightarrow A^{n}B^{n}}(\omega_{A^{\prime
n}B^{\prime n}}):=\mathcal{L}_{A^{\prime n}B^{\prime n}\overline{A}%
_{0}\overline{B}_{0}\rightarrow A^{n}B^{n}}(\omega_{A^{\prime n}B^{\prime n}%
}\otimes\Phi_{\overline{A}_{0}\overline{B}_{0}}),
\label{eq:berta-parallel-prot}%
\end{equation}
where $\omega_{A^{\prime n}B^{\prime n}}$ is an arbitrary input state,
$\mathcal{L}_{A^{\prime n}B^{\prime n}\overline{A}_{0}\overline{B}%
_{0}\rightarrow A^{n}B^{n}}$ is a free LOCC\ channel, and $\Phi_{\overline
{A}_{0}\overline{B}_{0}}$ is a maximally entangled resource state. For
$\varepsilon\in\left[  0,1\right]  $, the simulation is then considered
$\varepsilon$-distinguishable from $(\mathcal{N}_{A^{\prime}B^{\prime
}\rightarrow AB})^{\otimes n}$ if the following condition holds%
\begin{equation}
\frac{1}{2}\left\Vert (\mathcal{N}_{A^{\prime}B^{\prime}\rightarrow
AB})^{\otimes n}-\mathcal{P}_{A^{\prime n}B^{\prime n}\rightarrow A^{n}B^{n}%
}\right\Vert _{\Diamond}\leq\varepsilon, \label{eq:berta-sim}%
\end{equation}
where $\left\Vert \cdot\right\Vert _{\Diamond}$ denotes the diamond norm
\cite{Kit97}. The physical meaning of the above inequality is that it places a
limitation on how well any discriminator can distinguish the channel
$(\mathcal{N}_{A^{\prime}B^{\prime}\rightarrow AB})^{\otimes n}$ from the
simulation $\mathcal{P}_{A^{\prime n}B^{\prime n}\rightarrow A^{n}B^{n}}$ in a
guessing game. Such a guessing game consists of the discriminator preparing a
quantum state $\rho_{RA^{\prime n}B^{\prime n}}$, the referee picking
$(\mathcal{N}_{A^{\prime}B^{\prime}\rightarrow AB})^{\otimes n}$ or
$\mathcal{P}_{A^{\prime n}B^{\prime n}\rightarrow A^{n}B^{n}}$ at random and
then applying it to the $A^{\prime n}B^{\prime n}$ systems of $\rho
_{RA^{\prime n}B^{\prime n}}$, and the discriminator finally performing a
quantum measurement on the systems $RA^{n}B^{n}$. If the inequality in
\eqref{eq:berta-sim} holds, then the probability that the discriminator can
correctly distinguish the channel from its simulation is bounded from above by
$\frac{1}{2}\left(  1+\varepsilon\right)  $, regardless of the particular
state $\rho_{RA^{\prime n}B^{\prime n}}$ and final measurement chosen for his
distinguishing strategy \cite{Kit97,H69,H73,Hel76}. Thus, if $\varepsilon$ is
close to zero, then this probability is not much better than random guessing,
and in this case, the channels are considered nearly indistinguishable and the
simulation thus reliable.

\subsection{Exact and sequential entanglement cost of bipartite channels}

Another important scenario to consider is the \textit{exact} entanglement cost
of a bipartite channel. Here, the setting is the same as that described above,
but the goal is to incur no error whatsoever when simulating a bipartite
channel. That is, it is required that $\varepsilon=0$ in \eqref{eq:good-sim},
\eqref{eq:strategy-norm}, and \eqref{eq:berta-sim}. Even though such a change
might seem minimal, it has a dramatic effect on the theory and how one attacks
the problem. There are at least two possible ways to approach the exact case,
by allowing the free operations to be LOCC\ or completely PPT-preserving channels.

Let us first discuss the second case. In \cite{WW18},  
 the $\kappa$-entanglement of a bipartite state $\rho_{AB}$\ was defined as
follows:%
\begin{equation}
E_{\kappa}(\rho_{AB}):=\log_{2}\inf\{\operatorname{Tr}\{S_{AB}%
\}:-T_{B}(S_{AB})\leq T_{B}(\rho_{AB})\leq T_{B}(S_{AB}),\ S_{AB}%
\geq0\},\label{eq:kappa-ent-states}%
\end{equation}
where $T_{B}$ denotes the partial transpose.
As proven in \cite{WW18},
the entanglement measure $E_\kappa$ has many
desirable properties, including monotonicity under selective completely PPT-preserving
operations and additivity $E_{\kappa}(\rho_{AB}\otimes\sigma_{A^{\prime
}B^{\prime}})=E_{\kappa}(\rho_{AB})+E_{\kappa}(\sigma_{A^{\prime}B^{\prime}}%
)$. It is also efficiently computable by a semi-definite program. Furthermore,
it has an operational meaning as the exact entanglement cost of a bipartite
state $\rho_{AB}$. That is, we define the one-shot exact entanglement cost of
a bipartite state $\rho_{AB}$ as
\begin{equation}
E_{\operatorname{PPT}}^{(1,c)}(\rho_{AB}):=\inf\{\log_{2}d:\mathcal{P}%
_{\hat{A}\hat{B}\rightarrow AB}(\Phi_{\hat{A}\hat{B}}^{d})=\rho_{AB}\},
\end{equation}
where $\Phi_{\hat{A}\hat{B}}^{d}$ denotes a maximally entangled state of
Schmidt rank $d$ and $\mathcal{P}_{\hat{A}\hat{B}\rightarrow AB}$ denotes a
completely PPT-preserving channel. Then the one-shot entanglement cost of $\rho_{AB}$ is
defined as
\begin{equation}
E_{\operatorname{PPT}}^{c}(\rho_{AB}):=\lim_{n\rightarrow
\infty}\frac{1}{n}E_{\operatorname{PPT}}^{(1,c)}(\rho_{AB}^{\otimes n}),
\end{equation}
and one of
the main results of \cite{WW18} is that%
\begin{equation}
E_{\operatorname{PPT}}^{c}(\rho_{AB})=E_{\kappa}(\rho_{AB}%
).\label{eq:kappa-cost}%
\end{equation}
Thus, this represents the first time in  entanglement theory
that an entanglement measure for general bipartite states is both efficiently
computable while having an operational meaning.

Another accomplishment of \cite{WW18} was to establish that the exact entanglement
cost of a single-sender, single-receiver quantum channel $\mathcal{N}%
_{A\rightarrow B}$, as defined in the previous section but with $\varepsilon
=0$ and with free LOCC\ operations replaced by completely PPT-preserving operations, is
given by
\begin{equation}
E_{\operatorname{PPT}}(\mathcal{N}_{A\rightarrow B})=E_{\kappa}(\mathcal{N}%
_{A\rightarrow B}):=\sup_{\psi_{RA}}E_{\kappa}(\mathcal{N}_{A\rightarrow
B}(\psi_{RA})),
\end{equation}
where the optimization on the right-hand side is with respect to pure,
bipartite states $\psi_{RA}$ with system $R$ isomorphic to system $A$. The
quantity $E_{\kappa}(\mathcal{N}_{A\rightarrow B})$ is called the $\kappa
$-entanglement of a quantum channel in \cite{WW18}, where it was also shown to be
efficiently computable via a semi-definite program and to not increase under
amortization (a property stronger than additivity). It has a dual
representation, via semi-definite programming duality, as follows:%
\begin{equation}
E_{\kappa}(\mathcal{N}_{A\rightarrow B})=\log_{2}\inf\{\left\Vert
\operatorname{Tr}_{B}Q_{RB}\right\Vert _{\infty}:-T_{B}(Q_{RB})\leq
T_{B}(J_{RB}^{\mathcal{N}})\leq T_{B}(Q_{RB}),\ Q_{RB}\geq0\},
\label{eq:kappa-ent-channel}%
\end{equation}
where $J_{RB}^{\mathcal{N}}$ is the Choi operator of the channel
$\mathcal{N}_{A\rightarrow B}$. This dual representation bears an interesting
resemblance to the formula for the $\kappa$-entanglement of bipartite states
in \eqref{eq:kappa-ent-states}.

Extending the result of \cite{WW18}, we can consider  the exact
entanglement cost of a bipartite channel $\mathcal{N}_{A^{\prime}B^{\prime
}\rightarrow AB}$. In light of the above result, it is reasonable that
the exact entanglement cost should simplify so much as to lead to an
efficiently-computable and single-letter formula. At the least, a reasonable guess
for an appropriate formula for the $\kappa$-entanglement of a bipartite
channel $\mathcal{N}_{A^{\prime}B^{\prime}\rightarrow AB}$, in light of the
prior two results, is as follows:%
\begin{multline}
E_{\kappa}(\mathcal{N}_{A^{\prime}B^{\prime}\rightarrow AB})=\log_{2}%
\inf\{\left\Vert \operatorname{Tr}_{AB}Q_{R_{A}ABR_{B}}\right\Vert _{\infty
}:\label{eq:kappa-bipartite-channel}\\
-T_{BR_{B}}(Q_{R_{A}ABR_{B}})\leq T_{BR_{B}}(J_{R_{A}ABR_{B}}^{\mathcal{N}%
})\leq T_{BR_{B}}(Q_{R_{A}ABR_{B}}),\ Q_{R_{A}ABR_{B}}\geq0\},
\end{multline}
where $J_{R_{A}ABR_{B}}^{\mathcal{N}}$ is the Choi operator of the bipartite
channel $\mathcal{N}_{A^{\prime}B^{\prime}\rightarrow AB}$, with the systems
$R_{A}$ and $R_{B}$ being isomorphic to the respective channel input systems
$A^{\prime}$ and $B^{\prime}$. The above $\kappa
$-entanglement of a bipartite channel reduces to the correct formula in
\eqref{eq:kappa-ent-states}\ when the bipartite channel is equivalent to a
bipartite state $\rho_{AB}$, with its action to trace out the input systems
$A^{\prime}$ and $B^{\prime}$ and replace with the state $\rho_{AB}$. This is
because the Choi operator $J_{R_{A}ABR_{B}}^{\mathcal{N}}=I_{R_{A}}\otimes
\rho_{AB}\otimes I_{R_{B}}$ in such a case, and then the optimization above
simplifies to the formula in \eqref{eq:kappa-ent-states}. Furthermore, when
the bipartite channel $\mathcal{N}_{A^{\prime}B^{\prime}\rightarrow AB}$\ is
just a single-sender, single-receiver channel, with trivial $B^{\prime}$
system and trivial $A$ system, then $R_{B}$ and $A$ of $J_{R_{A}ABR_{B}%
}^{\mathcal{N}}$ are trivial, so that the formula above reduces to the correct
formula in \eqref{eq:kappa-ent-channel}. Later we show that this measure is a good measure of entanglement for bipartite channels, in the sense that it obeys several desirable properties.

\subsection{Approximate distillable entanglement of bipartite channels}

Given a bipartite channel $\mathcal{N}_{A^{\prime}B^{\prime}\rightarrow AB}$,
we are also interested in determining its distillable entanglement, which is a
critical component of the resource theory of entanglement for bipartite channels.

\begin{figure}
		\centering
		\includegraphics[scale=0.579]{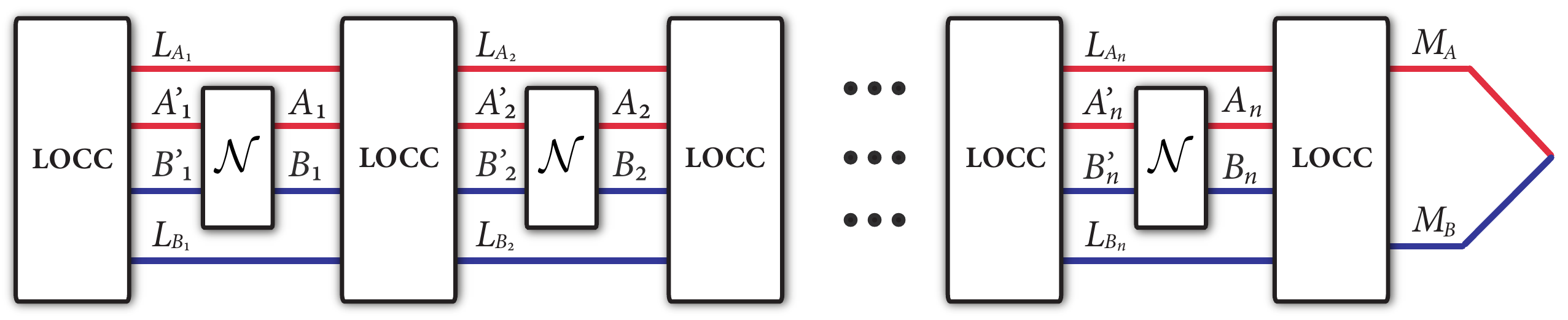}
		\caption{A protocol for LOCC-assisted entanglement distillation that uses a bipartite quantum channel $\mathcal{N}$ $n$ times. Every channel use is interleaved by an LOCC bipartite channel. The goal of such a protocol is to produce an approximate maximally entangled state in the systems $M_A$ and $M_B$, where Alice possesses system $M_A$ and Bob system $M_B$.}\label{fig:bi-q-com}
	\end{figure}

The most general protocol for distilling entanglement from a bipartite
channel, as depicted in Figure~\ref{fig:bi-q-com}, has the following form.
Alice and Bob are spatially separated, and they are allowed to undergo a
bipartite quantum channel $\mathcal{N}_{A^{\prime}B^{\prime}\rightarrow AB}$.
Alice holds systems labeled by $A^{\prime},A$ whereas Bob holds $B^{\prime}%
,B$. They begin by performing an LOCC channel $\mathcal{L}_{\emptyset
\rightarrow L_{A_{1}}A_{1}^{\prime}B_{1}^{\prime}L_{B_{1}}}^{(1)}$, which
leads to a separable state $\rho_{L_{A_{1}}A_{1}^{\prime}B_{1}^{\prime
}L_{B_{1}}}^{(1)}$, where $L_{A_{1}},L_{B_{1}}$ are finite-dimensional systems
of arbitrary size and $A_{1}^{\prime},B_{1}^{\prime}$ are input systems to the
first channel use. Alice and Bob send systems $A_{1}^{\prime}$ and
$B_{1}^{\prime}$, respectively, through the first channel use, which yields
the output state $\sigma_{L_{A_{1}}A_{1}B_{1}L_{B_{1}}}^{(1)}:=\mathcal{N}%
_{A_{1}^{\prime}B_{1}^{\prime}\rightarrow A_{1}B_{1}}(\rho_{L_{A_{1}}%
A_{1}^{\prime}B_{1}^{\prime}L_{B_{1}}}^{(1)})$. Alice and Bob then perform the
LOCC channel $\mathcal{L}_{L_{A_{1}}A_{1}B_{1}L_{B_{1}}\rightarrow L_{A_{2}%
}A_{2}^{\prime}B_{2}^{\prime}L_{B_{2}}}^{(2)}$, which leads to the state
$\rho_{L_{A_{2}}A_{2}^{\prime}B_{2}^{\prime}L_{B_{2}}}^{(2)}:=\mathcal{L}%
_{L_{A_{1}}A_{1}B_{1}L_{B_{1}}\rightarrow L_{A_{2}}A_{2}^{\prime}B_{2}%
^{\prime}L_{B_{2}}}^{(2)}(\sigma_{L_{A_{1}}A_{1}B_{1}L_{B_{1}}}^{(1)})$. Both
parties then send systems $A_{2}^{\prime},B_{2}^{\prime}$ through the second
channel use $\mathcal{N}_{A_{2}^{\prime}B_{2}^{\prime}\rightarrow A_{2}B_{2}}%
$, which yields the state
\begin{equation}
\sigma_{L_{A_{2}}A_{2}B_{2}L_{B_{2}}}%
^{(2)}:=\mathcal{N}_{A_{2}^{\prime}B_{2}^{\prime}\rightarrow A_{2}B_{2}}%
(\rho_{L_{A_{2}}A_{2}^{\prime}B_{2}^{\prime}L_{B_{2}}}^{(2)}).
\end{equation}
They iterate
this process such that the protocol makes use of the channel $n$ times. In
general, we have the following states for the $i$th use, for $i\in
\{2,3,\ldots,n\}$:
\begin{align}
\rho_{L_{A_{i}}A_{i}^{\prime}B_{i}^{\prime}L_{B_{i}}}^{(i)}  &  :=\mathcal{L}%
_{L_{A_{i-1}}A_{i-1}B_{i-1}L_{B_{i-1}}\rightarrow L_{A_{i}}A_{i}^{\prime}%
B_{i}^{\prime}L_{B_{i}}}^{(i)}(\sigma_{L_{A_{i-1}}A_{i-1}B_{i-1}L_{B_{i-1}}%
}^{(i-1)}),\\
\sigma_{L_{A_{i}}A_{i}B_{i}L_{B_{i}}}^{(i)}  &  :=\mathcal{N}_{A_{i}^{\prime
}B_{i}^{\prime}\rightarrow A_{i}B_{i}}(\rho_{L_{A_{i}}A_{i}^{\prime}%
B_{i}^{\prime}L_{B_{i}}}^{(i)}),
\end{align}
where $\mathcal{L}_{L_{A_{i-1}}A_{i-1}B_{i-1}L_{B_{i-1}}\rightarrow L_{A_{i}%
}A_{i}^{\prime}B_{i}^{\prime}L_{B_{i}}}^{(i)}$ is an LOCC channel. In the
final step of the protocol, an LOCC channel $\mathcal{L}_{L_{A_{n}}A_{n}%
B_{n}L_{B_{n}}\rightarrow M_{A}M_{B}}^{(n+1)}$ is applied, that generates the
final state:
\begin{equation}
\omega_{M_{A}M_{B}}:=\mathcal{P}_{L_{A_{n}}A_{n}B_{n}L_{B_{n}}\rightarrow
M_{A}M_{B}}^{(n+1)}(\sigma_{L_{A_{n}}A_{n}^{\prime}B_{n}^{\prime}L_{B_{n}}%
}^{(n)}),
\end{equation}
where $M_{A}$ and $M_{B}$ are held by Alice and Bob, respectively.

The goal of the protocol is for Alice and Bob to distill entanglement in the
end; i.e., the final state $\omega_{M_{A}M_{B}}$ should be close to a
maximally entangled state. For a fixed $n,\ M\in\mathbb{N},\ \varepsilon
\in\lbrack0,1]$, the original protocol is an $(n,M,\varepsilon)$ protocol if
the channel is used $n$ times as discussed above, $|M_{A}|=|M_{B}|=M$, and if
$
F(\omega_{M_{A}M_{B}},\Phi_{M_{A}M_{B}})    =\left\langle \Phi\right\vert
_{M_{A}M_{B}}\omega_{M_{A}M_{B}}\left\vert \Phi\right\rangle _{AB}
  \geq1-\varepsilon,
$
where $\Phi_{M_{A}M_{B}}$ is the maximally entangled state. A rate $R$ is said
to be achievable for entanglement distillation if for all $\varepsilon
\in(0,1]$, $\delta>0$, and sufficiently large $n$, there exists an
$(n,2^{n(R-\delta)},\varepsilon)$ protocol. The distillable entanglement of  $\mathcal{N}$, denoted as $Q%
(\mathcal{N})$, is equal to the supremum of all achievable rates.

The recent work \cite{DBW17} defined the 
max-Rains information $R_{\max}(\mathcal{N})$\ of a bipartite
quantum channel $\mathcal{N}_{A^{\prime}B^{\prime}\rightarrow AB}$\ as
follows: 
\begin{equation}
R_{\max}(\mathcal{N}):=\log\Gamma%
(\mathcal{N}),
\end{equation}
where $\Gamma(\mathcal{N})$ is the solution to the following
semi-definite program:
\begin{align}
\inf  \{ \left\Vert \operatorname{Tr}_{AB}\{V_{S_{A}ABS_{B}}%
+Y_{S_{A}ABS_{B}}\}\right\Vert _{\infty}:
 V,Y\geq0,
  \ T_{BS_{B}}(V_{S_{A}ABS_{B}}-Y_{S_{A}ABS_{B}})\geq J_{S_{A}ABS_{B}%
}^{\mathcal{N}}\},\label{eq:bi-rains-channel-sdp}%
\end{align}
such that $S_{A}\simeq A^{\prime}$, and $S_{B}\simeq B^{\prime}$. One of the
main results of \cite{DBW17} is the following bound $Q%
(\mathcal{N})\leq R_{\max}(\mathcal{N})$, establishing the
max-Rains information as a fundamental limitation on the
distillable entanglement of any bipartite channel.

One of the key properties of the max-Rains information is that
it does not increase under amortization; i.e., the following inequality is
satisfied. Let $\rho_{L_{A}A^{\prime}B^{\prime}L_{B}}$ be a state, and let
$\mathcal{N}_{A^{\prime}B^{\prime}\rightarrow AB}$ be a bipartite channel.
Then
\begin{equation}
R_{\max}(L_{A}A;BL_{B})_{\omega}\leq R_{\max}(L_{A}A^{\prime};B^{\prime}%
L_{B})_{\rho}+R_{\max}(\mathcal{N}),\label{eq:amort-Rains}%
\end{equation}
where $\omega_{L_{A}ABL_{B}}=\mathcal{N}_{A^{\prime}B^{\prime}\rightarrow
AB}(\rho_{L_{A}A^{\prime}B^{\prime}L_{B}})$ and the max-Rains relative entropy
of a state $\sigma_{CD}$ is
\begin{equation}
R_{\max}(C;D)_{\sigma}:=\inf\{\lambda:\sigma_{CD}\leq2^{\lambda}\omega
_{CD},\ \omega_{CD}\geq0,\ \left\Vert T_{D}(\omega_{CD})\right\Vert _{1}%
\leq1\}.
\end{equation}
The amortization inequality above is stronger than additivity, and it is one
of the main technical tools needed for establishing the key inequality
$Q(\mathcal{N})\leq R_{\max}(\mathcal{N})$.

\subsection{Exact distillable entanglement of bipartite channels}

Another interesting question, dual to the exact entanglement cost question
proposed above, is the exact distillable entanglement of a bipartite channel.
The setting for this problem is the same as that outlined in the previous section, but
we demand that the error $\varepsilon$ is exactly equal to zero. We again consider the
free operations to be completely PPT-preserving operations, so that a solution to this
problem will give bounds for the exact distillable entanglement with LOCC.

To start out, we should recall developments for bipartite states. The most
significant progress on the exact distillable entanglement of a bipartite
state $\rho_{AB}$ has been made recently in \cite{PhysRevA.95.062322}. To begin with, let
us define the one-shot exact distillable entanglement of a bipartite state
$\rho_{AB}$ as
\begin{equation}
E_{\text{PPT}}^{(1,d)}(\rho_{AB}):=\sup\{\log_{2}d:\mathcal{P}_{AB\rightarrow
\hat{A}\hat{B}}(\rho_{AB})=\Phi_{\hat{A}\hat{B}}^{d}\},
\end{equation}
where $\mathcal{P}_{AB\rightarrow\hat{A}\hat{B}}$ is a completely PPT-preserving
operation. In \cite{PhysRevA.95.062322}, it was shown that $E_{\text{PPT}}^{(1,d)}(\rho_{AB})$ is given
by the following optimization:%
\begin{align}
E_{\text{PPT}}^{(1,d)}(\rho_{AB})  =-\log W_{0}(\rho_{AB}),\qquad
W_{0}(\rho_{AB}):=\inf\{\left\Vert T_{B}(R_{AB})\right\Vert _{\infty}%
:P_{AB}\leq R_{AB}\leq I_{AB}\},\label{eq:W_0-exact-dist-ent}%
\end{align}
with $P_{AB}$ the projection onto the support of the state $\rho_{AB}$. The
exact entanglement cost of a bipartite state $\rho_{AB}$ is then defined as
the regularization of the above:
\begin{equation}
E_{\text{PPT}}^{d}(\rho_{AB}):=\lim
_{n\rightarrow\infty}\frac{1}{n}E_{\text{PPT}}^{(1,d)}(\rho_{AB}^{\otimes n}%
).
\end{equation}
By relaxing one of the constraints for $W_{0}$ above, we get the following
quantity \cite{PhysRevA.95.062322},  called the min-Rains relative entropy:
\begin{equation}
M(\rho_{AB}):=\inf\{\left\Vert T_{B}(R_{AB})\right\Vert _{\infty}:P_{AB}\leq
R_{AB}\},
\end{equation}
and then it follows that%
\begin{equation}
E_{\text{PPT}}^{(1,d)}(\rho_{AB})\leq E_{M}(\rho_{AB}):=-\log_{2}M(\rho
_{AB}).\label{eq:min-Rains-one-shot-bnd}%
\end{equation}
However, a significant property of $E_{M}(\rho_{AB})$ is that it is
additive \cite{PhysRevA.95.062322}:
\begin{equation}
E_{M}(\rho_{AB}\otimes\sigma_{A^{\prime}B^{\prime}})=E_{M}%
(\rho_{AB})+E_{M}(\sigma_{A^{\prime}B^{\prime}}).
\end{equation}
By exploiting this
property, the following single-letter, efficiently computable
upper bound on the exact distillable entanglement  follows \cite{PhysRevA.95.062322}:%
\begin{equation}
E_{\text{PPT}}^{d}(\rho_{AB})\leq E_{M}(\rho_{AB}%
).\label{eq:min-Rains-bnd-ent-dist}%
\end{equation}

Some key questions for this task are as follows: Is 
the inequality in \eqref{eq:min-Rains-bnd-ent-dist} tight?\ This would
involve showing that one of the constraints in \eqref{eq:W_0-exact-dist-ent}
becomes negligible in the asymptotic limit of many copies of $\rho_{AB}$. If
it is true, it would be a strong counterpart to the finding in
\eqref{eq:kappa-cost}. We can also analyze the exact distillable entanglement of a
point-to-point quantum channel $\mathcal{N}_{A\rightarrow B}$, and in light of the result in
\eqref{eq:min-Rains-one-shot-bnd}, it is natural to wonder whether
\begin{equation}
E_{\text{PPT}}^{(1,d)}(\mathcal{N}_{A\rightarrow B})\leq E_{M}(\mathcal{N}%
_{A\rightarrow B}),
\end{equation}
where the one-shot distillable entanglement of a channel is given by%
\begin{equation}
E_{\text{PPT}}^{(1,d)}(\mathcal{N}_{A\rightarrow B}):=\sup\{\log
_{2}d:\mathcal{P}_{A^{\prime}BB^{\prime}\rightarrow\hat{A}\hat{B}}%
(\mathcal{N}_{A\rightarrow B}(\rho_{A^{\prime}AB^{\prime}}))=\Phi_{\hat{A}%
\hat{B}}^{d}\},
\end{equation}
with $\rho_{A^{\prime}AB^{\prime}}$ a PPT\ state and $\mathcal{P}_{A^{\prime
}BB^{\prime}\rightarrow\hat{A}\hat{B}}$ a completely PPT-preserving channel, and
the min-Rains information of a channel $\mathcal{N}_{A\rightarrow B}$ is
defined as the optimized min-Rains relative entropy:
\begin{equation}
E_{M}(\mathcal{N}_{A\rightarrow B}):=\sup_{\psi_{RA}}E_{M}(\mathcal{N}%
_{A\rightarrow B}(\psi_{RA})),
\end{equation}
where the optimization is with respect to pure states $\psi_{RA}$ with system
$R$ isomorphic to  system $A$.
From here, a natural next question is to determine bounds on the exact distillable entanglement of a
bipartite channel.

\section{Entanglement measures for bipartite channels}

Here we develop entanglement measures for bipartite channels, including logarithmic negativity, $\kappa$ entanglement, and generalized Rains information. We begin with some background and then develop the aforementioned measures.

\subsection{Entropies and information}

The quantum entropy of a density operator $\rho_A$ is defined as \cite{Neu32}
\begin{equation}
S(A)_\rho:= S(\rho_A)= -\Tr[\rho_A\log_2\rho_A].
\end{equation}
The conditional quantum entropy $S(A\vert B)_\rho$ of a density operator $\rho_{AB}$ of a composite system $AB$ is defined as
\begin{equation}
S(A\vert B)_\rho \coloneqq S(AB)_\rho-S(B)_\rho.
\end{equation}
The coherent information $I(A\> B)_{\rho}$ of a density operator $\rho_{AB}$  of a composite system $AB$ is defined as \cite{SN96}
\begin{equation}\label{eq:coh-info}
I(A\rangle B)_{\rho} \coloneqq - S(A\vert B)_\rho = S(B)_{\rho}-S(AB)_{\rho}.
\end{equation}
The quantum relative entropy of two quantum states is a measure of their distinguishability. For $\rho\in\mc{D}(\mc{H})$ and $\sigma\in\mc{B}_+(\mc{H})$, it is defined as~\cite{Ume62} 
\begin{equation}
D(\rho\V \sigma):= \left\{ 
\begin{tabular}{c c}
$\Tr\{\rho[\log_2\rho-\log_2\sigma]\}$, & $\supp(\rho)\subseteq\supp(\sigma)$\\
$+\infty$, &  otherwise.
\end{tabular} 
\right.
\end{equation}
The quantum relative entropy is non-increasing under the action of positive trace-preserving maps \cite{MR15}, which is the statement that $D(\rho\V\sigma)\geq D(\mc{M}(\rho)\V\mc{M}{(\sigma)})$ for any two density operators $\rho$ and $\sigma$ and a positive trace-preserving map $\mc{M}$ (this inequality applies to quantum channels as well \cite{Lin75}, since every completely positive map is also a positive map by definition).

\subsection{Generalized divergence and generalized relative entropies}

A quantity is called a generalized divergence \cite{PV10,SW12} if it satisfies the following monotonicity (data-processing) inequality for all density operators $\rho$ and $\sigma$ and quantum channels $\mc{N}$:
\begin{equation}\label{eq:gen-div-mono}
\mathbf{D}(\rho\Vert \sigma)\geq \mathbf{D}(\mathcal{N}(\rho)\Vert \mc{N}(\sigma)).
\end{equation}
As a direct consequence of the above inequality, any generalized divergence satisfies the following two properties for an isometry $U$ and a state~$\tau$ \cite{WWY14}:
\begin{align}
\mathbf{D}(\rho\Vert \sigma) & = \mathbf{D}(U\rho U^\dag\Vert U \sigma U^\dag),\label{eq:gen-div-unitary}\\
\mathbf{D}(\rho\Vert \sigma) & = \mathbf{D}(\rho \otimes \tau \Vert \sigma \otimes \tau).\label{eq:gen-div-prod}
\end{align}

The sandwiched R\'enyi relative entropy  \cite{MDSFT13,WWY14} is denoted as $\wt{D}_\alpha(\rho\V\sigma)$  and defined for
$\rho\in\mc{D}(\mc{H})$, $\sigma\in\mc{B}_+(\mc{H})$, and  $\forall \alpha\in (0,1)\cup(1,\infty)$ as
\begin{equation}\label{eq:def_sre}
\wt{D}_\alpha(\rho\V \sigma):= \frac{1}{\alpha-1}\log_2 \Tr\left\{\left(\sigma^{\frac{1-\alpha}{2\alpha}}\rho\sigma^{\frac{1-\alpha}{2\alpha}}\right)^\alpha \right\} ,
\end{equation}
but it is set to $+\infty$ for $\alpha\in(1,\infty)$ if $\supp(\rho)\nsubseteq \supp(\sigma)$.
The sandwiched R\'enyi relative entropy obeys the following ``monotonicity in $\alpha$'' inequality \cite{MDSFT13}: for  $\alpha,\beta\in(0,1)\cup(1,\infty)$,
\begin{equation}\label{eq:mono_sre}
\wt{D}_\alpha(\rho\V\sigma)\leq \wt{D}_\beta(\rho\V\sigma) \quad \text{ if }  \quad \alpha\leq \beta.
\end{equation}
The following lemma states that the sandwiched R\'enyi relative entropy $\wt{D}_\alpha(\rho\V\sigma)$ is a particular generalized divergence for certain values of $\alpha$. 
\begin{lemma}[\cite{FL13}]
\label{lem:DP-sandRenyi}
Let $\mc{N}:\mc{B}_+(\mc{H}_A)\to \mc{B}_+(\mc{H}_B)$ be a quantum channel   and let $\rho_A\in\mc{D}(\mc{H}_A)$ and $\sigma_A\in \mc{B}_+(\mc{H}_A)$. Then, for all $ \alpha\in \[1/2,1\)\cup (1,\infty)$
\begin{equation}
\wt{D}_\alpha(\rho\V\sigma)\geq \wt{D}_\alpha(\mc{N}(\rho)\V\mc{N}(\sigma)).
\end{equation}
\end{lemma}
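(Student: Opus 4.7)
The inequality is the data-processing inequality for the sandwiched R\'enyi relative entropy in the Frank--Lieb range, and I would follow their strategy, which has three stages: reduction to partial trace via Stinespring, a twirling argument, and a Lieb-type operator-convexity result.

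First, I would invoke the Stinespring dilation theorem to write any quantum channel as $\mc{N}(X)=\Tr_E\{VXV^\dag\}$ for some isometry $V:\mc{H}_A\to\mc{H}_B\otimes\mc{H}_E$. A direct computation from the definition \eqref{eq:def_sre}, using $V^\dag V=I_A$ and the fact that $(VXV^\dag)^t=VX^tV^\dag$ for any $X\geq 0$ and real power $t$ (interpreted on the support of $VXV^\dag$), shows that $\wt{D}_\alpha(V\rho V^\dag\|V\sigma V^\dag)=\wt{D}_\alpha(\rho\|\sigma)$. Hence it suffices to prove the partial-trace case
\begin{equation}
\wt{D}_\alpha(\rho_{AB}\|\sigma_{AB})\geq \wt{D}_\alpha(\rho_A\|\sigma_A).
\end{equation}

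Next, I would introduce the sandwiched quasi-entropy $\wt{Q}_\alpha(\rho\|\sigma):=\Tr\{(\sigma^{(1-\alpha)/(2\alpha)}\rho\sigma^{(1-\alpha)/(2\alpha)})^\alpha\}$. Because $\alpha-1$ changes sign at $\alpha=1$, the partial-trace inequality for $\wt{D}_\alpha$ is equivalent to $\wt{Q}_\alpha(\rho_A\|\sigma_A)\leq \wt{Q}_\alpha(\rho_{AB}\|\sigma_{AB})$ for $\alpha>1$ and to the reverse of this for $\alpha\in[1/2,1)$. The quasi-entropy is unitarily invariant under joint conjugation, is multiplicative across tensor products, and satisfies $\wt{Q}_\alpha(I_B/|B|\,\|\,I_B/|B|)=1$. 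Fixing a unitary $1$-design $\{U_i\}_{i=1}^N$ on $\mc{H}_B$, the identity $\frac{1}{N}\sum_{i=1}^N (I_A\otimes U_i)X_{AB}(I_A\otimes U_i^\dag)=X_A\otimes I_B/|B|$ expresses the marginal as an average of unitary conjugates. Assuming $\wt{Q}_\alpha$ is jointly convex on pairs of positive operators for $\alpha>1$ and jointly concave for $\alpha\in[1/2,1)$, then applying this joint twirl to both arguments yields an inequality between $\wt{Q}_\alpha(\rho_A\otimes I_B/|B|\,\|\,\sigma_A\otimes I_B/|B|)$ and $\wt{Q}_\alpha(\rho_{AB}\|\sigma_{AB})$ in the correct direction, and multiplicativity collapses the left-hand side to $\wt{Q}_\alpha(\rho_A\|\sigma_A)$, as desired.

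The main obstacle is thus the joint convexity of $\wt{Q}_\alpha$ for $\alpha>1$ and its joint concavity for $\alpha\in[1/2,1)$, which is the technical heart of \cite{FL13}. Their approach derives these properties from Epstein's and Lieb's trace-function theorems, which establish that maps of the form $A\mapsto \Tr\{(K^*A^pK)^{1/p}\}$ are concave on the positive cone for $p\in(0,1]$ and convex for $p\in[1,2]$; the joint convexity or concavity of $\wt{Q}_\alpha$ in $(\rho,\sigma)$ then follows via the Effros perspective construction. The endpoint $\alpha=1/2$ can be handled separately by recognizing $\wt{Q}_{1/2}(\rho\|\sigma)=F(\rho,\sigma)$, the Uhlmann fidelity, whose data-processing property is classical.
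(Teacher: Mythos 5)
The paper does not prove this lemma itself but imports it directly from \cite{FL13} (noting \cite{Wilde2018a} as an alternative), and your sketch faithfully reproduces the Frank--Lieb strategy — isometric invariance plus reduction to the partial trace, the twirling argument combined with joint convexity/concavity and multiplicativity of the quasi-entropy $\wt{Q}_\alpha$, and the Epstein/Lieb-type trace-function theorems as the technical core — so it is essentially the same approach. One cosmetic point: with this paper's convention $F(\rho,\sigma)=\Vert\sqrt{\rho}\sqrt{\sigma}\Vert_1^2$ one has $\wt{Q}_{1/2}(\rho\Vert\sigma)=\sqrt{F(\rho,\sigma)}$ rather than $F(\rho,\sigma)$, which does not affect the argument.
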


See \cite{Wilde2018a} for an alternative proof of Lemma~\ref{lem:DP-sandRenyi}.

In the limit $\alpha\to 1$, the sandwiched R\'enyi relative entropy $\wt{D}_\alpha(\rho\V\sigma)$ converges to the quantum relative entropy \cite{MDSFT13,WWY14}:
\begin{equation}\label{eq:mono_renyi}
\lim_{\alpha\to 1}\wt{D}_\alpha(\rho\V\sigma):= D_1(\rho\V\sigma)=D(\rho\V\sigma).
\end{equation}
In the limit $\alpha\to \infty$, the sandwiched R\'enyi relative entropy $\wt{D}_\alpha(\rho\V\sigma)$ converges to the max-relative entropy \cite{MDSFT13}, which is defined as \cite{D09,Dat09}
\begin{equation}\label{eq:max-rel}
D_{\max}(\rho\V\sigma)=\inf\{\lambda:\ \rho \leq 2^\lambda\sigma\},
\end{equation}
and if $\supp(\rho)\nsubseteq\supp(\sigma)$ then $D_{\max}(\rho\V\sigma)=\infty$.

\subsection{Entanglement measures for bipartite states}
 
Let $\Ent(A;B)_\rho$ denote an entanglement measure \cite{HHHH09} that is evaluated for a bipartite state~$\rho_{AB}$. 
The basic property of an entanglement measure is that it should be an LOCC monotone \cite{HHHH09}, i.e.,  non-increasing under the action of an LOCC channel. 
Given such an entanglement measure, one can define the entanglement $\Ent(\mc{M})$ of a channel $\mc{M}_{A\to B}$ in terms of it by optimizing over all pure, bipartite states that can be input to the channel:
\begin{equation}\label{eq:ent-mes-channel}
\Ent(\mc{M})=\sup_{\psi_{LA}} \Ent(L;B)_\omega,
\end{equation}
where $\omega_{LB}=\mc{M}_{A\to B}(\psi_{LA})$. Due to the properties of an entanglement measure and the well known Schmidt decomposition theorem, it suffices to optimize over pure states $\psi_{LA}$ such that $L\simeq A$ (i.e., one does not achieve a higher value of
$\Ent(\mc{M})$
 by optimizing over mixed states with unbounded reference system $L$). In an information-theoretic setting, the entanglement $\Ent(\mc{M})$ of a channel~$\mc{M}$ characterizes the amount of entanglement that a sender $A$ and receiver $B$ can generate by using the channel if they do not share entanglement prior to its use.

Alternatively, one can consider the amortized entanglement $\Ent_A(\mc{M})$ of a channel $\mc{M}_{A\to B}$ as the following optimization~\cite{KW17} (see also \cite{LHL03,BHLS03,CM17,DDMW17,RKB+17}):
\begin{equation}\label{eq:ent-arm}
\Ent_A(\mc{M})
 \coloneqq \sup_{\rho_{L_AAL_B}} \left[\Ent(L_A;BL_B)_{\tau}-\Ent(L_AA;L_B)_{\rho}\right],
\end{equation}
where $\tau_{L_ABL_B}=\mc{M}_{A\to B}(\rho_{L_AAL_B})$ and $\rho_{L_AAL_B}$ is a state. The supremum is with respect to all states $\rho_{L_AAL_B}$ and the systems $L_A,L_B$ are finite-dimensional but could be arbitrarily large. Thus, in general, $\Ent_A(\mc{M})$ need not be computable. The amortized entanglement quantifies the net amount of entanglement that can be generated by using the channel $\mc{M}_{A\to B}$, if the sender and the receiver are allowed to begin with some initial entanglement in the form of the state $\rho_{L_AAL_B}$. That is, $\Ent(L_AA;L_B)_\rho$ quantifies the entanglement of the initial state $\rho_{L_AAL_B}$, and $\Ent(L_A;BL_B)_{\tau}$ quantifies the entanglement of the final state produced after the action of the channel. 

The Rains relative entropy of a state $\rho_{AB}$ is defined as \cite{Rai01,AdMVW02}
\begin{equation}\label{eq:rains-inf-state}
R(A;B)_\rho\coloneqq \min_{\sigma_{AB}\in \PPT'(A:B)} D (\rho_{AB}\Vert \sigma_{AB}),
\end{equation}
and it is monotone non-increasing under the action of a completely PPT-preserving quantum channel $\mc{P}_{A'B'\to AB}$, i.e.,
\begin{equation}
R(A';B')_\rho\geq R(A;B)_\omega,
\end{equation}
where $\omega_{AB}=\mc{P}_{A'B'\to AB}(\rho_{A'B'})$. The sandwiched Rains relative entropy of a state $\rho_{AB}$ is defined as follows \cite{TWW17}:  
\begin{equation}\label{eq:alpha-rains-inf-state}
\widetilde{R}_{\alpha}(A;B)_\rho\coloneqq \min_{\sigma_{AB}\in \PPT'(A:B)} \widetilde{D}_{\alpha} (\rho_{AB}\Vert \sigma_{AB}).
\end{equation}
The max-Rains relative entropy of a state $\rho_{AB}$ is defined as \cite{WD16b}
\begin{equation}
R_{\max}(A;B)_\rho\coloneqq \min_{\sigma_{AB}\in \PPT'(A:B)} D_{\max} (\rho_{AB}\Vert \sigma_{AB}).
\end{equation}
The max-Rains information of a quantum channel $\mc{M}_{A\to B}$ is defined as \cite{WFD17}
\begin{equation}
R_{\max}(\mc{M})\coloneqq \max_{\phi_{SA}}R_{\max} (S;B)_\omega,
\label{eq:max-Rains-channel}
\end{equation}
where $\omega_{SB}=\mc{M}_{A\to B}(\phi_{SA})$ and $\phi_{SA}$ is a pure state, with $\dim(\mc{H}_S)=\dim(\mc{H}_A)$. The amortized max-Rains information of a channel $\mc{M}_{A\to B}$, denoted as $R_{\max,A}(\mc{M})$, is defined by replacing $\Ent$ in \eqref{eq:ent-arm} with the max-Rains relative entropy $R_{\max}$ \cite{BW17}. It was shown in \cite{BW17} that amortization does not enhance the max-Rains information of an arbitrary point-to-point channel, i.e.,
\begin{equation}
R_{\max,A}(\mc{M})=R_{\max}(\mc{M}).
\end{equation}  

Recently, in \cite[Eq.~(8)]{WD16pra} (see also \cite{WFD17}), the max-Rains relative entropy of a state $\rho_{AB}$ was expressed as 
\begin{equation}\label{eq:rains-w}
R_{\max}(A;B)_{\rho}=\log_2 W(A;B)_{\rho}, 
\end{equation}
where $W(A;B)_{\rho}$ is the solution to the following semi-definite program:
\begin{align}
\textnormal{minimize}\ &\ \Tr\{C_{AB}+D_{AB}\}\nonumber\\
\textnormal{subject to}\ &\ C_{AB}, D_{AB}\geq 0,\nonumber\\
   &\ \T_{B} (C_{AB}-D_{AB})\geq \rho_{AB}. \label{eq:rains-state-sdp}
\end{align}
Similarly, in \cite[Eq.~(21)]{WFD17}, the max-Rains information of a quantum channel $\mc{M}_{A\to B}$ was expressed as 
\begin{equation}\label{eq:rains-omega}
R_{\max}(\mc{M})=\log \Gamma (\mc{M}),
\end{equation}
where $\Gamma(\mc{M})$ is the solution to the following semi-definite program:
\begin{align}
\textnormal{minimize}\ &\ \norm{\Tr_B\{V_{SB}+Y_{SB}\}}_{\infty}\nonumber\\
\textnormal{subject to}\ &\ Y_{SB}, V_{SB}\geq 0,\nonumber\\
& \ \T_B(V_{SB}-Y_{SB})\geq J^\mc{M}_{SB}.\label{eq:rains-channel-sdp}
\end{align}

The sandwiched relative entropy of entanglement of a bipartite state $\rho_{AB}$ is defined as \cite{WTB16} 
\begin{equation}\label{eq:rel-ent-state}
\widetilde{E}_{\alpha}(A;B)_{\rho}\coloneqq\min_{\sigma_{AB}\in\SEP(A:B)}\widetilde{D}_{\alpha}(\rho_{AB}\Vert\sigma_{AB}).
\end{equation} 
In the limit $\alpha\to 1$, $\widetilde{E}_{\alpha}(A;B)_{\rho}$ converges to the relative entropy of entanglement \cite{VP98}, i.e.,
\begin{align}\label{eq:rel-ent-state-1}
\lim_{\alpha\to 1}\widetilde{E}_{\alpha}(A;B)_{\rho} &=E(A;B)_{\rho}\\
& \coloneqq \min_{\sigma_{AB}\in\SEP(A:B)}D(\rho_{AB}\Vert\sigma_{AB}).
\end{align} The max-relative entropy of entanglement \cite{D09,Dat09} is defined for a bipartite state $\rho_{AB}$ as
\begin{equation}\label{eq:Emax}
E_{\max}(A;B)_{\rho}\coloneqq \min_{\sigma_{AB}\in\SEP(A:B)}D_{\max}(\rho_{AB}\Vert\sigma_{AB}).
\end{equation}  
The max-relative entropy of entanglement $E_{\max}(\mc{M})$ of a channel $\mc{M}_{A\to B}$ is defined as in \eqref{eq:ent-mes-channel}, by replacing $\Ent$ with $E_{\max}$ \cite{CM17}. It was shown in \cite{CM17} that amortization does not increase max-relative entropy of entanglement of a channel $\mc{M}_{A\to B}$, i.e.,
\begin{equation}
E_{\max,A}(\mc{M})=E_{\max}(\mc{M}).
\end{equation}

\subsection{Negativity of a bipartite state}

Given a bipartite state, its logarithmic negativity is defined as \cite{Vidal2002,Plenio2005b}
\begin{equation}
E_{N}(\rho_{AB}):=\log\left\Vert T_{B}(\rho_{AB})\right\Vert _{1}.
\end{equation}
The idea of this quantity is to quantify the deviation of a bipartite state
from being PPT. If it is indeed PPT, then $E_{N}(\rho_{AB})=0$. If not, then
$E_{N}(\rho_{AB})>0$.

By utilizing Holder duality, it is possible to write the above as a
semi-definite program:%
\begin{equation}
E_{N}(\rho_{AB})=\log\sup_{R_{AB}}\left\{  \operatorname{Tr}[R_{AB}\rho
_{AB}]:-I_{AB}\leq T_{B}(R_{AB})\leq I_{AB}\right\}  ,
\end{equation}
where the optimization is with respect to Hermitian $R_{AB}$. By utilizing
semi-definite programming duality, we can also write $E_{N}(\rho_{AB})$ in
terms of its dual semi-definite program as%
\begin{equation}
E_{N}(\rho_{AB})=\log\inf_{K_{AB},L_{AB}\geq0}\left\{  \operatorname{Tr}%
[K_{AB}+L_{AB}]:T_{B}(K_{AB}-L_{AB})=\rho_{AB}\right\}  .
\end{equation}

The max-Rains relative entropy of a bipartite state is defined as follows \cite{WD16pra}:%
\begin{equation}
R_{\max}(\rho_{AB}):=\inf_{\sigma_{AB}\geq0,E_{N}(\sigma_{AB})\leq
0}D_{\max}(\rho_{AB}\Vert\sigma_{AB}). \label{eq:Rains-rel-ent}%
\end{equation}
It can be written as the following semi-definite program:%
\begin{equation}
R_{\max}(\rho_{AB})=\log\sup_{R_{AB}\geq0}\left\{  \operatorname{Tr}%
[R_{AB}\rho_{AB}]:-I_{AB}\leq T_{B}(R_{AB})\leq I_{AB}\right\}  ,
\end{equation}
with the dual%
\begin{equation}
R_{\max}(\rho_{AB})=\log\inf_{K_{AB},L_{AB}\geq0}\left\{  \operatorname{Tr}%
[K_{AB}+L_{AB}]:T_{B}(K_{AB}-L_{AB})\geq\rho_{AB}\right\}  .
\end{equation}

It is clear that%
\begin{equation}
R_{\max}(\rho_{AB})\leq E_{N}(\rho_{AB}),
\end{equation}
since the primal for $R_{\max}(\rho_{AB})$ is obtained from the primal for
$E_{N}(\rho_{AB})$ by restricting the optimization to $R_{AB}\geq0$.
Alternatively, the dual of $R_{\max}(\rho_{AB})$ is obtained from the dual of
$E_{N}(\rho_{AB})$ by relaxing the equality constraint $\rho_{AB}=T_{B}%
(K_{AB}-L_{AB})$.

Finally, note that we can define Rains relative entropy of a bipartite state
much more generally in terms of a generalized divergence $\mathbf{D}$ as%
\begin{equation}
\mathbf{R}(\rho_{AB}):=\inf_{\sigma_{AB}\geq0,E_{N}(\sigma_{AB})\leq
0}\mathbf{D}(\rho_{AB}\Vert\sigma_{AB})
\end{equation}

\subsection{Negativity of a bipartite channel}

Let us define the logarithmic negativity of a bipartite channel $\mathcal{N}%
_{A^{\prime}B^{\prime}\rightarrow AB}$\ as%
\begin{equation}
E_{N}(\mathcal{N}):=\log\left\Vert T_{B}\circ\mathcal{N}_{A^{\prime
}B^{\prime}\rightarrow AB}\circ T_{B^{\prime}}\right\Vert _{\Diamond
},\label{eq:log-neg-channel}%
\end{equation}
where the diamond norm \cite{Kit97} of a bipartite linear, Hermitian-preserving map
$\mathcal{P}_{A^{\prime}B^{\prime}\rightarrow AB}$\ is given by%
\begin{equation}
\left\Vert \mathcal{P}_{A^{\prime}B^{\prime}\rightarrow AB}\right\Vert
_{\Diamond}:=\log\sup_{\psi_{S_{A}A^{\prime}B^{\prime}S_{B}}}\left\Vert
\mathcal{P}_{A^{\prime}B^{\prime}\rightarrow AB}(\psi_{S_{A}A^{\prime
}B^{\prime}S_{B}})\right\Vert _{1}.\label{eq:diamond-norm}%
\end{equation}
Thus, more generally, $E_{N}(\mathcal{N})$ can be defined in the above way if
$\mathcal{N}_{A^{\prime}B^{\prime}\rightarrow AB}$ is an arbitrary linear,
Hermitian-preserving map. Note that $E_{N}(\mathcal{N})$ reduces to the well known logarithmic negativity of a point-to-point channel \cite{HW01} when the bipartite channel is indeed a point-to-point channel.

A bipartite channel $\mathcal{N}_{A^{\prime}B^{\prime}\rightarrow AB}$ is
called completely PPT preserving (C-PPT-P) if the map $T_{B}\circ
\mathcal{N}_{A^{\prime}B^{\prime}\rightarrow AB}\circ T_{B^{\prime}}$ is
completely positive \cite{Rai99,Rai01}. Thus, the measure in
\eqref{eq:log-neg-channel}\ quantifies the deviation of a bipartite
channel from being C-PPT-P. Indeed, if $\mathcal{N}_{A^{\prime}B^{\prime
}\rightarrow AB}$ is C-PPT-P, then $E_{N}(\mathcal{N})=0$. Otherwise,
$E_{N}(\mathcal{N})>0$. 

\begin{proposition}
The logarithmic negativity of a bipartite channel $\mathcal{N}_{A^{\prime
}B^{\prime}\rightarrow AB}$ can be written as the following primal SDP:%
\begin{equation}
\sup_{\rho,R}\left\{
\begin{array}
[c]{c}%
\operatorname{Tr}[T_{BS_{B}}(J_{S_{A}ABS_{B}}^{\mathcal{N}})R_{S_{A}ABS_{B}%
}]:\rho_{S_{A}S_{B}}\geq0,\operatorname{Tr}[\rho_{S_{A}S_{B}}]\leq1,\\
-\rho_{S_{A}S_{B}}\otimes I_{AB}\leq R_{S_{A}ABS_{B}}\leq\rho_{S_{A}S_{B}%
}\otimes I_{AB}%
\end{array}
\right\}  ,
\end{equation}
where $J_{S_{A}ABS_{B}}^{\mathcal{N}}$ is the Choi operator of the channel
$\mathcal{N}_{A^{\prime}B^{\prime}\rightarrow AB}$ and the optimization is
with respect to Hermitian $R_{S_{A}ABS_{B}}$. The dual SDP\ is given by%
\begin{equation}
\inf\left\{
\begin{array}
[c]{c}%
\left\Vert \operatorname{Tr}_{AB}[V_{S_{A}ABS_{B}}+Y_{S_{A}ABS_{B}%
}]\right\Vert _{\infty}:V_{S_{A}ABS_{B}},Y_{S_{A}ABS_{B}}\geq0,\\
T_{BS_{B}}(V_{S_{A}ABS_{B}}-Y_{S_{A}ABS_{B}})=J_{S_{A}ABS_{B}}^{\mathcal{N}}%
\end{array}
\right\}  .
\end{equation}

\end{proposition}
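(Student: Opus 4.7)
The plan is to reduce the claim to a standard SDP characterization of the diamond norm of a Hermiticity-preserving map in terms of its Choi operator, and then invoke Lagrangian SDP duality. First, I identify the Choi operator of the composite map $\Phi := T_B \circ \mathcal{N}_{A'B' \to AB} \circ T_{B'}$: using the identity $T_{B'}(\ket{\Upsilon}\!\bra{\Upsilon}_{S_B B'}) = T_{S_B}(\ket{\Upsilon}\!\bra{\Upsilon}_{S_B B'})$ from \eqref{eq:PT-last} inside the defining vector $\ket{\Upsilon}_{S_A S_B : A'B'}$, together with the fact that the remaining partial transposes commute with the identity on the reference, one obtains
\begin{equation}
J^{\Phi}_{S_A A B S_B} = T_{BS_B}(J^{\mathcal{N}}_{S_A A B S_B}),
\end{equation}
so it suffices to express $\norm{\Phi}_{\Diamond}$ as an SDP in terms of this Choi operator.

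To derive the primal SDP, note that any pure state $\ket{\psi}_{R_A R_B A' B'}$ with $R_A R_B \simeq A' B'$ and reduced state $\rho_{R_A R_B} = \Tr_{A'B'}[\ket{\psi}\!\bra{\psi}]$ can be written as $\ket{\psi} = (\sqrt{\rho_{R_A R_B}} \otimes I_{A'B'})\ket{\Upsilon}_{R_A R_B : A' B'}$ up to a local unitary on $R_A R_B$ that leaves the trace norm invariant. A short calculation then yields
\begin{equation}
(\id_{R_A R_B} \otimes \Phi)(\ket{\psi}\!\bra{\psi}) = (\sqrt{\rho_{R_A R_B}} \otimes I_{AB})\, J^{\Phi}_{R_A A B R_B}\, (\sqrt{\rho_{R_A R_B}} \otimes I_{AB}).
\end{equation}
Combining the variational formula $\norm{X}_1 = \sup\{\Tr[MX] : -I \leq M \leq I,\ M = M^{\dagger}\}$ for Hermitian $X$ with the change of variables $R := (\sqrt{\rho} \otimes I_{AB})\, M\, (\sqrt{\rho} \otimes I_{AB})$, which transforms $-I \leq M \leq I$ into $-\rho \otimes I_{AB} \leq R \leq \rho \otimes I_{AB}$, and then taking the supremum over $\rho_{S_A S_B} \geq 0$ with $\Tr[\rho] \leq 1$, produces the stated primal SDP.

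For the dual, I apply standard Lagrangian duality: introducing positive multipliers $V_{S_A A B S_B}, Y_{S_A A B S_B} \geq 0$ for $\rho_{S_A S_B} \otimes I_{AB} \mp R \geq 0$ and $t \geq 0$ for $\Tr[\rho] \leq 1$, the requirement that the Lagrangian be bounded above in Hermitian $R$ forces $V - Y = T_{BS_B}(J^{\mathcal{N}})$, which by involution of the partial transpose is equivalent to $T_{BS_B}(V - Y) = J^{\mathcal{N}}$; boundedness in $\rho \geq 0$ then forces $t\, I_{S_A S_B} \geq \Tr_{AB}(V + Y)$, and minimizing $t$ gives the objective $\norm{\Tr_{AB}(V + Y)}_{\infty}$. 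Strong duality follows from Slater's condition, witnessed by the strictly feasible primal point $\rho_{S_A S_B} = \varepsilon I_{S_A S_B}/(|S_A||S_B|)$ for $\varepsilon \in (0,1)$ and $R = 0$. The main technical subtlety will be the rank-deficient case in the change of variables above: when $\rho$ is not full rank, one must verify that $-\rho \otimes I \leq R \leq \rho \otimes I$ already forces $R$ to vanish on the kernel of $\rho \otimes I$ (via a short operator-inequality argument, or alternatively by perturbing $\rho \to \rho + \varepsilon I$ and taking $\varepsilon \to 0$), so that the substitution remains lossless.
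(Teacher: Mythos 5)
Your proposal is correct and follows essentially the same route as the paper: identify the Choi operator of $T_{B}\circ\mathcal{N}\circ T_{B'}$ as $T_{BS_B}(J^{\mathcal{N}})$, use the variational formula for the trace norm together with the parametrization of pure inputs via $\sqrt{\rho_{S_AS_B}}$ acting on the maximally entangled vector, and absorb $\sqrt{\rho}\otimes I_{AB}$ into the trace-norm witness to obtain the primal. Your explicit Lagrangian derivation of the dual, the Slater point, and the remark that $-\rho\otimes I\leq R\leq\rho\otimes I$ forces $R$ to vanish on $\ker(\rho\otimes I)$ are welcome details that the paper compresses into ``standard techniques.''
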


\begin{proof}
Starting from the definition, we find that%
\begin{align}
&  \sup_{\psi}\left\Vert (T_{B}\circ\mathcal{N}_{A^{\prime}B^{\prime
}\rightarrow AB}\circ T_{B^{\prime}})(\psi_{S_{A}A^{\prime}B^{\prime}S_{B}%
})\right\Vert _{1}\\
&  =\sup_{\psi,R}\left\{  \operatorname{Tr}[(T_{B}\circ\mathcal{N}_{A^{\prime
}B^{\prime}\rightarrow AB}\circ T_{B^{\prime}})(\psi_{S_{A}A^{\prime}%
B^{\prime}S_{B}})R_{S_{A}ABS_{B}}]:\left\Vert R_{S_{A}ABS_{B}}\right\Vert
_{\infty}\leq1\right\}  \\
&  =\sup_{\rho,R}\left\{
\begin{array}
[c]{c}%
\operatorname{Tr}[(T_{B}\circ\mathcal{N}_{A^{\prime}B^{\prime}\rightarrow
AB}\circ T_{B^{\prime}})(\rho_{S_{A}S_{B}}^{1/2}\Gamma_{S_{A}A^{\prime
}B^{\prime}S_{B}}\rho_{S_{A}S_{B}}^{1/2})R_{S_{A}ABS_{B}}]:\\
\rho_{S_{A}S_{B}}\geq0,\operatorname{Tr}[\rho_{S_{A}S_{B}}]=1,\left\Vert
R_{S_{A}ABS_{B}}\right\Vert _{\infty}\leq1
\end{array}
\right\}  \\
&  =\sup_{\rho,R}\left\{
\begin{array}
[c]{c}%
\operatorname{Tr}[\rho_{S_{A}S_{B}}^{1/2}T_{BS_{B}}(J_{S_{A}ABS_{B}%
}^{\mathcal{N}})\rho_{S_{A}S_{B}}^{1/2}R_{S_{A}ABS_{B}}]:\rho_{S_{A}S_{B}}%
\geq0,\operatorname{Tr}[\rho_{S_{A}S_{B}}]=1,\\
-I_{S_{A}ABS_{B}}\leq R_{S_{A}ABS_{B}}\leq I_{S_{A}ABS_{B}}%
\end{array}
\right\}  \\
&  =\sup_{\rho,R}\left\{
\begin{array}
[c]{c}%
\operatorname{Tr}[T_{BS_{B}}(J_{S_{A}ABS_{B}}^{\mathcal{N}})\rho_{S_{A}S_{B}%
}^{1/2}R_{S_{A}ABS_{B}}\rho_{S_{A}S_{B}}^{1/2}]:\rho_{S_{A}S_{B}}%
\geq0,\operatorname{Tr}[\rho_{S_{A}S_{B}}]=1,\\
-I_{S_{A}ABS_{B}}\leq R_{S_{A}ABS_{B}}\leq I_{S_{A}ABS_{B}}%
\end{array}
\right\}  \\
&  =\sup_{\rho,R}\left\{
\begin{array}
[c]{c}%
\operatorname{Tr}[T_{BS_{B}}(J_{S_{A}ABS_{B}}^{\mathcal{N}})R_{S_{A}ABS_{B}%
}]:\rho_{S_{A}S_{B}}\geq0,\operatorname{Tr}[\rho_{S_{A}S_{B}}]=1,\\
-\rho_{S_{A}S_{B}}\otimes I_{AB}\leq R_{S_{A}ABS_{B}}\leq\rho_{S_{A}S_{B}%
}\otimes I_{AB}%
\end{array}
\right\}
\end{align}
Thus, it is clearly an SDP. By employing standard techniques, we find that the
dual is given as stated in the proposition.
\end{proof}

\begin{proposition}
[Faithfulness]The logarithmic negativity of a bipartite channel $\mathcal{N}%
_{A^{\prime}B^{\prime}\rightarrow AB}$\ obeys the following faithfulness
condition:%
\begin{equation}
E_{N}(\mathcal{N})\geq0\text{ and }E_{N}(\mathcal{N})=0\text{ if and only if
}\mathcal{N}\in\text{\textrm{C-PPT-P}.}%
\end{equation}

\end{proposition}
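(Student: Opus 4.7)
The plan is to reduce faithfulness to a standard fact: for any Hermitian-preserving, trace-preserving map, the diamond norm is at least $1$, with equality precisely when the map is completely positive. Once this fact is in hand, both assertions follow by applying it to the auxiliary map $\mathcal{M} := T_B \circ \mathcal{N}_{A'B' \to AB} \circ T_{B'}$, which is Hermitian-preserving and trace-preserving because $\mathcal{N}$ is CPTP and the partial transposes are Hermitian-preserving and trace-preserving.

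For the lower bound $E_N(\mathcal{N}) \geq 0$, the plan is as follows. Pick any pure state $\psi_{S_A A' B' S_B}$; then $\mathcal{M}(\psi)$ is a Hermitian operator with trace $1$. Writing its Jordan decomposition $\mathcal{M}(\psi) = P - Q$ with $P,Q \geq 0$ and $PQ = 0$, we have $\operatorname{Tr}[P] - \operatorname{Tr}[Q] = 1$ and $\|\mathcal{M}(\psi)\|_1 = \operatorname{Tr}[P] + \operatorname{Tr}[Q] \geq 1$. Taking the supremum over $\psi$ in the definition \eqref{eq:diamond-norm} gives $\|\mathcal{M}\|_\diamond \geq 1$, hence $E_N(\mathcal{N}) \geq 0$.

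For the equivalence, if $\mathcal{N}$ is C-PPT-P then by definition $\mathcal{M}$ is completely positive; combined with trace preservation, $\mathcal{M}$ is a quantum channel, and the diamond norm of a quantum channel is $1$ (since $\mathcal{M}(\psi) \geq 0$ has trace norm equal to its trace, which equals $1$). Thus $E_N(\mathcal{N}) = 0$. Conversely, if $E_N(\mathcal{N}) = 0$, then $\|\mathcal{M}(\psi)\|_1 = 1$ for all pure states $\psi$ on $S_A A' B' S_B$; by the Jordan decomposition argument above, equality forces $\operatorname{Tr}[Q]=0$, i.e.\ $\mathcal{M}(\psi) \geq 0$, for every such $\psi$. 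Since the diamond norm ranges over arbitrarily large reference systems $S_A \simeq A'$, $S_B \simeq B'$, this is exactly the statement that $\mathcal{M} = T_B \circ \mathcal{N} \circ T_{B'}$ is completely positive, i.e.\ $\mathcal{N} \in \text{C-PPT-P}$.

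There is no real obstacle here; the only step worth writing out carefully is the Jordan-decomposition inequality and the observation that the diamond norm already ranges over purifications on reference systems large enough to certify complete positivity. One could alternatively read both directions directly off the primal SDP in the preceding proposition: the constraint $-\rho_{S_AS_B}\otimes I_{AB} \leq R_{S_AABS_B} \leq \rho_{S_AS_B}\otimes I_{AB}$ is feasible with value $1$ by taking $R_{S_AABS_B} = \rho_{S_AS_B} \otimes I_{AB}$ and any state $\rho$, which gives a clean SDP-based verification of the faithfulness condition consistent with the argument above.
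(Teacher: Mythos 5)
Your proof is correct, and its skeleton matches the paper's: lower bound from trace preservation, the forward direction from the fact that a C-PPT-P channel composed with the two partial transposes is a quantum channel of diamond norm $1$, and the converse by extracting positivity of the Choi operator from the saturation of the norm. The difference is in what you prove inline versus what the paper cites. The paper specializes immediately to the input $\Phi_{S_AA'}\otimes\Phi_{B'S_B}$, so that everything reduces to the partial transpose of the Choi state, and then invokes two external facts: faithfulness of the logarithmic negativity of states (to get $\Vert T_{BS_B}(\Phi^{\mathcal{N}}_{S_AABS_B})\Vert_1\geq 1$ with equality iff the Choi state is PPT) and Rains' theorem that a bipartite channel is C-PPT-P iff its Choi state is PPT. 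You instead argue uniformly over all pure inputs via the Jordan decomposition $\mathcal{M}(\psi)=P-Q$, which amounts to reproving the state-level faithfulness from scratch, and then you conclude complete positivity from positivity of $\mathcal{M}(\psi)$ on all inputs with input-sized references---which is again the Choi criterion, i.e., the same content as the Rains characterization. Your version is slightly more self-contained; the paper's is shorter because it leans on the known state results. One small caveat: your phrase ``arbitrarily large reference systems'' is not what does the work here---$S_A\simeq A'$ and $S_B\simeq B'$ are of fixed (input) dimension, and that already suffices because the maximally entangled input certifies positivity of the Choi operator; it would be cleaner to say that explicitly. Also, your closing SDP remark only verifies the inequality $E_N(\mathcal{N})\geq 0$ (the feasible point $R_{S_AABS_B}=\rho_{S_AS_B}\otimes I_{AB}$ yields value $\operatorname{Tr}[\rho_{S_AS_B}]=1$ by trace preservation), not the full equivalence, so it should be presented as a supplement rather than an alternative proof of the whole proposition.
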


\begin{proof}
The first inequality is equivalent to the following one:%
\begin{equation}
\left\Vert T_{B}\circ\mathcal{N}_{A^{\prime}B^{\prime}\rightarrow AB}\circ
T_{B^{\prime}}\right\Vert _{\Diamond}\geq1.
\end{equation}
Pick $\psi_{S_{A}A^{\prime}B^{\prime}S_{B}}$\ in \eqref{eq:diamond-norm} to be
$\Phi_{S_{A}A^{\prime}}\otimes\Phi_{B^{\prime}S_{B}}$. Then%
\begin{align}
(T_{B}\circ\mathcal{N}_{A^{\prime}B^{\prime}\rightarrow AB}\circ T_{B^{\prime
}})(\Phi_{S_{A}A^{\prime}}\otimes\Phi_{B^{\prime}S_{B}})  & =(T_{B}%
\circ\mathcal{N}_{A^{\prime}B^{\prime}\rightarrow AB}\circ T_{S_{B}}%
)(\Phi_{S_{A}A^{\prime}}\otimes\Phi_{B^{\prime}S_{B}})\\
& =(T_{BS_{B}}\circ\mathcal{N}_{A^{\prime}B^{\prime}\rightarrow AB}%
)(\Phi_{S_{A}A^{\prime}}\otimes\Phi_{B^{\prime}S_{B}})\\
& =T_{BS_{B}}(\Phi_{S_{A}ABS_{B}}^{\mathcal{N}}),
\end{align}
where $\Phi_{S_{A}ABS_{B}}^{\mathcal{N}}$ denotes the Choi state of the
channel $\mathcal{N}_{A^{\prime}B^{\prime}\rightarrow AB}$. Then%
\begin{equation}
\left\Vert T_{B}\circ\mathcal{N}_{A^{\prime}B^{\prime}\rightarrow AB}\circ
T_{B^{\prime}}\right\Vert _{\Diamond}\geq\left\Vert T_{BS_{B}}(\Phi
_{S_{A}ABS_{B}}^{\mathcal{N}})\right\Vert _{1}\geq1,
\end{equation}
the latter inequality following from the faithfulness of the logarithmic
negativity of states.

Now suppose that $\mathcal{N}_{A^{\prime}B^{\prime}\rightarrow AB}\in
$\textrm{C-PPT-P}. Then it follows that $T_{B}\circ\mathcal{N}_{A^{\prime
}B^{\prime}\rightarrow AB}\circ T_{B^{\prime}}$ is a quantum channel, so that%
\begin{equation}
\left\Vert T_{B}\circ\mathcal{N}_{A^{\prime}B^{\prime}\rightarrow AB}\circ
T_{B^{\prime}}\right\Vert _{\Diamond}=1
\end{equation}
and thus $E_{N}(\mathcal{N})=0$.

Now suppose that $E_{N}(\mathcal{N})=0$. Then%
\begin{equation}
\left\Vert T_{B}\circ\mathcal{N}_{A^{\prime}B^{\prime}\rightarrow AB}\circ
T_{B^{\prime}}\right\Vert _{\Diamond}=1,
\end{equation}
and thus%
\begin{equation}
\left\Vert T_{BS_{B}}(\Phi_{S_{A}ABS_{B}}^{\mathcal{N}})\right\Vert _{1}=1.
\end{equation}
From the faithfulness condition of logarithmic negativity of states, it
follows that $T_{BS_{B}}(\Phi_{S_{A}ABS_{B}}^{\mathcal{N}})\in$PPT. However,
it is known from the work \cite{Rai99,Rai01} that this condition is equivalent to
$\mathcal{N}_{A^{\prime}B^{\prime}\rightarrow AB}\in$\textrm{C-PPT-P}.
\end{proof}

\bigskip
A PPT superchannel $\Theta^{\operatorname{PPT}}$ is a physical transformation of a bipartite quantum channel. That is, the superchannel realizes the following transformation of a channel
$\mathcal{M}_{\hat{A}'\hat{B}'\rightarrow\hat{A}\hat{B}}$
to a channel $\mathcal{N}_{A\rightarrow B}$
in terms of 
completely-PPT-preserving channels $\mathcal{P}_{A'B'\rightarrow
\hat{A}'\hat{B}'A_{M}B_{M}}^{\text{pre}}$ and $\mathcal{P}_{A_{M}\hat{A}\hat{B}B_{M} \to AB}^{\text{post}}$:
\begin{equation}
\mathcal{N}_{A'B'\rightarrow AB}=\Theta^{\operatorname{PPT}}(\mathcal{M}_{\hat
{A}\rightarrow\hat{B}})\coloneqq\mathcal{P}_{A_{M}\hat{A}\hat{B}B_{M} \to AB}^{\text{post}}%
\circ\mathcal{M}_{\hat{A}'\hat{B}'\rightarrow\hat{A}\hat{B}}\circ\mathcal{P}_{A'B'\rightarrow
\hat{A}'\hat{B}'A_{M}B_{M}}^{\text{pre}}.
\label{eq:superchannel-action}
\end{equation}

\begin{theorem}
[Monotonicity]
Let $\mathcal{M}_{\hat{A}'\hat{B}'\rightarrow\hat{A}\hat{B}
}$ be a bipartite quantum channel and  $\Theta^{\operatorname{PPT}}$ a completely-PPT-preserving superchannel of the form in \eqref{eq:superchannel-action}.
The channel measure $E_N$ is monotone
under the action of the superchannel $\Theta^{\operatorname{PPT}}$, in the sense that%
\begin{equation}
E_N(\mathcal{M}_{\hat{A}'\hat{B}'\rightarrow\hat{A}\hat{B}
}) 
\geq 
E_N(\Theta^{\operatorname{PPT}}(\mathcal{M}_{\hat{A}'\hat{B}'\rightarrow\hat{A}\hat{B}
})).
\end{equation}

\end{theorem}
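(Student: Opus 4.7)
The plan is to work directly from the definition $E_N(\mathcal{N}) = \log \| T_B \circ \mathcal{N} \circ T_{B'} \|_\Diamond$ and exploit the fact that the two completely-PPT-preserving channels in the superchannel become genuine quantum channels once conjugated by the appropriate partial transposes, together with the involutory property $T \circ T = \id$. First I would write
\begin{equation}
\mathcal{N}_{A'B'\to AB} = \mathcal{P}^{\mathrm{post}}_{A_M \hat A \hat B B_M \to AB} \circ \bigl(\mathcal{M}_{\hat A' \hat B' \to \hat A \hat B} \otimes \id_{A_M B_M}\bigr) \circ \mathcal{P}^{\mathrm{pre}}_{A'B' \to \hat A' \hat B' A_M B_M},
\end{equation}
and then, identifying Bob's output subsystems of $\mathcal{P}^{\mathrm{pre}}$ as $\hat B' B_M$ and Bob's input subsystems of $\mathcal{P}^{\mathrm{post}}$ as $\hat B B_M$, invoke the complete-PPT-preserving hypothesis to conclude that
\begin{equation}
\widetilde{\mathcal{P}}^{\mathrm{pre}} := T_{\hat B' B_M} \circ \mathcal{P}^{\mathrm{pre}} \circ T_{B'}, \qquad \widetilde{\mathcal{P}}^{\mathrm{post}} := T_B \circ \mathcal{P}^{\mathrm{post}} \circ T_{\hat B B_M}
\end{equation}
are both genuine quantum channels.

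Next I would insert $T_{\hat B B_M} \circ T_{\hat B B_M} = \id$ and $T_{\hat B' B_M} \circ T_{\hat B' B_M} = \id$ into the composition defining $T_B \circ \mathcal{N} \circ T_{B'}$, regroup, and use the fact that partial transpose commutes with the identity channel on the untouched memory systems. Because $\mathcal{M}$ acts trivially on $A_M B_M$, the middle block simplifies as
\begin{equation}
T_{\hat B B_M} \circ \bigl(\mathcal{M}_{\hat A'\hat B' \to \hat A \hat B} \otimes \id_{A_M B_M}\bigr) \circ T_{\hat B' B_M} = \bigl(T_{\hat B} \circ \mathcal{M} \circ T_{\hat B'}\bigr) \otimes \id_{A_M B_M},
\end{equation}
where the inner transposes on $B_M$ cancel. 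The whole composition therefore collapses to
\begin{equation}
T_B \circ \mathcal{N} \circ T_{B'} = \widetilde{\mathcal{P}}^{\mathrm{post}} \circ \bigl[(T_{\hat B} \circ \mathcal{M} \circ T_{\hat B'}) \otimes \id_{A_M B_M}\bigr] \circ \widetilde{\mathcal{P}}^{\mathrm{pre}}.
\end{equation}

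Finally, I would bound the diamond norm of the left-hand side by applying two standard facts: (i) composition with a quantum channel (on either side) is non-increasing in diamond norm, since CPTP maps are contractive under the trace norm, so
\begin{equation}
\| \widetilde{\mathcal{P}}^{\mathrm{post}} \circ \Phi \circ \widetilde{\mathcal{P}}^{\mathrm{pre}} \|_\Diamond \le \|\Phi\|_\Diamond
\end{equation}
for any Hermitian-preserving $\Phi$; and (ii) stability of the diamond norm under tensoring with the identity, $\| \Xi \otimes \id_{A_M B_M} \|_\Diamond = \|\Xi\|_\Diamond$. Applying both with $\Xi = T_{\hat B} \circ \mathcal{M} \circ T_{\hat B'}$ and taking the logarithm yields the claimed monotonicity $E_N(\Theta^{\operatorname{PPT}}(\mathcal{M})) \le E_N(\mathcal{M})$.

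The computation is essentially routine; the only step requiring genuine care is the bookkeeping in the transpose-insertion trick, specifically verifying that the two conjugated pre/post maps are CPTP with respect to precisely the subsystems that are labelled as ``Bob's side'' of each PPT-preserving channel (including the memory registers $A_M, B_M$ that bridge the pre- and post-processing). Once that bipartition is fixed consistently with the definition of a completely-PPT-preserving bipartite channel, the rest is the standard diamond-norm contraction argument, and no new entanglement-theoretic input is needed beyond the faithfulness framework already developed in the paper.
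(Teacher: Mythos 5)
Your proof is correct and follows exactly the route the paper gestures at in its one-line proof (``definition of $E_N$, structure of PPT superchannels, and properties of the diamond norm''): conjugating the pre- and post-processors by partial transposes on Bob's registers (including $B_M$) to obtain genuine channels, cancelling the inserted transposes around the middle block, and invoking contractivity and stabilization of the diamond norm. You have in fact supplied the bookkeeping details that the paper omits, and they check out.
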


\begin{proof}
Follows from the definition of $E_N$, structure of PPT superchannels, and properties of the diamond norm.
\end{proof}

\subsection{Generalized Rains information of a bipartite channel}

Recall that the max-divergence of completely positive maps $\mathcal{E}%
_{C\rightarrow D}$ and $\mathcal{F}_{C\rightarrow D}$ is defined as
\cite{CMW14}%
\begin{equation}
D_{\max}(\mathcal{E}\Vert\mathcal{F})=\sup_{\psi_{RC}}D_{\max}(\mathcal{E}%
_{C\rightarrow D}(\psi_{RC})\Vert\mathcal{F}_{C\rightarrow D}(\psi_{RC})),
\end{equation}
where the optimization is with respect to all pure bipartite states with
reference system $R$ isomorphic to the channel input system $C$. We then
define the max-Rains information of a bipartite channel as a
generalization of the state measure in \eqref{eq:Rains-rel-ent}:

\begin{definition}
The max-Rains information of a bipartite channel $\mathcal{N}_{A^{\prime
}B^{\prime}\rightarrow AB}$ is defined as%
\begin{equation}
R_{\max}(\mathcal{N}):=\inf_{\mathcal{M}:E_{N}(\mathcal{M})\leq0}D_{\max
}(\mathcal{N}\Vert\mathcal{M}),
\end{equation}
where the minimization is with respect to all completely positive bipartite
maps $\mathcal{M}_{A^{\prime}B^{\prime}\rightarrow AB}$.
The generalized Rains information of a bipartite
channel $\mathcal{N}_{A^{\prime
}B^{\prime}\rightarrow AB}$ is defined as%
\begin{equation}
\mathbf{R}(\mathcal{N}):=\inf_{\mathcal{M}:E_{N}(\mathcal{M})\leq
0}\mathbf{D}(\mathcal{N}\Vert\mathcal{M}),
\end{equation}
by utilizing a generalized channel divergence $\mathbf{D}$. \end{definition}

\begin{theorem}
[Monotonicity]
Let $\mathcal{M}_{\hat{A}'\hat{B}'\rightarrow\hat{A}\hat{B}
}$ be a bipartite quantum channel and  $\Theta^{\operatorname{PPT}}$ a completely-PPT-preserving superchannel of the form in \eqref{eq:superchannel-action}.
The channel measure $\mathbf{R}(\mathcal{N})$ is monotone
under the action of the superchannel $\Theta^{\operatorname{PPT}}$, in the sense that%
\begin{equation}
\mathbf{R}(\mathcal{N})(\mathcal{M}_{\hat{A}'\hat{B}'\rightarrow\hat{A}\hat{B}
}) 
\geq 
\mathbf{R}(\mathcal{N})(\Theta^{\operatorname{PPT}}(\mathcal{M}_{\hat{A}'\hat{B}'\rightarrow\hat{A}\hat{B}
})).
\end{equation}

\end{theorem}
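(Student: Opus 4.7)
The plan is to combine two ingredients: the data-processing inequality (DPI) for the generalized channel divergence $\mathbf{D}$ under pre- and post-composition with CPTP maps, and the fact that completely-PPT-preserving pre- and post-processing of a bipartite CP map preserves the condition $E_N\leq 0$. Since the generalized channel divergence is $\mathbf{D}(\mathcal{N}\|\mathcal{M}) = \sup_{\psi_{RC}} \mathbf{D}(\mathcal{N}(\psi_{RC})\|\mathcal{M}(\psi_{RC}))$, combining DPI of $\mathbf{D}$ on states with the stability property \eqref{eq:gen-div-prod} (to lift any reference system through $\mathcal{A}$, $\mathcal{B}$) yields the channel-level monotonicity
\begin{equation*}
\mathbf{D}(\mathcal{B}\circ\mathcal{N}\circ\mathcal{A} \,\|\, \mathcal{B}\circ\mathcal{M}\circ\mathcal{A}) \leq \mathbf{D}(\mathcal{N}\|\mathcal{M})
\end{equation*}
for arbitrary quantum channels $\mathcal{A}$ and $\mathcal{B}$.

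Next, I would verify that if $\mathcal{M}$ is any completely positive bipartite map with $E_N(\mathcal{M}) \leq 0$, then $E_N(\Theta^{\operatorname{PPT}}(\mathcal{M})) \leq 0$. By inserting pairs $T\circ T = \operatorname{id}$ on the appropriate $B$-type systems and using that transposition on a spectator system commutes with $\mathcal{M}$, I can rewrite $T_B \circ \Theta^{\operatorname{PPT}}(\mathcal{M}) \circ T_{B'}$ as the composition of three CP maps: a pre-processing piece $T_{\hat{B}'B_M} \circ \mathcal{P}^{\text{pre}} \circ T_{B'}$, a middle piece $T_{\hat{B}} \circ \mathcal{M} \circ T_{\hat{B}'}$, and a post-processing piece $T_B \circ \mathcal{P}^{\text{post}} \circ T_{\hat{B}B_M}$. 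The outer two are CPTP by the very definition of completely-PPT-preserving channels, while the middle piece is CP with diamond norm at most $1$ by hypothesis. Submultiplicativity of the diamond norm under composition then yields the desired $E_N(\Theta^{\operatorname{PPT}}(\mathcal{M})) \leq 0$.

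With these two ingredients in hand, I would conclude via the chain
\begin{align*}
\mathbf{R}(\Theta^{\operatorname{PPT}}(\mathcal{N}))
&= \inf_{\mathcal{M}': E_N(\mathcal{M}') \leq 0} \mathbf{D}(\Theta^{\operatorname{PPT}}(\mathcal{N}) \,\|\, \mathcal{M}') \\
&\leq \inf_{\mathcal{M}: E_N(\mathcal{M}) \leq 0} \mathbf{D}(\Theta^{\operatorname{PPT}}(\mathcal{N}) \,\|\, \Theta^{\operatorname{PPT}}(\mathcal{M})) \\
&\leq \inf_{\mathcal{M}: E_N(\mathcal{M}) \leq 0} \mathbf{D}(\mathcal{N} \,\|\, \mathcal{M})
\;=\; \mathbf{R}(\mathcal{N}),
\end{align*}
where the first inequality uses that the family $\{\Theta^{\operatorname{PPT}}(\mathcal{M}) : E_N(\mathcal{M})\leq 0\}$ is contained in the feasible set of the preceding infimum (by the second paragraph), and the second inequality is the channel-level DPI applied to $\mathcal{A} = \mathcal{P}^{\text{pre}}$ and $\mathcal{B} = \mathcal{P}^{\text{post}}$. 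The main obstacle is the verification in the second paragraph, namely carefully tracking the partial transposes through the structure of the PPT superchannel; once that is in place, the remainder is bookkeeping with the definition of $\mathbf{R}$.
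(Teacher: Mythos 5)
Your proposal is correct and follows essentially the same route as the paper, which sketches the proof as a consequence of the definition of $\mathbf{R}$, the data-processing property of the generalized channel divergence, and the structure of PPT superchannels (mapping the feasible set $\{\mathcal{M}: E_N(\mathcal{M})\leq 0\}$ into itself). Your write-up simply makes explicit the bookkeeping with the memory systems $A_M B_M$ and the partial transposes that the paper leaves to the reference.
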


\begin{proof} The proof is similar to Theorem~10 of \cite{WWS19channels}.
Follows from the definition of $\mathbf{R}(\mathcal{N})$, its data processing property, and the structure of PPT superchannels.
\end{proof}

\begin{proposition}
The max-Rains information of the bipartite channel
$\mathcal{N}_{A^{\prime}B^{\prime}\rightarrow AB}$ can be written as%
\begin{equation}
\log\inf\left\{  \left\Vert T_{B}\circ\mathcal{M}_{A^{\prime}B^{\prime
}\rightarrow AB}\circ T_{B^{\prime}}\right\Vert _{\Diamond}:J^{\mathcal{N}%
}\leq J^{\mathcal{M}}\right\}  ,
\end{equation}
where the minimization is with respect to all completely positive bipartite
maps $\mathcal{M}_{A^{\prime}B^{\prime}\rightarrow AB}$.
\end{proposition}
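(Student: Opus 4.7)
The plan is to prove the identity by a rescaling argument that exchanges the two constraints: the ``$E_N(\mathcal{M})\leq 0$'' constraint together with an optimization of $D_{\max}$ on one side, and an unconstrained diamond norm of $T_B\circ\mathcal{M}'\circ T_{B'}$ (subject only to $J^{\mathcal{N}}\leq J^{\mathcal{M}'}$) on the other side. The essential tool is the standard characterization of the max-divergence of completely positive maps from \cite{CMW14},
\begin{equation}
D_{\max}(\mathcal{N}\Vert\mathcal{M})\;=\;\log\inf\{\lambda\geq 0:\ J^{\mathcal{N}}\leq \lambda\, J^{\mathcal{M}}\},
\end{equation}
which I will take as given. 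I would also recall that the diamond norm scales linearly: $\|\alpha\,\mathcal{P}\|_{\Diamond}=\alpha\|\mathcal{P}\|_{\Diamond}$ for $\alpha\geq 0$, and that composition with the (involutive) partial transposes on the input and output does not affect the Choi-level scaling, i.e., $J^{\alpha\mathcal{M}}=\alpha J^{\mathcal{M}}$.

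First I would prove the ``$\leq$'' direction. Fix any completely positive map $\mathcal{M}'_{A'B'\to AB}$ feasible for the right-hand side, i.e., $J^{\mathcal{N}}\leq J^{\mathcal{M}'}$, and set $\mu := \|T_B\circ\mathcal{M}'\circ T_{B'}\|_{\Diamond}$. If $\mu=0$ then $\mathcal{M}'=0$ and the condition forces $J^{\mathcal{N}}=0$, a trivial case. Otherwise put $\mathcal{M}:=\mathcal{M}'/\mu$; then $\|T_B\circ\mathcal{M}\circ T_{B'}\|_{\Diamond}=1$, so $E_N(\mathcal{M})=0$ and $\mathcal{M}$ is feasible for the definition of $R_{\max}(\mathcal{N})$. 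Moreover $J^{\mathcal{N}}\leq J^{\mathcal{M}'}=\mu J^{\mathcal{M}}$, so by the max-divergence identity $D_{\max}(\mathcal{N}\Vert\mathcal{M})\leq\log\mu$. Infimizing over $\mathcal{M}'$ yields $R_{\max}(\mathcal{N})\leq\log\inf\{\|T_B\circ\mathcal{M}'\circ T_{B'}\|_{\Diamond}:J^{\mathcal{N}}\leq J^{\mathcal{M}'}\}$.

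For the ``$\geq$'' direction I would reverse the rescaling. Take any $\mathcal{M}_{A'B'\to AB}$ feasible in the definition of $R_{\max}(\mathcal{N})$, so $\|T_B\circ\mathcal{M}\circ T_{B'}\|_{\Diamond}\leq 1$, and let $\lambda := 2^{D_{\max}(\mathcal{N}\Vert\mathcal{M})}$. By the identity above $J^{\mathcal{N}}\leq \lambda J^{\mathcal{M}}=J^{\lambda\mathcal{M}}$, so $\mathcal{M}':=\lambda\mathcal{M}$ is feasible on the right-hand side and satisfies $\|T_B\circ\mathcal{M}'\circ T_{B'}\|_{\Diamond}=\lambda\|T_B\circ\mathcal{M}\circ T_{B'}\|_{\Diamond}\leq\lambda$. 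Taking logarithms and infimizing gives the reverse inequality.

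I do not anticipate a serious obstacle: once the two directions are handled symmetrically via the $\alpha\mathcal{M}$ substitution, everything collapses to the single identity $D_{\max}(\mathcal{N}\Vert\mathcal{M})=\log\inf\{\lambda:J^{\mathcal{N}}\leq\lambda J^{\mathcal{M}}\}$. The only point requiring a sentence of care is the degenerate case $\mu=0$ (or equivalently $\mathcal{M}=0$), which is handled as above. A brief remark that the infimum may not be attained (because the constraint set is not compact in general) is worth including, but does not affect the equality of infima.
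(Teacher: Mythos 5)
Your proof is correct and follows essentially the same route as the paper: both arguments rest on the Choi-operator characterization $D_{\max}(\mathcal{N}\Vert\mathcal{M})=\log\inf\{\lambda:J^{\mathcal{N}}\leq\lambda J^{\mathcal{M}}\}$ combined with the rescaling $\mathcal{M}\mapsto\lambda\mathcal{M}$ that trades the normalization constraint $\left\Vert T_{B}\circ\mathcal{M}\circ T_{B^{\prime}}\right\Vert_{\Diamond}\leq1$ for the objective. The paper presents this as a single chain of equalities rather than two separate inequality directions, but the content is the same, and your explicit handling of the degenerate case is a harmless addition.
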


\begin{proof}
This follows because%
\begin{align}
&  \inf_{\mathcal{M}_{A^{\prime}B^{\prime}\rightarrow AB}:E_{N}(\mathcal{M}%
)\leq0}D_{\max}(\mathcal{N}\Vert\mathcal{M})\\
&  =\log\inf_{\lambda,\mathcal{M}_{A^{\prime}B^{\prime}\rightarrow AB}%
:E_{N}(\mathcal{M})\leq0}\left\{  \lambda:J^{\mathcal{N}}\leq\lambda
J^{\mathcal{M}}\right\}  \\
&  =\log\inf\left\{  \lambda:J^{\mathcal{N}}\leq\lambda J^{\mathcal{M}%
},\ \left\Vert T_{B}\circ\mathcal{M}_{A^{\prime}B^{\prime}\rightarrow AB}\circ
T_{B^{\prime}}\right\Vert _{\Diamond}\leq1\right\}  \\
&  =\log\inf\left\{  \lambda:J^{\mathcal{N}}\leq\lambda J^{\mathcal{M}%
},\ \left\Vert T_{B}\circ\lambda\mathcal{M}_{A^{\prime}B^{\prime}\rightarrow
AB}\circ T_{B^{\prime}}\right\Vert _{\Diamond}\leq\lambda\right\}  \\
&  =\log\inf\left\{  \lambda:J^{\mathcal{N}}\leq J^{\mathcal{M}},\ \left\Vert
T_{B}\circ\mathcal{M}_{A^{\prime}B^{\prime}\rightarrow AB}\circ T_{B^{\prime}%
}\right\Vert _{\Diamond}\leq\lambda\right\}  \\
&  =\log\inf\left\{  \left\Vert T_{B}\circ\mathcal{M}_{A^{\prime}B^{\prime
}\rightarrow AB}\circ T_{B^{\prime}}\right\Vert _{\Diamond}:J^{\mathcal{N}%
}\leq J^{\mathcal{M}}\right\}  ,
\end{align}
concluding the proof.
\end{proof}

\begin{proposition}
The max-Rains information of a bipartite channel $\mathcal{N}%
_{A^{\prime}B^{\prime}\rightarrow AB}$\ can be expressed as the following
semi-definite program:%
\begin{equation}
R_{\max}(\mathcal{N})=\log\inf\left\{
\begin{array}
[c]{c}%
\left\Vert \operatorname{Tr}_{AB}[V_{S_{A}ABS_{B}}+Y_{S_{A}ABS_{B}%
}]\right\Vert _{\infty}:V_{S_{A}ABS_{B}},Y_{S_{A}ABS_{B}}\geq0,\\
T_{BS_{B}}(V_{S_{A}ABS_{B}}-Y_{S_{A}ABS_{B}})\geq J_{S_{A}ABS_{B}%
}^{\mathcal{N}}%
\end{array}
\right\}  ,
\end{equation}
and is thus equivalent to the definition presented in \cite{DBW17}. The dual
SDP\ is given by%
\begin{equation}
R_{\max}(\mathcal{N})=\log\sup\left\{
\begin{array}
[c]{c}%
\operatorname{Tr}[J_{S_{A}ABS_{B}}^{\mathcal{N}}X_{S_{A}ABS_{B}}%
]:X_{S_{A}ABS_{B}},\rho_{S_{A}SB}\geq0,\operatorname{Tr}[\rho_{S_{A}S_{B}%
}=1],\\
-\rho_{S_{A}S_{B}}\otimes I_{AB}\leq T_{BS_{B}}(X_{S_{A}ABS_{B}})\leq
\rho_{S_{A}S_{B}}\otimes I_{AB}%
\end{array}
\right\}  ,
\end{equation}
which coincides with what was presented in \cite{DBW17}.
\end{proposition}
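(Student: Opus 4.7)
The plan is to combine the reformulation of $R_{\max}(\mathcal{N})$ given in the preceding proposition with the primal SDP for the logarithmic negativity of a bipartite channel derived two propositions back. Concretely, I would start from
\begin{equation*}
R_{\max}(\mathcal{N})=\log\inf\left\{\left\Vert T_{B}\circ\mathcal{M}_{A^{\prime}B^{\prime}\rightarrow AB}\circ T_{B^{\prime}}\right\Vert_{\Diamond}:J^{\mathcal{N}}\leq J^{\mathcal{M}}\right\},
\end{equation*}
where $\mathcal{M}$ ranges over completely positive bipartite maps, and substitute the primal SDP characterization
\begin{equation*}
\left\Vert T_{B}\circ\mathcal{M}\circ T_{B^{\prime}}\right\Vert_{\Diamond}=\inf\left\{\left\Vert\Tr_{AB}[V_{S_{A}ABS_{B}}+Y_{S_{A}ABS_{B}}]\right\Vert_{\infty}:V,Y\geq 0,\ T_{BS_{B}}(V-Y)=J^{\mathcal{M}}\right\}
\end{equation*}
obtained earlier, after passing through the logarithm.

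The key step is to collapse the two nested infima into one by treating $J^{\mathcal{M}}$ as a variable determined by $V$ and $Y$. Given $V,Y\geq 0$ with $T_{BS_{B}}(V-Y)\geq J^{\mathcal{N}}$, setting $J^{\mathcal{M}}:=T_{BS_{B}}(V-Y)$ produces $J^{\mathcal{M}}\geq J^{\mathcal{N}}\geq 0$, so $\mathcal{M}$ is automatically completely positive and feasible for the outer infimum; conversely, any triple $(V,Y,\mathcal{M})$ feasible for the nested problem satisfies $T_{BS_{B}}(V-Y)=J^{\mathcal{M}}\geq J^{\mathcal{N}}$. This eliminates $\mathcal{M}$ and delivers the claimed primal SDP in one stroke.

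For the dual, I would invoke standard Lagrangian SDP duality: introduce a nonnegative multiplier $X_{S_{A}ABS_{B}}\geq 0$ for the inequality $T_{BS_{B}}(V-Y)\geq J^{\mathcal{N}}$, handle the objective via the variational identity $\|\Tr_{AB}[V+Y]\|_{\infty}=\sup\{\Tr[(V+Y)(\rho_{S_{A}S_{B}}\otimes I_{AB})]:\rho_{S_{A}S_{B}}\geq 0,\ \Tr[\rho_{S_{A}S_{B}}]=1\}$, and exploit self-adjointness of $T_{BS_{B}}$ together with the stationarity conditions in $V$ and $Y$ to extract the symmetric pair of constraints $-\rho_{S_{A}S_{B}}\otimes I_{AB}\leq T_{BS_{B}}(X_{S_{A}ABS_{B}})\leq\rho_{S_{A}S_{B}}\otimes I_{AB}$. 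Slater's condition is immediate (take $V,Y$ to be sufficiently large multiples of the identity), so strong duality holds and the dual matches the form in \cite{DBW17}.

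I do not expect a substantive obstacle. The main bookkeeping concern is carefully tracking the different partial transposes $T_{B}$, $T_{B^{\prime}}$, and $T_{BS_{B}}$ arising from expanding the diamond norm, and verifying that the positivity constraint $J^{\mathcal{M}}\geq 0$ implicit in $\mathcal{M}$ being completely positive is indeed absorbed automatically by $J^{\mathcal{N}}\leq T_{BS_{B}}(V-Y)$, so that it need not appear as an additional constraint in the combined SDP.
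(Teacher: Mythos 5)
Your proposal is correct and follows essentially the same route as the paper: substitute the SDP characterization of $\left\Vert T_{B}\circ\mathcal{M}\circ T_{B^{\prime}}\right\Vert_{\Diamond}$ into the divergence reformulation of $R_{\max}(\mathcal{N})$, eliminate the redundant variable $J^{\mathcal{M}}$ (your observation that $J^{\mathcal{M}}=T_{BS_{B}}(V-Y)\geq J^{\mathcal{N}}\geq 0$ automatically guarantees complete positivity is exactly the point the paper leaves implicit), and obtain the dual by standard SDP duality. The only cosmetic slip is calling the $\left\Vert \operatorname{Tr}_{AB}[V+Y]\right\Vert_{\infty}$ formulation the ``primal'' SDP for the logarithmic negativity, whereas the earlier proposition labels it the dual; this does not affect the argument.
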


\begin{proof}
Consider that%
\begin{align}
&  \inf_{\mathcal{M}_{A^{\prime}B^{\prime}\rightarrow AB}:E_{N}(\mathcal{M}%
)\leq0}D_{\max}(\mathcal{N}\Vert\mathcal{M})\\
&  =\log\inf\left\{  \left\Vert T_{B}\circ\mathcal{M}_{A^{\prime}B^{\prime
}\rightarrow AB}\circ T_{B^{\prime}}\right\Vert _{\Diamond}:J^{\mathcal{N}%
}\leq J^{\mathcal{M}}\right\}  \\
&  =\log\inf\left\{
\begin{array}
[c]{c}%
\left\Vert \operatorname{Tr}_{AB}[V_{S_{A}ABS_{B}}+Y_{S_{A}ABS_{B}%
}]\right\Vert _{\infty}:J^{\mathcal{N}}\leq J^{\mathcal{M}},\ V_{S_{A}ABS_{B}%
},Y_{S_{A}ABS_{B}}\geq0,\\
T_{BS_{B}}(V_{S_{A}ABS_{B}}-Y_{S_{A}ABS_{B}})=J_{S_{A}ABS_{B}}^{\mathcal{M}}%
\end{array}
\right\}  \\
&  =\log\inf\left\{
\begin{array}
[c]{c}%
\left\Vert \operatorname{Tr}_{AB}[V_{S_{A}ABS_{B}}+Y_{S_{A}ABS_{B}%
}]\right\Vert _{\infty}:V_{S_{A}ABS_{B}},Y_{S_{A}ABS_{B}}\geq0,\\
T_{BS_{B}}(V_{S_{A}ABS_{B}}-Y_{S_{A}ABS_{B}})\geq J_{S_{A}ABS_{B}%
}^{\mathcal{N}}%
\end{array}
\right\}  ,
\end{align}
where the last equality follows from eliminating the redundant variable
$J^{\mathcal{M}}$. The dual formulation follows from standard techniques of
semi-definite programming duality.
\end{proof}

\begin{proposition}
[Reduction to states]Let $\mathcal{N}_{A^{\prime}B^{\prime}\rightarrow AB}$ be
a bipartite replacer channel, having the following action on an arbitrary
input state $\rho_{A^{\prime}B^{\prime}}$:%
\begin{equation}
\mathcal{N}_{A^{\prime}B^{\prime}\rightarrow AB}(\rho_{A^{\prime}B^{\prime}%
})=\operatorname{Tr}[\rho_{A^{\prime}B^{\prime}}]\omega_{AB},
\end{equation}
where $\omega_{AB}$ is some state. Then%
\begin{equation}
E_{N}(\mathcal{N})=E_{N}(\omega_{AB}),\ \ \ \mathbf{R}(\mathcal{N}%
)=\mathbf{R}(\omega_{AB}).
\end{equation}

\end{proposition}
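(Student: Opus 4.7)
The plan is to establish each equality by bounding both directions, with the $E_N$ identity handled by a direct diamond-norm calculation and the $\mathbf{R}$ identity handled by choosing a replacer map in the upper bound and a product-state input in the lower bound.

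For $E_N(\mathcal{N}) = E_N(\omega_{AB})$, I would compute the diamond norm appearing in \eqref{eq:log-neg-channel} directly. Since $T_{B'}$ is trace-preserving on every Hermitian input, we have $(\mathcal{N}\circ T_{B'})(\rho_{A'B'}) = \operatorname{Tr}[\rho_{A'B'}]\,\omega_{AB}$, so $(T_B\circ\mathcal{N}\circ T_{B'})(\rho_{A'B'}) = \operatorname{Tr}[\rho_{A'B'}]\,T_B(\omega_{AB})$. Extending by the identity and applying to any pure bipartite input $\psi_{S_A A'B'S_B}$ yields $\psi_{S_A S_B} \otimes T_B(\omega_{AB})$, whose trace norm equals $\|T_B(\omega_{AB})\|_1$ independently of $\psi$. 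Taking the supremum over inputs and then the logarithm gives $E_N(\mathcal{N}) = \log\|T_B(\omega_{AB})\|_1 = E_N(\omega_{AB})$.

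For the Rains identity, the upper bound $\mathbf{R}(\mathcal{N}) \leq \mathbf{R}(\omega_{AB})$ is obtained by restricting the infimum to replacer maps: for any $\sigma_{AB} \geq 0$ with $E_N(\sigma_{AB}) \leq 0$, the CP map $\mathcal{M}(\rho) = \operatorname{Tr}[\rho]\,\sigma_{AB}$ satisfies $E_N(\mathcal{M}) \leq 0$ by the computation above, and for any input $\psi_{RA'B'}$ we have $\mathcal{N}(\psi) = \psi_R\otimes\omega_{AB}$ and $\mathcal{M}(\psi) = \psi_R\otimes\sigma_{AB}$. The invariance property \eqref{eq:gen-div-prod} of a generalized divergence then gives $\mathbf{D}(\mathcal{N}\|\mathcal{M}) = \mathbf{D}(\omega_{AB}\|\sigma_{AB})$, and optimizing over $\sigma_{AB}$ yields the desired inequality. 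For the matching lower bound, I would fix any CP map $\mathcal{M}_{A'B'\to AB}$ with $E_N(\mathcal{M}) \leq 0$, pick an arbitrary product input state $\tau_{A'}\otimes\tau_{B'}$, and define $\sigma_{AB} := \mathcal{M}(\tau_{A'}\otimes\tau_{B'})$. Writing $T_B(\sigma_{AB}) = (T_B\circ\mathcal{M}\circ T_{B'})\bigl(T_{B'}(\tau_{A'}\otimes\tau_{B'})\bigr)$ and noting that $T_{B'}(\tau_{A'}\otimes\tau_{B'}) = \tau_{A'}\otimes\tau_{B'}^{T}$ is again a valid state, the standard bound $\|\Phi(\rho)\|_1 \leq \|\Phi\|_\diamond$ for a Hermitian-preserving map $\Phi$ on a state $\rho$ yields $\|T_B(\sigma_{AB})\|_1 \leq 1$, i.e., $E_N(\sigma_{AB}) \leq 0$. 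Data processing for the generalized channel divergence, realized by plugging the input $\tau_{A'}\otimes\tau_{B'}$ into the channel-divergence supremum, then gives $\mathbf{D}(\mathcal{N}\|\mathcal{M}) \geq \mathbf{D}(\omega_{AB}\|\sigma_{AB}) \geq \mathbf{R}(\omega_{AB})$, and infimizing over $\mathcal{M}$ closes the argument.

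The main technical subtlety will be the verification that $\sigma_{AB}$ lies in $\operatorname{PPT}'$ so that it is admissible for $\mathbf{R}(\omega_{AB})$: this is precisely where using a product (rather than entangled) probe input matters, because only for inputs with $\|T_{B'}(\cdot)\|_1 \leq 1$ does the diamond-norm inequality yield the required bound on $\|T_B(\sigma_{AB})\|_1$. The remainder of the proof is bookkeeping with the definitions and with the general properties of generalized divergences already collected in the paper.
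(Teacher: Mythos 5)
Your proposal is correct and follows essentially the same route as the paper: the $E_N$ identity by direct evaluation of the diamond norm on the replacer channel, the upper bound on $\mathbf{R}(\mathcal{N})$ by restricting to replacer maps $\mathcal{M}(\rho)=\operatorname{Tr}[\rho]\sigma_{AB}$ together with \eqref{eq:gen-div-prod}, and the lower bound by feeding a product input into an arbitrary feasible $\mathcal{M}$ and checking that the output lies in $\operatorname{PPT}'$ via the diamond-norm constraint (the paper uses the maximally mixed input $\pi_{A'B'}$ where you use a general $\tau_{A'}\otimes\tau_{B'}$, an immaterial difference). If anything, your explicit verification that $\left\Vert T_{B}(\mathcal{M}(\tau_{A'}\otimes\tau_{B'}))\right\Vert _{1}\leq1$ spells out the step the paper compresses into its ``all and only'' remark.
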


\begin{proof}
For the negativity, this follows because%
\begin{align}
E_{N}(\mathcal{N}) &  =\log\sup_{\psi_{S_{A}A^{\prime}B^{\prime}S_{B}}%
}\left\Vert (T_{B}\circ\mathcal{N}_{A^{\prime}B^{\prime}\rightarrow AB}\circ
T_{B^{\prime}})(\psi_{S_{A}A^{\prime}B^{\prime}S_{B}})\right\Vert _{1}\\
&  =\log\sup_{\psi_{S_{A}A^{\prime}B^{\prime}S_{B}}}\left\Vert (T_{B}%
\circ\left(  \operatorname{Tr}_{A^{\prime}B^{\prime}}[T_{B^{\prime}}%
(\psi_{S_{A}A^{\prime}B^{\prime}S_{B}})]\omega_{AB}\right)  \right\Vert _{1}\\
&  =\log\sup_{\psi_{S_{A}A^{\prime}B^{\prime}S_{B}}}\left\Vert (T_{B}%
\circ\left(  \operatorname{Tr}_{A^{\prime}B^{\prime}}[\psi_{S_{A}A^{\prime
}B^{\prime}S_{B}}]\omega_{AB}\right)  \right\Vert _{1}\\
&  =\log\sup_{\psi_{S_{A}A^{\prime}B^{\prime}S_{B}}}\left\Vert T_{B}%
(\omega_{AB})\otimes\psi_{S_{A}S_{B}}\right\Vert _{1}\\
&  =\log\left\Vert T_{B}(\omega_{AB})\right\Vert _{1}\\
&  =E_{N}(\omega_{AB}).
\end{align}

For the other equality, denoting the maximally mixed state by $\pi$, consider
that
\begin{align}
\mathbf{R}(\mathcal{N}) &  =\inf_{\mathcal{M}:E_{N}(\mathcal{M})\leq0}%
\sup_{\psi_{S_{A}A^{\prime}B^{\prime}S_{B}}}\mathbf{D}((\operatorname{id}%
_{R}\otimes\mathcal{N})(\psi_{S_{A}A^{\prime}B^{\prime}S_{B}})\Vert
(\operatorname{id}_{R}\otimes\mathcal{M})(\psi_{S_{A}A^{\prime}B^{\prime}%
S_{B}}))\\
&  \geq\inf_{\mathcal{M}:E_{N}(\mathcal{M})\leq0}\mathbf{D}(\pi_{S_{A}S_{B}%
}\otimes\mathcal{N}(\pi_{A^{\prime}B^{\prime}})\Vert\pi_{S_{A}S_{B}}%
\otimes\mathcal{M}(\pi_{A^{\prime}B^{\prime}}))\\
&  =\inf_{\mathcal{M}:E_{N}(\mathcal{M})\leq0}\mathbf{D}(\omega_{AB}%
\Vert\mathcal{M}(\pi_{A^{\prime}B^{\prime}}))\\
&  =\inf_{\tau_{AB}:E_{N}(\tau_{AB})\leq0}\mathbf{D}(\omega_{AB}\Vert\tau
_{AB})\\
&  =\mathbf{R}(\sigma).
\end{align}
The first equality follows from the definition. The inequality follows by
choosing the input state suboptimally to be $\pi_{R}\otimes\pi_{A}$. The
second equality follows because the max-relative entropy is invariant with
respect to tensoring in the same state for both arguments. The third equality
follows because $\pi_{A^{\prime}B^{\prime}}$ is a free state in $\left\{
\tau_{AB}\geq0,E_{N}(\tau_{AB})\leq0\right\}  $ and $\mathcal{M}$ is a
completely positive map with $E_{N}(\mathcal{M})\leq0$. Since one can reach
all and only the operators in $\left\{  \tau_{AB}\geq0,E_{N}(\tau_{AB}%
)\leq0\right\}  $, the equality follows. Then the last equality follows from
the definition. To see the other inequality, consider that $\mathcal{M}%
_{A^{\prime}B^{\prime}\rightarrow AB}(\rho_{A^{\prime}B^{\prime}%
})=\operatorname{Tr}[\rho_{A^{\prime}B^{\prime}}]\tau_{AB}$, for $\tau_{AB}%
\in\left\{  \tau_{AB}\geq0,E_{N}(\tau_{AB})\leq0\right\}  $, is a particular
completely positive map satisfying $E_{N}(\mathcal{M})=E_{N}(\omega)\leq0$, so
that
\begin{align}
\mathbf{R}(\mathcal{N}) &  =\inf_{\mathcal{M}:E_{N}(\mathcal{M})\leq0}%
\sup_{\psi_{RA}}\mathbf{D}((\operatorname{id}_{R}\otimes\mathcal{N}%
)(\psi_{S_{A}A^{\prime}B^{\prime}S_{B}})\Vert(\operatorname{id}_{R}%
\otimes\mathcal{M})(\psi_{S_{A}A^{\prime}B^{\prime}S_{B}}))\\
&  \leq\inf_{\omega:E_{N}(\omega)\leq0}\mathbf{D}(\psi_{S_{A}S_{B}}%
\otimes\omega_{AB}\Vert\psi_{S_{A}S_{B}}\otimes\tau_{AB})\\
&  =\inf_{\tau_{AB}:E_{N}(\tau_{AB})\leq0}\mathbf{D}(\omega_{AB}\Vert\tau
_{AB})\\
&  =\mathbf{R}(\omega_{AB}).
\end{align}
This concludes the proof.
\end{proof}

\begin{proposition}
[Subadditivity]The max-Rains information of a bipartite channel is subadditive
with respect to serial composition, in the following sense:%
\begin{equation}
R_{\max}(\mathcal{N}_{2}\circ\mathcal{N}_{1})\leq R_{\max}(\mathcal{N}%
_{1})+R_{\max}(\mathcal{N}_{2}).
\end{equation}

\end{proposition}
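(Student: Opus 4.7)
The plan is to use the variational definition of the max-Rains information directly. Fix $\varepsilon > 0$ and choose completely positive bipartite maps $\mathcal{M}_1$ and $\mathcal{M}_2$ that are $\varepsilon$-optimal in the infimum defining $R_{\max}(\mathcal{N}_1)$ and $R_{\max}(\mathcal{N}_2)$, respectively; that is, $E_N(\mathcal{M}_i) \leq 0$ and $D_{\max}(\mathcal{N}_i \Vert \mathcal{M}_i) \leq R_{\max}(\mathcal{N}_i) + \varepsilon$ for $i \in \{1,2\}$. The natural candidate for $\mathcal{N}_2 \circ \mathcal{N}_1$ is the composition $\mathcal{M}_2 \circ \mathcal{M}_1$, and the proof reduces to verifying two properties of this candidate.

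The first step is to show that $E_N(\mathcal{M}_2 \circ \mathcal{M}_1) \leq 0$, so that $\mathcal{M}_2 \circ \mathcal{M}_1$ is feasible for the infimum defining $R_{\max}(\mathcal{N}_2 \circ \mathcal{N}_1)$. Using the involution $T_{B_1} \circ T_{B_1} = \operatorname{id}$ on the intermediate $B$-system, I would factor
\begin{equation}
T_B \circ (\mathcal{M}_2 \circ \mathcal{M}_1) \circ T_{B'} \;=\; (T_B \circ \mathcal{M}_2 \circ T_{B_1}) \circ (T_{B_1} \circ \mathcal{M}_1 \circ T_{B'}),
\end{equation}
and apply submultiplicativity of the diamond norm to conclude $\|T_B \circ (\mathcal{M}_2 \circ \mathcal{M}_1) \circ T_{B'}\|_\Diamond \leq 1 \cdot 1 = 1$, i.e., $E_N(\mathcal{M}_2 \circ \mathcal{M}_1) \leq 0$.

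The second step is to establish subadditivity of the channel max-divergence under composition:
\begin{equation}
D_{\max}(\mathcal{N}_2 \circ \mathcal{N}_1 \Vert \mathcal{M}_2 \circ \mathcal{M}_1) \;\leq\; D_{\max}(\mathcal{N}_1 \Vert \mathcal{M}_1) + D_{\max}(\mathcal{N}_2 \Vert \mathcal{M}_2).
\end{equation}
Here I would use that for completely positive maps, $D_{\max}(\mathcal{E} \Vert \mathcal{F}) \leq \lambda$ is equivalent to the operator inequality $J^{\mathcal{E}} \leq 2^\lambda J^{\mathcal{F}}$, or equivalently to $2^\lambda \mathcal{F} - \mathcal{E}$ being completely positive. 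Setting $\lambda_i = D_{\max}(\mathcal{N}_i \Vert \mathcal{M}_i)$, I would chain the inequalities $\mathcal{N}_1 \leq 2^{\lambda_1} \mathcal{M}_1$ and $\mathcal{N}_2 \leq 2^{\lambda_2} \mathcal{M}_2$ by using first that $\mathcal{M}_2$ is positive to obtain $\mathcal{M}_2 \circ \mathcal{N}_1 \leq 2^{\lambda_1} \mathcal{M}_2 \circ \mathcal{M}_1$, and then applying the second inequality on the positive output of $\mathcal{N}_1$ (tensored with identity on any reference) to get $\mathcal{N}_2 \circ \mathcal{N}_1 \leq 2^{\lambda_2} \mathcal{M}_2 \circ \mathcal{N}_1$. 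Combining yields the claim.

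Putting the two steps together, $\mathcal{M}_2 \circ \mathcal{M}_1$ is a feasible candidate, so
\begin{equation}
R_{\max}(\mathcal{N}_2 \circ \mathcal{N}_1) \;\leq\; D_{\max}(\mathcal{N}_2 \circ \mathcal{N}_1 \Vert \mathcal{M}_2 \circ \mathcal{M}_1) \;\leq\; R_{\max}(\mathcal{N}_1) + R_{\max}(\mathcal{N}_2) + 2\varepsilon,
\end{equation}
and sending $\varepsilon \to 0$ concludes the proof. The only mildly delicate point is the first step, since I must confirm that submultiplicativity of $\|\cdot\|_\Diamond$ applies to the Hermitian-preserving (not necessarily CP) maps $T_B \circ \mathcal{M}_i \circ T$; this is fine because the diamond norm is submultiplicative for arbitrary bounded linear maps between matrix algebras. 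Everything else is a routine chaining of inequalities.
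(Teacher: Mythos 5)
Your proof is correct and is essentially the argument the paper has in mind: the paper gives no details, deferring to the methods of \cite{WWS19channels}, and those methods are precisely the two steps you carry out --- feasibility of the composed free map $\mathcal{M}_2\circ\mathcal{M}_1$ via submultiplicativity of the diamond norm together with the involution $T_{B_1}\circ T_{B_1}=\operatorname{id}$ on the intermediate system, and chaining of the $D_{\max}$ conditions by writing $2^{\lambda_1+\lambda_2}\mathcal{M}_2\circ\mathcal{M}_1-\mathcal{N}_2\circ\mathcal{N}_1$ as a sum of compositions of completely positive maps. Your write-up also correctly handles the only delicate points, namely working with $\varepsilon$-optimal rather than optimal maps and noting that the pure-state form of the diamond norm used in the paper agrees with the standard submultiplicative one on Hermiticity-preserving maps.
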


\begin{proof}
Straightforward and based on methods employed in \cite{WWS19channels}.
\end{proof}

\begin{proposition}
[Faithfulness]The generalized Rains information of a bipartite channel
$\mathcal{N}_{A^{\prime}B^{\prime}\rightarrow AB}$\ obeys the following
faithfulness condition:%
\begin{equation}
\mathbf{R}(\mathcal{N})\geq0\text{ and }\mathbf{R}(\mathcal{N})=0\text{ if and
only if }\mathcal{N}\in\text{\textrm{C-PPT-P},}%
\end{equation}
if the underlying generalized channel divergence obeys the strong faithfulness
condition of \cite{BHKW18}.
\end{proposition}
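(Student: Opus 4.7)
The plan is to follow the standard three-step schema for faithfulness of an infimum-type divergence measure, using the already-established faithfulness of $E_N$ (proved earlier) together with the hypothesized strong faithfulness of the underlying generalized channel divergence $\mathbf{D}$.

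First I would establish non-negativity. The strong faithfulness condition of \cite{BHKW18} in particular implies the basic faithfulness $\mathbf{D}(\mathcal{N}\Vert\mathcal{M})\geq 0$ for all pairs of completely positive maps. Taking the infimum of a non-negative quantity over any nonempty feasible set preserves the bound, so $\mathbf{R}(\mathcal{N})\geq 0$. (The feasible set is nonempty because any C-PPT-P channel, e.g., the bipartite replacer channel to a separable state, lies in it.)

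Next I would handle the easy direction: suppose $\mathcal{N}\in$ C-PPT-P. By the faithfulness proposition already proved for $E_N$, we have $E_N(\mathcal{N})=0$, so $\mathcal{N}$ itself is feasible in the infimum defining $\mathbf{R}(\mathcal{N})$. Since $\mathbf{D}(\mathcal{N}\Vert\mathcal{N})=0$ (the trivial half of strong faithfulness), choosing $\mathcal{M}=\mathcal{N}$ yields $\mathbf{R}(\mathcal{N})\leq 0$. Combined with non-negativity this gives $\mathbf{R}(\mathcal{N})=0$.

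The main obstacle, and the step where the strong faithfulness hypothesis really does work, is the converse: if $\mathbf{R}(\mathcal{N})=0$ then $\mathcal{N}\in$ C-PPT-P. I would argue as follows. Pick a minimizing sequence $\{\mathcal{M}_k\}$ of completely positive bipartite maps from $A'B'$ to $AB$ with $E_N(\mathcal{M}_k)\leq 0$ and $\mathbf{D}(\mathcal{N}\Vert\mathcal{M}_k)\to 0$. The strong faithfulness property of $\mathbf{D}$ (as formulated in \cite{BHKW18}) ensures that $\mathbf{D}(\mathcal{N}\Vert\mathcal{M}_k)\to 0$ forces $\mathcal{M}_k\to\mathcal{N}$ in diamond norm (or, at worst, passing to a subsequence that converges by compactness of the set of trace-non-increasing completely positive maps in finite dimensions). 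Now $E_N(\mathcal{M})=\log\Vert T_B\circ\mathcal{M}\circ T_{B'}\Vert_\diamond$ is continuous in $\mathcal{M}$ with respect to the diamond norm, so the set $\{\mathcal{M}:E_N(\mathcal{M})\leq 0\}$ is closed. Hence the limit $\mathcal{N}$ satisfies $E_N(\mathcal{N})\leq 0$, and since we already know $E_N(\mathcal{N})\geq 0$ from its faithfulness, we obtain $E_N(\mathcal{N})=0$, which by the equality case of faithfulness of $E_N$ means $\mathcal{N}\in$ C-PPT-P. The delicate point is really making precise the mode of convergence delivered by strong faithfulness and matching it to the continuity of $E_N$; the former is the content of the assumption, and the latter follows from standard properties of the diamond norm.
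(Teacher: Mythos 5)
Your argument is correct and is essentially the argument the paper has in mind: the paper's ``proof'' is only the one-line remark that the claim is straightforward and based on the methods of \cite{WWS19channels}, and those methods amount exactly to your three steps (non-negativity via data processing, the forward direction by choosing $\mathcal{M}=\mathcal{N}$ together with faithfulness of $E_N$, and the converse by extracting diamond-norm convergence from strong faithfulness and using closedness of the set $\{\mathcal{M}: E_N(\mathcal{M})\leq 0\}$ in finite dimensions). The only point worth tightening is that feasible maps $\mathcal{M}$ need not be trace preserving, so the state-level strong-faithfulness condition must be applied to possibly subnormalized outputs $\mathcal{M}(\psi)$ (for non-negativity one can simply restrict to product-state inputs, for which $E_N(\mathcal{M})\leq 0$ does force $\operatorname{Tr}[\mathcal{M}(\psi)]\leq 1$); this is a routine finite-dimensional technicality and does not affect the structure of your proof.
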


\begin{proof}
Straightforward and based on methods employed in \cite{WWS19channels}.
\end{proof}

\subsection{Upper bound on distillable entanglement of a bipartite channel}

The three propositions of faithfulness, subadditivity with respect to serial
compositions, and reduction to states leads to a different (perhaps simpler)
proof of the upper bound on distillable entanglement of a bipartite channel, other than that given previously \cite{DBW17}.
Such a protocol has a structure of the following form, preparing a state
$\omega$ at the end%
\begin{equation}
\omega=\mathcal{P}^{n+1}\circ\mathcal{N\circ P}^{n}\circ\cdots\circ
\mathcal{P}^{2}\circ\mathcal{N}\circ\mathcal{P}^{1},
\end{equation}
where the first channel $\mathcal{P}^{1}$\ prepares a PPT\ state. Then it
follows that%
\begin{align}
R_{\max}(\omega)  & =R_{\max}(\mathcal{P}^{n+1}\circ\mathcal{N\circ P}%
^{n}\circ\cdots\circ\mathcal{P}^{2}\circ\mathcal{N}\circ\mathcal{P}^{1})\\
& \leq\sum_{i=1}^{n+1}R_{\max}(\mathcal{P}^{i})+nR_{\max}(\mathcal{N})\\
& =nR_{\max}(\mathcal{N}).
\end{align}
The first equality follows from reduction to states, the inequality from subadditivity, and
the last equality from faithfulness.

The generalized Rains information of a bipartite
channel simplifies to the generalized Rains information of a point-to-point
channel, whenever $\mathcal{N}_{A^{\prime}B^{\prime}\rightarrow AB}$ is a
single-sender, single-receiver channel with trivial $B^{\prime}$ system and
trivial $A$ system. The above then leads to an alternate method of proof of
the main result of \cite{BW17}.

\subsection{$\kappa$-entanglement of bipartite quantum channels}\label{sec:ent-cost}

In this section, we define an entanglement measure $E_{\c}(\mc{N})$ of a bipartite quantum channel $\mc{N}_{A'B'\to AB}$ and show that it is not enhanced by amortization \cite{KW17}, meaning that $E_{\c}(\mc{N})$ is an upper bound on entangling power \cite{BHLS03}.
It is sensible that $E_{\c}(\mc{N})$ is an upper bound on the entanglement cost of a bipartite channel $\mc{N}$ and will be presented in future work. The proof approach follows by adapting to the bipartite setting, the result  from \cite{WW18}.

\begin{definition} The $\kappa$-entanglement 
$E_{\c}(A;B)_{\rho_{AB}}$ of a quantum state $\rho_{AB}$ is defined as  \cite{WW18}
\begin{equation}\label{eq:cost-w}
E_{\c}(A;B)_{\rho}\coloneqq \log W_{\c}(A;B)_{\rho},
\end{equation}
where $W_{\c}(A;B)_{\rho}$ is the solution to the following semidefinite program:
\begin{align}
\textnormal{minimize}\  & \ \Tr\{S_{AB}\}\nonumber\\
\textnormal{subject to}\ &\ S_{AB}\geq 0,\nonumber\\
&\ -S_{AB}^{\T_{B}}\leq \rho_{AB}^{\T_{B}}\leq S_{AB}^{\T_{B}}.
\label{eq:cost-alt}
\end{align}
\end{definition}

The following definition generalizes the $\kappa$-entanglement of a point-to-point channel \cite{WW18} to the bipartite setting.

\begin{definition}\label{def:bi-max-cost} 
The $\kappa$-entanglement $E_{\c}(\mc{N})$  of a bipartite quantum channel $\mc{N}_{A'B'\to AB}$ is defined as
\begin{equation}
E_{\c}(\mc{N})\coloneqq \log \Gamma_{\c} (\mc{N}), \label{eq:bi-max-cost-info}
\end{equation}
where $\Gamma_{\c}(\mc{N})$ is the solution to the following semi-definite program:
\begin{align}
\textnormal{minimize}\ &\ \norm{\Tr_{AB}\{Q_{L_A ABL_B}\}}_\infty\nonumber\\
\textnormal{subject to}\ &\ Q_{L_A ABL_B}\geq 0,\nonumber\\
&\ -Q_{L_A ABL_B}^{\T_{BL_B}}\leq \T_{BL_B}(J^\mc{N}_{L_AABL_B})\leq Q_{L_A ABL_B}^{\T_{BL_B}},
\label{eq:bi-cost-channel-sdp}
\end{align} 
where $L_A\simeq A'$ and $L_B\simeq B'$. 
\end{definition}

\begin{theorem}
[Monotonicity]
Let $\mathcal{M}_{\hat{A}'\hat{B}'\rightarrow\hat{A}\hat{B}
}$ be a bipartite quantum channel and  $\Theta^{\operatorname{PPT}}$ a completely-PPT-preserving superchannel of the form in \eqref{eq:superchannel-action}.
The channel measure $E_\kappa$ is monotone
under the action of the superchannel $\Theta^{\operatorname{PPT}}$, in the sense that%
\begin{equation}
E_\kappa(\mathcal{M}_{\hat{A}'\hat{B}'\rightarrow\hat{A}\hat{B}
}) 
\geq 
E_\kappa(\Theta^{\operatorname{PPT}}(\mathcal{M}_{\hat{A}'\hat{B}'\rightarrow\hat{A}\hat{B}
})).
\end{equation}

\end{theorem}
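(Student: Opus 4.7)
The plan is to prove monotonicity by an SDP-based argument that parallels the state-level monotonicity of $E_\kappa$ under completely PPT-preserving channels established in \cite{WW18}. Specifically, I will construct, from an arbitrary feasible solution $Q_{\hat L_A \hat A \hat B \hat L_B}$ of the SDP defining $E_\kappa(\mathcal{M})$ in \eqref{eq:bi-cost-channel-sdp}, an explicit feasible solution $Q'_{L_A ABL_B}$ for the SDP defining $E_\kappa(\Theta^{\text{PPT}}(\mathcal{M}))$ whose objective value is no larger. Taking the infimum over feasible $Q$ then yields the claimed inequality.

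First, I would express the Choi operator of the transformed channel using the superchannel structure:
\begin{equation}
J^{\Theta^{\text{PPT}}(\mathcal{M})}_{L_A ABL_B} = (\mathcal{P}^{\text{post}} \circ \mathcal{M} \circ \mathcal{P}^{\text{pre}})(\Phi_{L_A A'} \otimes \Phi_{B' L_B}),
\end{equation}
where $L_A \simeq A'$ and $L_B \simeq B'$. Using the post-selected teleportation identity \eqref{eq:choi-sim}, this can be rewritten as a linear contraction of the Choi operator $J^{\mathcal{M}}_{\hat L_A \hat A \hat B \hat L_B}$ with the pre-processed state $\omega_{L_A \hat A' A_M B_M \hat B' L_B} := \mathcal{P}^{\text{pre}}(\Phi_{L_A A'} \otimes \Phi_{B' L_B})$, followed by $\mathcal{P}^{\text{post}}$. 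Guided by this structure, I define
\begin{equation}
Q'_{L_A ABL_B} := \mathcal{P}^{\text{post}}_{A_M \hat A \hat B B_M \to AB}\!\left( \langle \Upsilon|_{\hat A'\hat B' : \hat L_A \hat L_B} \left( \omega_{L_A \hat A' A_M B_M \hat B' L_B} \otimes Q_{\hat L_A \hat A \hat B \hat L_B} \right) |\Upsilon\rangle_{\hat A' \hat B' : \hat L_A \hat L_B} \right);
\end{equation}
positivity $Q' \geq 0$ is inherited from $Q \geq 0$, positivity of $\omega$, and complete positivity of $\mathcal{P}^{\text{post}}$.

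Next, I would verify that the PPT constraints $\pm Q^{T_{\hat B \hat L_B}} \geq T_{\hat B \hat L_B}(J^{\mathcal{M}})$ propagate through this construction to $\pm (Q')^{T_{BL_B}} \geq T_{BL_B}(J^{\Theta^{\text{PPT}}(\mathcal{M})})$. The crucial ingredient is that both $\mathcal{P}^{\text{pre}}$ and $\mathcal{P}^{\text{post}}$ are completely PPT-preserving, so that $T_{\hat B' \hat B_M} \circ \mathcal{P}^{\text{pre}} \circ T_{B'}$ and $T_{B} \circ \mathcal{P}^{\text{post}} \circ T_{\hat B \hat B_M}$ are completely positive and hence preserve operator inequalities. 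Combined with the transpose identity \eqref{eq:Ttrick} and the involution $T_B \circ T_B = \id$, the two partial transposes cancel on each side of the Alice--Bob bipartition when commuted through $\omega$ and $\mathcal{P}^{\text{post}}$, yielding exactly the required inequalities on $Q'$ relative to $J^{\Theta^{\text{PPT}}(\mathcal{M})}$.

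Finally, to bound the objective value, I would use that $\mathcal{P}^{\text{post}}$ is trace-preserving on the $\hat A \hat B$ outputs and that the teleportation contraction only involves the reference registers, together with \eqref{eq:12} and the trace-preserving property of $\mathcal{P}^{\text{pre}}$, to deduce
\begin{equation}
\|\operatorname{Tr}_{AB} Q'_{L_A ABL_B}\|_\infty \leq \|\operatorname{Tr}_{\hat A \hat B} Q_{\hat L_A \hat A \hat B \hat L_B}\|_\infty.
\end{equation}
The main obstacle is the preceding step: carefully tracking the Alice--Bob bipartition through the teleportation-style contraction and verifying that the partial transposes composed with PPT-preservation cancel correctly on both Alice's and Bob's sides, including the memory registers $A_M, B_M$ introduced by the superchannel. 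Once this bookkeeping is in place, feasibility and the objective bound give $E_\kappa(\Theta^{\text{PPT}}(\mathcal{M})) \leq E_\kappa(\mathcal{M})$.
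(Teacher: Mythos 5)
Your proposal is correct and follows essentially the approach the paper intends: the paper's own proof is only a two-sentence sketch deferring to the point-to-point case and ``the structure of PPT superchannels,'' and your feasible-solution-transfer argument---pushing an optimal $Q$ for $\mathcal{M}$ through the pre- and post-processors via the post-selected teleportation contraction, using complete PPT-preservation of $\mathcal{P}^{\text{pre}}$ and $\mathcal{P}^{\text{post}}$ to propagate the transpose constraints, and bounding the objective via \eqref{eq:12} and trace preservation---is precisely the transpose bookkeeping the paper itself carries out in the closely analogous proof of Proposition~\ref{prop:cost-tri-ineq}. The only nit is a normalization slip: the Choi operator $J^{\Theta^{\operatorname{PPT}}(\mathcal{M})}$ should be built from the unnormalized $|\Upsilon\rangle\!\langle\Upsilon|$ rather than $\Phi$, so that $\operatorname{Tr}_{\hat{A}'A_MB_M\hat{B}'}\,\omega = I_{L_AL_B}$ and the bound $\Vert\operatorname{Tr}_{AB}Q'\Vert_\infty\leq\Vert\operatorname{Tr}_{\hat{A}\hat{B}}Q\Vert_\infty$ comes out exactly as claimed.
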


\begin{proof}
It is a generalization of the related proof given in \cite{WW18} for point-to-point channels. Follows from the definition of $E_\kappa$ and the structure of PPT superchannels.
\end{proof}

\bigskip


The following proposition constitutes one of our  technical results, and an immediate corollary of it is that $E_{\c}(\mc{N})$ is an upper bound on the amortized $\kappa$-entanglement of a bipartite channel. 

\begin{proposition}\label{prop:cost-tri-ineq}
Let $\rho_{L_A A'B'L_B}$ be a state and let $\mc{N}_{A'B'\to AB}$ be a bipartite channel. Then 
\begin{equation}
E_{\c}(L_A A; B L_B)_{\omega}\leq E_{\c}(L_A A';B' L_B)_{\rho}+ E_{\c}(\mc{N}),
\end{equation}
where $L_A, L_B$ can be of arbitrary size, $\omega_{L_A AB L_B}=\mc{N}_{A'B'\to AB}(\rho_{L_A A'B'L_B})$ and $E_{\c}(\mc{N})$ is stated in Definition~\ref{def:bi-max-cost}. 
\end{proposition}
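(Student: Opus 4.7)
The plan is to adapt the SDP-based argument of \cite{WW18} (given there for point-to-point channels) to the bipartite setting, using the post-selected teleportation identity \eqref{eq:choi-sim} as the main glue. Let $S_{L_AA'B'L_B}$ be an optimal feasible solution for the SDP \eqref{eq:cost-alt} computing $W_\c(L_AA';B'L_B)_\rho$, and let $Q_{\hat L_A AB\hat L_B}$ be an optimal feasible solution for the SDP \eqref{eq:bi-cost-channel-sdp} computing $\Gamma_\c(\mc{N})$, where the Choi reference systems $\hat L_A\simeq A'$ and $\hat L_B\simeq B'$ are relabeled to avoid a clash with the memory systems $L_A,L_B$ of $\rho$. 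The candidate solution for the SDP computing $W_\c(L_AA;BL_B)_\omega$ that I propose is
\begin{equation}
\tilde S_{L_AABL_B}\coloneqq \<\Upsilon|_{\hat L_A\hat L_B:A'B'}\bigl[S_{L_AA'B'L_B}\otimes Q_{\hat L_A AB\hat L_B}\bigr]|\Upsilon\>_{\hat L_A\hat L_B:A'B'}.
\end{equation}
Positivity $\tilde S\geq 0$ is immediate since $S,Q\geq 0$ and the bra-ket with $|\Upsilon\>$ is a completely positive operation.

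To verify the transpose constraint, I would use that $(S\pm\rho)^{\T_{B'L_B}}\geq 0$ and $(Q\pm J^\mc{N})^{\T_{B\hat L_B}}\geq 0$. Since tensor products of positive semidefinite operators are positive semidefinite, summing the four products $(S\pm\rho)^{\T_{B'L_B}}\otimes(Q\pm J^\mc{N})^{\T_{B\hat L_B}}$ in appropriate pairs yields
\begin{equation}
\bigl(S\otimes Q\pm\rho\otimes J^\mc{N}\bigr)^{\T_{B'L_BB\hat L_B}}\geq 0.
\end{equation}
Applying the completely positive bra-ket operation with $|\Upsilon\>_{\hat L_A\hat L_B:A'B'}$ preserves positivity and, by \eqref{eq:choi-sim}, maps $\rho\otimes J^\mc{N}$ to $\omega$ and $S\otimes Q$ to $\tilde S$. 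The delicate step is to show the bra-ket applied to the fully transposed operator equals $(\tilde S\pm\omega)^{\T_{BL_B}}$. For this I would use the identity \eqref{eq:PT-last}, which implies $\<\Upsilon|_{\hat L_B:B'}\T_{B'}(\cdot)|\Upsilon\>=\<\Upsilon|_{\hat L_B:B'}\T_{\hat L_B}(\cdot)|\Upsilon\>$, so that inside the bra-ket the transposes on $B'$ and $\hat L_B$ can be collapsed via $\T_{\hat L_B}\circ\T_{\hat L_B}=\id$, leaving only the transposes on $BL_B$, which commute through the bra-ket. Consequently $(\tilde S\pm\omega)^{\T_{BL_B}}\geq 0$, i.e., $\tilde S$ satisfies the required constraint.

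For the objective value, observe that $\tilde S=\mc{M}_Q(S)$, where $\mc{M}_Q:\mc{B}(\mc{H}_{A'B'})\to\mc{B}(\mc{H}_{AB})$ is the completely positive map with Choi operator $Q$. Its adjoint satisfies $\mc{M}_Q^\dag(I_{AB})=\Tr_{AB}\{Q\}$, so
\begin{equation}
\Tr\{\tilde S\}=\Tr\bigl\{S\cdot\bigl(I_{L_AL_B}\otimes \Tr_{AB}\{Q\}\bigr)\bigr\}\leq \|\Tr_{AB}\{Q\}\|_\infty\cdot\Tr\{S\}=\Gamma_\c(\mc{N})\cdot W_\c(L_AA';B'L_B)_\rho,
\end{equation}
where the inequality uses $S\geq 0$ together with H\"older. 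By feasibility, $W_\c(L_AA;BL_B)_\omega\leq \Tr\{\tilde S\}$, and taking logarithms delivers the claimed amortization inequality. The main obstacle is the transpose bookkeeping in the PPT step: one must track carefully how the partial transposes on the Bob-side subsystems $B',L_B,B,\hat L_B$ recombine under the bra-ket operation to yield precisely the partial transpose on the output systems $BL_B$, and this hinges on the identity \eqref{eq:PT-last} together with its consequence that $\T_{B'}$ and $\T_{\hat L_B}$ are interchangeable inside $\<\Upsilon|_{\hat L_B:B'}\cdot|\Upsilon\>_{\hat L_B:B'}$.
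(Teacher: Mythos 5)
Your proposal is correct and follows essentially the same route as the paper's proof: the same post-selected-teleportation ansatz $\<\Upsilon|_{\hat L_A\hat L_B:A'B'}\,S\otimes Q\,|\Upsilon\>_{\hat L_A\hat L_B:A'B'}$ for the feasible operator, the same transpose bookkeeping to verify the PPT-type constraint, and the same H\"older step for the objective value; indeed, your four-term decomposition via $(S\pm\rho)^{\T_{B'L_B}}\otimes(Q\pm J^{\mc{N}})^{\T_{B\hat L_B}}$ is the clean justification of the single operator inequality the paper invokes in the middle of its chain of manipulations. The only nit is that $\mc{M}_Q^{\dag}(I_{AB})=(\Tr_{AB}\{Q\})^{T}$ rather than $\Tr_{AB}\{Q\}$, which is harmless here because the operator norm of a positive semi-definite operator is invariant under transposition, as the paper itself notes at the corresponding step.
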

\begin{proof}
We adapt the proof steps of \cite[Proposition 1]{BW17} to bipartite setting. By removing logarithms and applying \eqref{eq:cost-w} and \eqref{eq:bi-max-cost-info}, the desired inequality is equivalent to the following one:
\begin{equation}\label{eq:w-omega-ineq}
W_{\c}(L_A A; B L_B)_{\omega}\leq W_{\c}(L_A A';B' L_B)_{\rho}\cdot \Gamma_{\c}(\mc{N}),
\end{equation}
and so we aim to prove this one. Exploiting the identity in \eqref{eq:cost-alt}, we find that 
\begin{equation}
W_{\c}(L_A A'; B' L_B)_{\rho}=\min \Tr\{S_{L_AA'B'L_B}\},
\end{equation}
subject to the constraints 
\begin{align}
S_{L_AA'B'L_B} \geq 0,\\
-S_{L_AA'B'L_B}^{\T_{B'L_B}}\leq \rho_{L_AA'B'L_B}^{\T_{B'L_B}}\leq S_{L_AA'B'L_B}^{\T_{B'L_B}},
\end{align}
while the definition in \eqref{eq:bi-cost-channel-sdp} gives that 
\begin{equation}
\Gamma_{\c}(\mc{N})=\min \norm{\Tr_{AB}\{Q_{L'_A ABL'_B}\}}_\infty,
\end{equation}
subject to the constraints 
\begin{align}
Q_{L'_A ABL'_B} \geq 0,\\
 -Q_{L'_A ABL_B'}^{\T_{BL_B'}}\leq \T_{BL_B'}(J^\mc{N}_{L_A'ABL_B'})\leq Q_{L_A' ABL_B'}^{\T_{BL_B'}},\label{eq:choi-bi-b}
\end{align}
where $L_A'\simeq A'$ and $L_B'\simeq B'$.
The identity in \eqref{eq:cost-alt} implies that the left-hand side of \eqref{eq:w-omega-ineq} is equal to 
\begin{equation}
W_{\c}(L_AA;BL_B)_\omega=\min \Tr\{Y_{L_AABL_B}\},
\end{equation}
subject to the constraints
\begin{align}
Y_{L_AABL_B} \geq 0,\\
-Y_{L_AABL_B}^{\T_{BL_B}}\leq \omega_{L_AABL_B}^{\T_{BL_B}}\leq Y_{L_AABL_B}^{\T_{BL_B}},\label{eq:cost-sdp-channel-ef}.
\end{align}

Once we have these SDP formulations, we can now show that the inequality in \eqref{eq:w-omega-ineq} holds by making appropriate choices for $Y_{L_AABL_B}$.  Let $S_{L_AA'B'L_B}$ and $Q_{L_A'ABL_B'}$ be optimal solutions for $W_{\c}(L_AA';B'L_B)_\rho$ and $\Gamma_{\c}(\mc{N})$, respectively. Let $\ket{\Upsilon}_{L_A'L_B':A'B'}$ be the maximally entangled vector. Choose
\begin{align}
Y_{L_AABL_B}&=\bra{\Upsilon}_{L_A'L_B':A'B'}S_{L_AA'B'L_B}\otimes Q_{L_A'ABL_B'}\ket{\Upsilon}_{L_A'L_B':A'B'}.
\end{align}
The above choice can be thought of as a bipartite generalization of that made in the proof of
\cite[Proposition 12]{WW18} (see also
\cite[Proposition 1]{BW17}), and it can be understood roughly understood as a post-selected teleportation of the optimal operator of $W_{\c}(L_AA';B'L_B)_\rho$ through the optimal operator of $\Gamma_{\c}(\mc{N})$, with the optimal operator of $W_{\c}(L_AA';B'L_B)_{\rho}$ being in correspondence with the Choi operator $J^\mc{N}_{L_A'ABL_B'}$ through \eqref{eq:choi-bi-b}. Then, we have, $Y_{L_AABL_B}\geq 0$, because
\begin{equation}
S_{L_AA'B'L_B}, Q_{L_A'ABL_B'}\geq 0.
\end{equation}
We have 
\begin{align}
Y_{L_AABL_B}
& = \bra{\Upsilon}_{L_A'L_B':A'B'} S_{L_AA'B'L_B}\otimes Q_{L_A'ABL_B'}\ket{\Upsilon}_{L_A'L_B':A'B'}\\
& = \Tr_{L_A'A'B'L_B'}\{\Upsilon_{L_A'L_B':A'B'}S_{L_AA'B'L_B}\otimes Q_{L_A'ABL_B'}\},
\end{align}
which implies
\begin{align}
Y_{L_AABL_B}^{\T_{BL_B}}
& = \T_{BL_B}\!\left[\Tr_{L_A'A'B'L_B'}\left\{\Upsilon_{L_A'L_B':A'B'}S_{L_AA'B'L_B}\otimes Q_{L_A'ABL_B'}\right\}\right]\\
& = \T_{BL_B}\!\left[\Tr_{L_A'A'B'L_B'}\left\{\Upsilon_{L_A'L_B':A'B'}S_{L_AA'B'L_B}\otimes (\T_{L_B'}\circ\T_{L_B'})(Q_{L_A'ABL_B'})\right\}\right]\\
& = \T_{BL_B}\!\left[\Tr_{L_A'A'B'L_B'}\left\{\Upsilon_{L_A'L_B':A'B'}^{T^\dag_{L_B'}}S_{L_AA'B'L_B}\otimes Q_{L_A'ABL_B'}^{\T_{L_B'}}\right\}\right]\\
& = \T_{BL_B}\!\left[\Tr_{L_A'A'B'L_B'}\left\{\Upsilon_{L_A'L_B':A'B'}^{T_{L_B'}}S_{L_AA'B'L_B}\otimes Q_{L_A'ABL_B'}^{\T_{L_B'}}\right\}\right]\\
& = \T_{BL_B}\!\left[\Tr_{L_A'A'B'L_B'}\left\{\Upsilon_{L_A'L_B':A'B'}S_{L_AA'B'L_B}^{T_{B'}}\otimes Q_{L_A'ABL_B'}^{\T_{L_B'}}\right\}\right]\\
& = \left[\Tr_{L_A'A'B'L_B'}\left\{\Upsilon_{L_A'L_B':A'B'}S_{L_AA'B'L_B}^{T_{B'L_B}}\otimes Q_{L_A'ABL_B'}^{\T_{BL_B'}}\right\}\right]\\
& \geq \Tr_{L_A'A'B'L_B'}\left\{\Upsilon_{L_A'L_B':A'B'}\rho^{T_{B'L_B}}_{L_AA'B'L_B}\otimes \T_{BL_B'}\(J^\mc{N}_{L_A'ABL_B'}\)\right\}\\
& = \T_{BL_B}\!\left[\Tr_{L_A'A'B'L_B'}\left\{\Upsilon_{L_A'L_B':A'B'}\rho^{T_{B'}}_{L_AA'B'L_B}\otimes \T_{L_B'}\(J^\mc{N}_{L_A'ABL_B'}\)\right\}\right]\\
& = \T_{BL_B}\!\left[\Tr_{L_A'A'B'L_B'}\left\{\Upsilon^{\T_{L_B'}}_{L_A'L_B':A'B'}\rho_{L_AA'B'L_B}\otimes \T_{L_B'}\(J^\mc{N}_{L_A'ABL_B'}\)\right\}\right]\\
& = \T_{BL_B}\!\left[\Tr_{L_A'A'B'L_B'}\left\{\Upsilon_{L_A'L_B':A'B'}\rho_{L_AA'B'L_B}\otimes \T_{L_B'}^\dag\circ \T_{L_B'}\(J^\mc{N}_{L_A'ABL_B'}\)\right\}\right]\\
&=\T_{BL_B}\!\left[\Tr_{L_A'A'B'L_B'}\left\{\Upsilon_{L_A'L_B':A'B'}\rho_{L_AA'B'L_B}\otimes \(J^\mc{N}_{L_A'ABL_B'}\)\right\}\right]\\
& =\T_{BL_B}\left[\mc{N}_{A'B'\to AB}(\rho_{L_AA'B'L_B})\right] 
\end{align}
In the above, we employed properties of the partial transpose, in particular, the fact that partial transpose is self-adjoint. 

Similarly, we have 
\begin{equation}
-Y_{L_AABL_B}^{\T_{BL_B}}\leq \T_{BL_B}\left[\mc{N}_{A'B'\to AB}(\rho_{L_AA'B'L_B})\right],
\end{equation}
where we use the following constraints: 
\begin{equation}
-S_{L_A A'B'L_B}^{\T_{BL_B}}\leq \rho_{L_AA'B'L_B}^{\T_{B'L_B}},\qquad -Q_{L_A' ABL_B'}^{\T_{BL_B'}}\leq \T_{BL_B'}(J^\mc{N}_{L_A'ABL_B'}).
\end{equation}
Thus, $Y_{L_AABL_B}$ is feasible for $W_{\c}(L_AA;BL_B)_{\omega}$. 
Now, we consider
\begin{align}
\Tr\{Y_{L_AABL_B}\}
& = \Tr\{\bra{\Upsilon}_{L_A'L_B':A'B'}S_{L_AA'B'L_B}\otimes Q_{L_A'ABL_B'}\ket{\Upsilon}_{L_A'L_B':A'B'}\}\\
& = \Tr\{S_{L_AA'B'L_B}Q^{T_{A'B'}}_{A'ABB'}\}\\
& = \Tr\{S_{L_AA'B'L_B}\Tr_{AB}\{Q^{T_{A'B'}}_{A'ABB'}\}\}\\
& \leq \Tr\{S_{L_AA'B'L_B}\}\norm{\Tr_{AB}\{Q_{A'ABB'}^{T_{A'B'}}\}}_\infty\\
& = \Tr\{S_{L_AA'B'L_B}\}\norm{\Tr_{AB}\{Q_{A'ABB'}\}}_\infty\\
&=W_{\c}(L_AA';B'L_B)_\rho\cdot \Gamma_{\c}(\mc{N}).
\end{align}
The inequality is a consequence of H\"{o}lder's inequality \cite{Bha97}. The final equality follows because the spectrum of a positive semi-definite operator is invariant under the action of a full transpose (note, in this case, $\T_{A'B'}$ is the full transpose as it acts on reduced positive semi-definite operators $Q_{A'B'}$).

Therefore, we can infer that our choice of $Y_{L_AABL_B}$ a feasible solution of $W_{\c}(L_AA;BL_B)_{\omega}$ such that \eqref{eq:w-omega-ineq} holds. This concludes our proof.
\end{proof}
\bigskip

An immediate corollary of Proposition~\ref{prop:cost-tri-ineq} is the following:
\begin{corollary}\label{cor:rains-tri-ineq}
The quantity $E_{\c}(\mc{N})$ is an upper bound on the amortized $\kappa$-entanglement of a bipartite channel; i.e., the following inequality holds
\begin{equation}
E_{\c,A}(\mc{N}) \leq E_{\c}(\mc{N}),
\end{equation}
where $E_{\c, A} (\mc{N})$ is the amortized entanglement of a bipartite channel $\mc{N}$, i.e.,
\begin{equation}\label{eq:ent-locc-a}
E_{\c, A} (\mc{N})\coloneqq \sup_{\rho_{L_AA'B'L_B}} \left[E_{\c}(L_AA;BL_B)_{\mc{N}(\rho)}-E_{\c}(L_AA';B'L_B)_{\rho}\right],
\end{equation}
and $L_A,L_B$ are of arbitrary size.
\end{corollary}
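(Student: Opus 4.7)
The plan is to derive the corollary as a direct one-line consequence of Proposition~\ref{prop:cost-tri-ineq}. First, I would fix an arbitrary state $\rho_{L_AA'B'L_B}$ on which the supremum in the definition of $E_{\c,A}(\mc{N})$ is taken, and set $\omega_{L_AABL_B} = \mc{N}_{A'B'\to AB}(\rho_{L_AA'B'L_B})$, matching the notation of Proposition~\ref{prop:cost-tri-ineq}. By that proposition, we have
\begin{equation}
E_{\c}(L_A A; B L_B)_{\omega} \leq E_{\c}(L_A A'; B' L_B)_{\rho} + E_{\c}(\mc{N}).
\end{equation}

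Next, I would simply rearrange this inequality to isolate the difference appearing in the amortized quantity:
\begin{equation}
E_{\c}(L_A A; B L_B)_{\omega} - E_{\c}(L_A A'; B' L_B)_{\rho} \leq E_{\c}(\mc{N}).
\end{equation}
Since the right-hand side is a fixed quantity depending only on $\mc{N}$ and not on $\rho$, I can take the supremum of the left-hand side over all states $\rho_{L_AA'B'L_B}$ (with $L_A, L_B$ of arbitrary finite size), which by the definition in \eqref{eq:ent-locc-a} yields exactly $E_{\c,A}(\mc{N})$. This gives the claimed bound $E_{\c,A}(\mc{N}) \leq E_{\c}(\mc{N})$.

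There is no genuine obstacle here: all of the technical work has been absorbed into Proposition~\ref{prop:cost-tri-ineq}, whose SDP-based post-selected-teleportation construction already established the per-state amortization inequality. The corollary is purely a matter of rearranging and taking a supremum, so the proof can be stated in two or three lines with a pointer back to Proposition~\ref{prop:cost-tri-ineq} and the definition in \eqref{eq:ent-locc-a}.
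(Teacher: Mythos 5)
Your proposal is correct and follows exactly the same route as the paper: fix an arbitrary input state, apply Proposition~\ref{prop:cost-tri-ineq}, rearrange, and take the supremum over states using the definition in \eqref{eq:ent-locc-a}. Nothing is missing.
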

\begin{proof}
The inequality $E_{\c,A}(\mc{N})\leq E_{\c}(\mc{N})$ is an immediate consequence of Proposition~\ref{prop:cost-tri-ineq}. Let $\rho_{L_AA'B'L_B}$ denote an arbitrary input state. Then from Proposition~\ref{prop:cost-tri-ineq} 
\begin{equation} \label{eq:r-lower-ineq}
E_{\c}(L_AA;BL_B)_\omega-E_{\c}(L_AA';B'L_B)_\rho\leq E_{\c}(\mc{N}),
\end{equation}
where $\omega_{L_AABL_B}=\mc{N}_{A'B'\to AB}(\rho_{L_AA'B'L_B})$. As the inequality holds for any state $\rho_{L_AA'B'L_B}$, we have $E_{\c,A}(\mc{N})\leq E_{\c}(\mc{N})$. 
\end{proof}

\bibliographystyle{alpha}
\bibliography{Ref}

\end{document}